\DeclareMathAlphabet{\mathpzc}{OT1}{pzc}{m}{it}
\numberwithin{equation}{section}
\theoremstyle{plain}
\newtheorem{theorem}{Theorem}[section]
\newtheorem{lemma}{Lemma}[section]
\newtheorem{corollary}[lemma]{Corollary}
\theoremstyle{remark}
\newtheorem{remark}{Remark}[section]
\newtheorem*{quest*}{Question}
\newtheorem*{remark*}{Remark}
\theoremstyle{remark}
\theoremstyle{definition}
\newtheorem{definition}{Definition}[section]
\newtheorem*{definition*}{Definition}
\newtheorem*{notation*}{Notation}
\newtheorem*{notations*}{Notations}
\providecommand{\B}{\mathbf}
\providecommand{\C}{\mathcal}
\providecommand{\D}{\mathbb}
\providecommand{\R}{\mathrm}
\newcommand{\ee}{\mathrm{e}}
\newcommand{\eul}{\mathrm{e}}
\newcommand{\ii}{\mathrm{i}}
\def\ball{\mathrm{B}}
\DeclareMathOperator{\dist}{dist}
\DeclareMathOperator*{\essup}{ess\,sup}
\DeclareMathOperator{\expect}{\mathbb{E}}
\DeclareMathOperator{\one}{\mathbf{1}}
\def\tabhigh#1{{\buildrel _{} \over {#1}}}
\def\loc{$m\mathcal{-L}${\textit{oc}}}
\def\nloc{$m\mathcal{-N}\mathcal{L}${\textit{oc}}}
\def\Iloc{$(m,I)\mathcal{-L}${\textit{oc}}}
\def\Inloc{$(m,I)\mathcal{-N}\mathcal{L}${\textit{oc}}}
\def\mathloc{m\mathcal{-L}{\textit{oc}}}
\def\mathnloc{m\mathcal{-N}\mathcal{L}{\textit{oc}}}
\def\mathntun{m\rm{-NT}}
\def\mathtun{m\rm{-T}}
\def\lam{{\lambda}}
\def\Lam{{\Lambda}}
\def\BN{\B{N}}
\def\BD{\B{D}}
\def\BP{\B{P}}
\def\Const{{\rm{Const}}}
\def\DC{\D{C}}
\def\DP{\D{P}}
\def\DR{\D{R}}
\def\DZ{\D{Z}}
\def\DN{\D{N}}
\def\CE{\C{E}}
\def\CF{\C{F}}
\def\CL{\C{L}}
\def\CM{\C{M}}
\def\CN{\C{N}}
\def\CR{\C{R}}
\def\CS{\C{S}}
\def\CT{\C{T}}
\def\csB{\mathscr{B}}
\def\rd{{\R{d}}}
\def\rtd{{\widetilde{\R{d}}}}
\def\hx{\hat{x}}
\def\mubxy{{\mu^{x,y}_{\ball,\om}}}
\def\mukxy{{\mu^{x,y}_{\ball_{L_k}(0),\om}}}
\def\muxy{{\mu^{x,y}_{\om}}}
\def\be{\begin{equation}}
\def\ee{\end{equation}}
\def\ba{\begin{array}{l}}
\def\ea{\end{array}}
\def\bal{\begin{aligned}}
\def\eal{\end{aligned}}
\def\Cmix{\B{Mix(\delta)}}
\def\fF{\mathfrak{F}}
\def\om{{\omega}}
\def\Om{{\Omega}}
\def\pr#1{\D{P}\left\{\,#1\,\right\}}
\def\esm#1{\D{E}\left[\, #1\, \right]}
\def\pt{\partial}
\def\half{\frac{1}{2}}
\def\quart{\frac{1}{4}}
\def\truc#1#2#3{\smash{\mathop{\,\, #1 \,\, }\limits^{#2}_{#3}}}
\def\tto#1{\smash{\mathop{\,\,\,\, \longrightarrow \,\,\,\, }\limits_{#1}}}
\def\myset#1{{\left\{\,#1\,\right\}}}
\def\ketbra#1#2{{ | {#1} \rangle \langle {#2}| }}
\def\bra#1{{ \langle {#1}| }}
\def\ket#1{{ | {#1} \rangle }}
\def\mymax#1{{ \truc{\max} {} {#1}}}
\def\diy{\displaystyle}
\def\Wone{{\bf (W1)}}
\def\Wtwo{{\bf (W2)}}
\def\Wthree{{\bf (W3)}}
\def\mynobreak{\penalty 100 \hfilneg \hbox}
\def\tabhigh#1{{\buildrel _{} \over {#1}}}
\begin{document}

\title[Direct Scaling Analysis. I.  Single-particle models]
{Direct Scaling Analysis \\of  localization in disordered systems. \\
  I. Single-particle models }

\author[V. Chulaevsky]{Victor Chulaevsky}


\address{D\'{e}partement de Math\'{e}matiques\\
Universit\'{e} de Reims, Moulin de la Housse, B.P. 1039\\
51687 Reims Cedex 2, France\\
E-mail: victor.tchoulaevski@univ-reims.fr}

\date{}
\begin{abstract}
We propose a simplified version of the Multi-Scale Analysis  of tight-binding Anderson models with strongly mixing random potentials which leads directly to uniform exponential bounds on decay of eigenfunctions in  arbitrarily large finite subsets of a lattice. Naturally, these bounds imply also dynamical localization and exponential decay of eigenfunctions on the entire lattice.
\end{abstract}

\maketitle

\section{Introduction. } \label{sec:intro}

In this paper, we study spectral properties of random lattice Schr\"{o}dinger operators (LSO)  in the framework of the Multi-Scale Analysis  (MSA) developed in pioneering works \cites{FS83,FMSS85,DK89,DK91}. Traditionally, the MSA is applied first to the resolvents
\mynobreak{$G(E) = (H-E)^{-1}$} in finite subsets of growing size, usually balls $\ball_{L_k}(u)$
of radius $L_k = (L_0)^{\alpha^k}$, $\alpha>1$, $k \ge 0$. The bounds on the kernels $G_{\ball_{L_k}(u)}(x,y; E)$, obtained by scale induction,
are used then to derive an exponential decay of the eigenfunctions in $\DZ^d$. They can be re-used again to obtain the dynamical localization bounds.

In contrast with the MSA, a more recent method developed by Aizenman and Molchanov \cite{AM93}
and called the Fractional-Moment Method (FMM), when applicable, leads directly to the proof of dynamical localization; the latter, as it is well-known, implies spectral localization.
Despite striking differences between the MSA and the FMM, both approaches to the localization  share a certain similarity: the analysis of decay properties  of the resolvents in finite volumes \emph{precedes} the study of the eigenfunction correlators. In the method described below, the analysis of eigenfunctions is the dominant component.
As a result, the simplified Scaling Analysis allows to prove  localization bounds in finite volumes in the course of the scale induction.
We would like to emphasize that it is close in spirit to an elementary method used earlier by Spencer \cite{Spe88} to prove exponential decay of Green functions at a fixed energy.
The idea of a direct analysis of eigenfunctions is  inspired by a work of Sinai \cite{Si87} where it was implemented in a different context (quasi-periodic 1D lattice Schr\"{o}dinger operators) and in a different way (with the help of a KAM-type scale induction).

In a forthcoming work \cite{C11} our method is adapted to the multi-particle setting.

The structure of this manuscript is as follows:

\begin{enumerate}
  \item Basic notions and notations are introduced in Section \ref{sec:basic.notions}.
  \item In Section \ref{ssec:radial.descent}, we give a streamlined formulation of the main analytic tool of the MSA, allowing to establish exponential decay of eigenfunctions and Green functions in finite balls in  absence of multiple "resonances". A more traditional version of this technique, going back to \cite{FMSS85} and \cite{DK89}, requires a number of additional, rather tedious (albeit elementary) geometrical arguments. (A reader familiar with \cite{DK89} may want to skip Section \ref{ssec:radial.descent}). The two central notions of the new approach, allowing to simplify the MSA, are introduced in Section \ref{ssec:loc.tun} (cf. Definitions \ref{def:loc} and  \ref{def:tun.IID}).
  \item Section \ref{sec:SimpleScaleInd} describes the simplified scale induction in a particular case of an IID random potential of large amplitude ("strong disorder"). Uniform bounds on eigenfunction correlators, obtained in the course of the scale induction, imply dynamical localization at any finite scale, with decay rate faster than polynomial, while a more traditional approach gives rise to a power-law decay.\footnote{A faster (sub-exponential) rate can be achieved by a more sophisticated "bootstrap" procedure; cf. e.g., \cite{GK01} }
  \item In Section \ref{sec:adapt.low.E}, we consider weakly disordered media where localization can be established only in a specific energy band (at "extreme energies").
  \item In Section \ref{sec:adapt.mixing}, we adapt our method to lattice models with strongly mixing  random potentials.  Recall that the first results in this direction (in the multi-dimensional context) were obtained by von Dreifus and Klein \cite{DK91}.

\end{enumerate}

The proofs of all statements not given in the main text can be found in Appendix.

\section{Basic definitions and assumptions. } \label{sec:basic.notions}

\subsection{Assumptions}

We  consider the lattice Anderson Hamiltonians of the  form
\be\label{eq:H}
H(\omega) =  -\Delta + gV(x;\omega),
\ee
where $V:\DZ^d\times\Omega\to\DR$ is a  random field relative to some probability space
$(\Omega, \CF, \DP)$, and $\Delta$ is the nearest-neighbor lattice Laplacian.

For the sake of clarity, our method is presented first in a simpler situation where the random field $V$ is IID, with common marginal probability distribution function (PDF)
$F_{V}(t) = \pr{ V(0;\omega) \leq t}$. In this case, we assume:
\begin{enumerate}[\bf(W1)]
  \item The marginal PDF $F_V$ is uniformly H\"{o}lder-continuous: for some $b>0$,
\be\label{eq:V}
\forall\, t\in\DR, \, \forall\,\epsilon\in[0,1] \qquad  F_V(t+\epsilon) - F_V(t) \leq \Const \, \epsilon^b.
\ee
\end{enumerate}

H\"{o}lder continuity can be relaxed to log-H\"{o}lder continuity:
$F_V(t+\epsilon) - F_V(t) \leq \Const \ln^{-A} \epsilon^{-1}$ with sufficiently large $A>0$. This is sufficient for the proof of a polynomial decay of eigenfunction (EF) correlators and for dynamical localization at a polynomial rate. In order to prove decay of EF correlators faster than polynomial, one needs a regularity condition for $F_V$ slightly stronger that log-H\"{o}lder continuity; this is why we make the assumption \Wone.

In  Section \ref{sec:adapt.mixing}, we will consider a more general case of a correlated, but strongly mixing random potential.
Let $F_{V,x}(t) = \pr{ V(x;\omega) \leq t}$, $x\in\DZ^d$, be the marginal probability distribution functions (PDF)  of the random field $V$, and
$F_{V,x}(t\,|\, \fF_{\ne x}) := \pr{ V(x;\om) \le t \,|\, \fF_{\ne x}}$ the  conditional distribution functions (CDF) of the random field $V$ given the sigma-algebra
$\fF_{\ne x}$   generated by random variables $\{V(y;\om), y\ne x\}$. Our assumptions on correlated potentials are summarized as follows:

\begin{enumerate}[\bf(W2)]
  \item The marginal CDFs are uniformly H\"{o}lder-continuous: for some $b>0$,
\be\label{eq:cond.continuity.V}
\essup \, \sup_{x\in\DZ^d}\, \sup_{a\in \DR}
\left[ F_{V,x}(a+\epsilon\,|\, \fF_{\ne x}) - F_{V,x}(a\,|\, \fF_{\ne x}) \right] \le \Const\,\epsilon^b.
\ee
\end{enumerate}
\begin{enumerate}[\bf(W3)]
  \item (Rosenblatt strong mixing condition.)
For any $L\ge 1$ and pair of balls $\ball_{L}(u)$, $\ball_{L}(v)$ with
$d(u,v)\ge L$,  for any events $\CE_{u,L}\in\fF_{\ball_{L}(u)}$, $\CE_{v,L}\in\fF_{\ball_{L}(v)}$
\be\label{eq:Cmix}
 \left| \pr{  \CE_{u,L} \cap \CE_{v,L} } - \pr{  \CE_{u,L} }  \pr{\CE_{v,L} } \right|
 \le C e^{- \ln^2 L}.
\ee

\end{enumerate}
This mixing condition can be relaxed to a power-law decay if only a polynomial decay of EF correlators is to be proven. (The polynomial decay of EF correlators is \emph{not} the decay of eigenfunctions which we always prove to be exponential.)

\subsection{Balls and boundary conditions}
\label{ssec:balls}

Unless otherwise specified, below we make use of the max-norm $\|x\| = \max_{1\leq j \leq d} |x_j|$ and the distance $\rd(\cdot,\cdot)$ induced by it. To describe the proximity in $\DZ^d$, we need the graph distance $\rtd(x,y)$ defined as the length of the shortest path from $x$ to $y$ formed by the lattice bonds.

Consider a lattice ball
$
\ball_\ell(u) = \myset{x:\, \| x-u\| \leq \ell}.
$
Note that if $\|\cdot\|$ is the max-norm, then the ball of radius $\ell$ is actually a (lattice) cube of side length $2\ell$ with sides parallel to coordinate hyperplanes. Introduce the following boundaries relative to $\ball_\ell$:
$$
\bal
\pt^- \ball_\ell(u) &= \myset{x\in \DZ^d:\, d(x,u) = \ell} \\
\pt^+ \ball_\ell(u) &= \myset{x\in \DZ^d\setminus\ball_\ell(u):\, \rtd(x, \ball_\ell(u)) =1} \\
\pt \ball_\ell(u)   &= \myset{ (x,y)\in\pt^- \ball_\ell(u)\times\pt^+ \ball_\ell(u):\; \rtd(x,y)=1}.
\eal
$$
Next, consider a pair of embedded balls $\ball_\ell(u)\subset\ball_L(w)$, $\ell<L$,  and the complementary set $\ball^c_\ell(u) := \ball_L(w)\setminus \ball_\ell(u)$. Introduce LSOs
$H_{\ball_L(w)}$, $H_{\ball_\ell(u)}$, $H_{\ball^c_\ell(u)}$ with Dirichlet boundary conditions
and the resolvents  $G_{\ball_L(w)}(E)$, $G_{\ball_\ell(u)}(E)$.
We will denote by $G(x,y;E)$ the matrix elements of resolvents (Green functions) in the standard basis of functions $\delta_x:y\mapsto\delta_{xy}$.
It follows from the second resolvent equation that, for any $x\in\ball_\ell(u)$, $y\in\ball^c_\ell(u)$ and any
$E\not\in \Sigma(H_{\ball_L(w)})\cup \Sigma(H_{\ball_\ell(u)})$
\be\label{eq:GRI.equal}
\bal
\left| G_{\ball_L}(x,y;E) \right|
& \le C_\ell   \max_{v \in  \pt^{{}^-}\ball_\ell(x)} |G_{\ball_\ell(x)}(x,v;E) |
\; \max_{v'\in  \pt^-\ball_\ell(v)} \, \left|G_{\ball_L}(v', y; E) \right|
\\
& \le \left(\, C_\ell   \max_{v: d(x,v)=\ell} |G_{\ball_\ell(x)}(x,v;E) | \,\right)
\; \max_{v: d(u,v)\le \ell+1} \, \left|G_{\ball_L}(v, y; E) \right|
\eal
\ee
(the so-called Geometric Resolvent Inequality)
where
$
C_\ell = |\pt  \ball_\ell(u)|  \le C(d) \ell^{d-1}.
$
The latter bound, less precise than the former, is sufficient for the  MSA.
Similarly, for the eigenfunctions $\psi$ of operator $H_{\ball_L(w)}$ with eigenvalue
$E \not\in \Sigma(H_{\ball_\ell(u)})$ we have
\be\label{eq:GRI.EF}
| \psi(x) |
\le \left(\, C_\ell \,  \max_{v: d(x,v)=\ell} |G_{\ball_\ell(x)}(x,v;E) | \,\right)
\; \max_{y: d(u,y)\le \ell+1} \, | \psi(y) |,
\quad x\in\ball_\ell(u).
\ee
The same inequalities remain valid in the situation where $\ball_\ell(u)$ is not entirely a subset of $\ball_L(w)$. In such a case, however, one has to replace the ball $\ball_\ell(u)$  and its boundaries by their intersections with the set $\ball_L(w)$; see the details in Appendix.


\section{Decay of Green functions and eigenfunctions }\label{sec:MSA.IID}

\subsection{"Radial descent" bounds for Green functions and eigenfunctions}
\label{ssec:radial.descent}

Now we fix the parameter $\beta\in(0,1)$; for our purposes, it is convenient  to take
$\beta > 1/4$, so that one can set the exponent (figuring in Lemma \ref{lem:WS1.IID}) $\beta'=1/4$.
\begin{definition}\label{def:CNR.IID}
Given a sample $V(\cdot;\om)$, a ball $\ball_L(u)$ is called
\begin{enumerate}[\;\;$\bullet$]
  \item $E$-non-resonant ($E$-NR) if $\|G_\ball(E;\om)\|\leq e^{+L^{\beta} }$,
  and $E$-resonant ($E$-R), otherwise;
  \item completely $E$-non-resonant ($E$-CNR) if it does not contain any $E$-R ball $\ball_\ell(u)$ with $\ell\ge L^{1/\alpha}$, and $E$-partially resonant ($E$-PR), otherwise.
\end{enumerate}
\end{definition}

\begin{lemma}[Wegner-type bound]\label{lem:WS1.IID} Fix a finite interval $I\subset\DR$. There exist $L^*>0$ and $\beta' \in(0, \beta)$ such that
\begin{enumerate}[\rm (A)]
 \item for any $E\in I$, any $L\geq L^*$ and any ball $\ball_{L}(x)$
\begin{equation}\label{eq:W1.CNR.IID}
\pr{ \text{ $\ball_{L}(x)$ is not $E$-\rm{CNR} } }
\le e^{-L^{\beta'}};
\end{equation}
  \item for all $L\geq L^*$ and any pair of disjoint balls $\ball_{L}(x)$, $\ball_{L}(y)$
\begin{equation}\label{eq:W2.IID}
\pr{ \exists\, E\in I\, \text{ $\ball_{L}(x)$ and $\ball_{L}(y)$  are $E$-\rm{PR} } }
\le e^{-L^{\beta'}}.
\end{equation}
\end{enumerate}
\end{lemma}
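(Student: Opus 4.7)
The plan is to derive both estimates from the classical single-volume Wegner estimate, which under assumption \Wone\ reads, for any ball $\ball_\ell(u)$ and any fixed $E$,
\begin{equation*}
\pr{\dist(E, \Sigma(H_{\ball_\ell(u)})) \le \epsilon} \le \Const\cdot \ell^d\, \epsilon^b;
\end{equation*}
this is a standard consequence of rank-one perturbation theory together with the H\"older regularity in \Wone. Since $\|G_{\ball_\ell(u)}(E)\| = \dist(E,\Sigma(H_{\ball_\ell(u)}))^{-1}$, specializing to $\epsilon = e^{-\ell^\beta}$ yields $\pr{\ball_\ell(u)\text{ is }E\text{-R}} \le \Const\cdot \ell^d e^{-b\ell^\beta}$.

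For part (A), I apply a union bound over all admissible sub-balls, i.e.\ pairs $(u,\ell)$ with $\ball_\ell(u)\subseteq\ball_L(x)$ and $L^{1/\alpha}\le\ell\le L$. There are at most $\Const\cdot L^{d+1}$ such pairs, and each resonance probability is bounded (in the worst case $\ell = L^{1/\alpha}$) by $\Const\cdot L^d e^{-bL^{\beta/\alpha}}$. Summing yields $\Const\cdot L^{2d+1} e^{-bL^{\beta/\alpha}} \le e^{-L^{\beta'}}$ for any $\beta' < \beta/\alpha$ and all $L \ge L^*$ large enough.

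For part (B), the structural input is the disjointness $\ball_L(x)\cap\ball_L(y)=\emptyset$, so in the IID setting the potentials on any sub-balls $\ball_\ell(u)\subseteq\ball_L(x)$ and $\ball_{\ell'}(v)\subseteq\ball_L(y)$ are independent, and so are the operators $H_{\ball_\ell(u)}$, $H_{\ball_{\ell'}(v)}$. Existence of a common $E\in I$ at which both sub-balls are $E$-R implies, via the triangle inequality,
\begin{equation*}
\exists\,\lambda\in\Sigma(H_{\ball_\ell(u)}),\ \mu\in\Sigma(H_{\ball_{\ell'}(v)}):\quad |\lambda-\mu|\le 2e^{-\min(\ell,\ell')^\beta}.
\end{equation*}
I would condition on the potential inside $\ball_\ell(u)$, freezing its $\le \ell^d$ eigenvalues $\lambda_j$, and apply the single-volume Wegner estimate to $H_{\ball_{\ell'}(v)}$ at each fixed energy $\lambda_j$; summing over $j$ gives $\Const\cdot\ell^d(\ell')^d e^{-b\min(\ell,\ell')^\beta}$ for each quadruple $(\ell,u,\ell',v)$. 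A further union bound over admissible quadruples costs only a polynomial factor in $L$, producing an upper bound $e^{-L^{\beta'}}$ in the same range $\beta' < \beta/\alpha$ as in (A).

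The main obstacle is purely the bookkeeping of parametric constraints: to achieve the target $\beta' = 1/4$ one needs $\beta/\alpha > 1/4$, which given $\beta > 1/4$ forces $\alpha$ close enough to $1$ (the standard MSA setup). A secondary technical point is that the independence of disjoint-ball potentials is the only place where the IID structure enters; in Section \ref{sec:adapt.mixing} this step will be replaced using the Rosenblatt mixing inequality \eqref{eq:Cmix}, whose correction $e^{-\ln^2 L}$ is easily absorbed into $e^{-L^{\beta'}}$.
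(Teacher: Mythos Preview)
Your proposal is correct and follows essentially the same route as the paper: the basic single-volume Wegner bound under \Wone, a union bound over sub-balls for (A), and for (B) the triangle-inequality reduction to an eigenvalue near-collision followed by conditioning on one ball's potential and applying Wegner to the other. Your bookkeeping is in fact slightly more careful than the paper's appendix proof (you correctly track the worst-case exponent $\beta/\alpha$ coming from sub-balls of radius $L^{1/\alpha}$, whereas the paper writes $e^{-L^{\beta}}$); note only that with the paper's choices $\alpha=4/3$ and $\beta=1/2$ one already has $\beta/\alpha=3/8>1/4$, so no additional constraint on $\alpha$ is needed to reach $\beta'=1/4$.
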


In the case where the random potential is bounded one can take the interval $I$ containing the entire spectrum of the operator $H(\om)$.

\begin{definition}\label{def:SubH.IID}
Let be given two integers $L\ge \ell\ge 0$ and a  number $0 < q <1$.
Consider a finite connected subgraph $\ball\subset\DZ^d$. A function
$f:\, \ball\to\DR_+$ will be called $(\ell,q)$-subharmonic in
a ball $\ball_L(u)\subsetneq\ball$ if for any $\ball_\ell(x)\subset\ball_L(u)$ one has
\be\label{eq:def.l.q.subh.IID}
f(u) \leq q \;\;\mymax{y:\, \|y-u\|\le\ell+1} f(y).
\ee
\end{definition}

The relevance of this notion for the analysis of localization properties of eigenfunctions is explained by Lemma \ref{lem:cond.SubH.IID} below. Analytic (deterministic) bounds given by Lemma \ref{lem:SubH.IID} and Lemma \ref{lem:BiSubH.IID} can be considered as a convenient alternative to the well-known technique going back to \cite{FMSS85}, \cite{DK89}, also based on the GRI.

\begin{lemma}\label{lem:SubH.IID}
Let $\ball$ be a finite connected subgraph of the lattice $\DZ^d$ and
$f: \ball\to\DR_+$ is an $(\ell,q)$-subharmonic function which is
$(\ell,q)$-suharmonic in $\ball_L(u)\subsetneq\ball$, with some $L\ge \ell\ge 0$
and $q\in(0,1)$. Then
\be\label{eq:lem.SubH.IID}
 f(u) \leq q^{ \left\lfloor \frac{L+1}{\ell+1} \right\rfloor} \CM(f, \ball)
 \leq q^{\frac{L-\ell}{\ell+1}} \CM(f, \ball),
 \quad\CM(f,  \ball) := \max_{x\in  \ball} |f(x)|.
\ee
\end{lemma}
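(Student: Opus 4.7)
The approach is to iterate the defining $(\ell,q)$-subharmonicity inequality along a chain of points emerging from $u$, and to count how many such iterations can be carried out before the chain reaches the boundary region of $\ball_L(u)$, outside of which the inequality is not guaranteed.

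Concretely, I set $y_0 := u$ and inductively construct $y_0, y_1, \ldots, y_K$ as follows. Having chosen $y_j$ with $\ball_\ell(y_j) \subset \ball_L(u)$, I invoke the $(\ell,q)$-subharmonicity at $y_j$ to obtain
\[
f(y_j) \;\le\; q\,\max_{\|y - y_j\|\le \ell+1} f(y),
\]
and let $y_{j+1}$ be a point realising this maximum. Then $\|y_{j+1} - y_j\| \le \ell+1$, and induction on $j$ plus the triangle inequality give $\|y_j - u\| \le j(\ell+1)$. The step at $y_j$ is admissible as long as $\ball_\ell(y_j) \subset \ball_L(u)$, which in the worst case $\|y_j - u\| = j(\ell+1)$ holds precisely when $j(\ell+1) \le L - \ell$. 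Hence the construction goes through for $j = 0, 1, \ldots, K-1$ with $K := \lfloor (L+1)/(\ell+1)\rfloor$. Chaining the $K$ resulting inequalities yields $f(u) \le q^K f(y_K)$; since each $y_{j+1}$ lies in $\ball_{L+1}(u) \subset \ball$ (via $\ball_L(u) \subsetneq \ball$ and the fact that $f$ is defined on all of $\ball$), we get $f(y_K) \le \CM(f, \ball)$, which is the first asserted bound. The second, weaker bound follows from $q<1$ together with the elementary estimate $K \ge (L-\ell)/(\ell+1)$.

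The statement is essentially deterministic, and the argument is a clean worst-case propagation estimate, so I do not expect any substantive obstacle. The only delicate point is the bookkeeping of admissibility: at each step $j$ one must simultaneously verify that $\ball_\ell(y_j) \subset \ball_L(u)$ (so the subharmonicity inequality applies at $y_j$) and that $y_{j+1} \in \ball$ (so the terminal bound $f(y_K) \le \CM(f, \ball)$ is meaningful). Both follow from the triangle inequality combined with the strict inclusion $\ball_L(u) \subsetneq \ball$ built into the hypotheses, which is why the count $K = \lfloor (L+1)/(\ell+1)\rfloor$ is the natural one appearing in the stated bound.
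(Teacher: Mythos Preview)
Your proposal is correct and follows essentially the same idea as the paper: iterate the defining inequality and count steps. The paper phrases the iteration slightly differently---it bounds $\CM(f,\ball_{L-n(\ell+1)}(u))$ by $q\,\CM(f,\ball_{L-(n-1)(\ell+1)}(u))$ and then descends from the outside in---whereas you follow an explicit chain $y_0,y_1,\dots,y_K$ outward from $u$; these are dual formulations of the same recursion. One small remark: your justification that $y_K\in\ball$ via ``$\ball_{L+1}(u)\subset\ball$ from $\ball_L(u)\subsetneq\ball$'' is not literally valid as stated (strict inclusion of $\ball_L(u)$ does not force $\ball_{L+1}(u)\subset\ball$ in general), but this is a harmless technicality already implicit in the definition of $(\ell,q)$-subharmonicity, since the max over $\|y-x\|\le\ell+1$ presupposes those $y$ lie in the domain of $f$.
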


\begin{lemma}\label{lem:BiSubH.IID}
Let $ \ball\subset\DZ^d$ a finite connected subgraph,  $\ball_{r'}(u')$,
 $\ball_{r''}(u'') \subset\ball$ two disjoint balls with $|u'-u''|\ge r' + r'' +2$,
 and $f:\ball\times\ball\to \DR_+$, $(x',x'')\mapsto f(x',x'')$ a function
which is separately $(\ell,q)$-subharmonic in $x'\in\ball_{r'}(u')$
and in $x''\in\ball_{r''}(u'')$. Then
\be\label{eq:RDL.IID}
 f(u', u'') \leq q^{\left\lfloor \frac{r'+1}{\ell+1} \right\rfloor
 + \left\lfloor \frac{r''+1}{\ell+1} \right\rfloor } \CM(f, \ball)
 \leq q^{\frac{r' + r'' - 2\ell}{\ell+1} } \CM(f, \ball).
\ee
\end{lemma}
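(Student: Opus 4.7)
The plan is to reduce the bi-subharmonic statement to two successive applications of the single-variable Lemma \ref{lem:SubH.IID}, one in each coordinate. Set $k' = \lfloor (r'+1)/(\ell+1)\rfloor$ and $k'' = \lfloor (r''+1)/(\ell+1)\rfloor$, so the target bound reads $f(u',u'') \leq q^{k'+k''}\CM(f,\ball\times\ball)$.

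First I would freeze the second variable at $u''$ and consider the one-variable function $g(x') := f(x', u'')$ on $\ball$. The hypothesis of separate subharmonicity says precisely that $g$ is $(\ell,q)$-subharmonic in $\ball_{r'}(u')\subsetneq \ball$, so Lemma \ref{lem:SubH.IID} gives
$$
f(u',u'') \;=\; g(u') \;\le\; q^{k'}\,\max_{x'\in\ball} g(x') \;=\; q^{k'}\,\max_{x'\in\ball} f(x',u'').
$$
Let $x'_*\in\ball$ attain this maximum, and now freeze the first variable at $x'_*$: the function $h(x''):=f(x'_*,x'')$ is, by the second part of the hypothesis, $(\ell,q)$-subharmonic in $\ball_{r''}(u'')\subsetneq\ball$. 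A second application of Lemma \ref{lem:SubH.IID} yields
$$
f(x'_*,u'') \;=\; h(u'') \;\le\; q^{k''}\,\max_{x''\in\ball} f(x'_*,x'') \;\le\; q^{k''}\,\CM(f,\ball\times\ball).
$$
Chaining the two inequalities delivers the first bound in \eqref{eq:RDL.IID}. The second (less precise) inequality follows from the elementary estimate $\lfloor t\rfloor > t-1$: indeed,
$$
k' + k'' \;>\; \frac{r'+1}{\ell+1} + \frac{r''+1}{\ell+1} - 2 \;=\; \frac{r'+r''-2\ell}{\ell+1},
$$
and since $0<q<1$ this monotonicity in the exponent gives $q^{k'+k''} \le q^{(r'+r''-2\ell)/(\ell+1)}$.

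I do not anticipate a real obstacle: all the combinatorial work is packaged into Lemma \ref{lem:SubH.IID}, and separate subharmonicity is tailor-made for a variable-by-variable iteration. The disjointness hypothesis $|u'-u''|\ge r'+r''+2$ plays no role in the estimates themselves; it serves only to make the two subharmonicity assumptions compatible, since it guarantees $\ball_{r'}(u')\cap\ball_{r''}(u'')=\emptyset$ so that the condition imposed on $f(\,\cdot\,,x'')$ for $x''\in\ball_{r''}(u'')$ and the condition imposed on $f(x',\,\cdot\,)$ for $x'\in\ball_{r'}(u')$ are never asserted at a common point. The only minor point to verify carefully is the strict inclusion $\ball_{r''}(u'')\subsetneq\ball$ (and analogously for the other ball) needed to invoke Lemma \ref{lem:SubH.IID}, which is inherited from the standing assumption $\ball_{r'}(u'),\ball_{r''}(u'')\subset\ball$ together with their disjointness.
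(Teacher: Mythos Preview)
Your proposal is correct and follows essentially the same approach as the paper: two successive applications of Lemma~\ref{lem:SubH.IID}, one in each variable, followed by the elementary floor estimate for the weaker second inequality. The only cosmetic difference is the order in which the variables are frozen---the paper first bounds $f(u',y)$ uniformly over $y\in\ball_{r''}(u'')$ and then applies subharmonicity in the second variable at $x'=u'$, whereas you freeze $x''=u''$ first and then pass to the maximizer $x'_*$---but the mechanism is identical.
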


\begin{definition}\label{def:S.IID}
Given a sample $V(\cdot;\om)$, a ball $ \ball_L(u)$ is called $(E,m)$-non-singular ($(E,m)$-NS) if
for any pair of points $x,y\in \ball_L(u)$ with $|x-y|\ge L^{\frac{1+\rho}{\alpha}} = L^{7/8}$,
\be\label{eq:def.NS.IID}
 |\pt  \ball_L(u)|\,  \; |G(x,y;E;\om)|
 \leq e^{-\gamma(m,L) |x - y| + 2L^\beta},
\ee
where
\be\label{eq:def.gamma.IID}
 \gamma(m,L)  :=  m\left(1 + L^{-1/8}\right).
\ee
Otherwise, it is called $(E,m)$-singular ($(E,m)$-S).
\end{definition}

Observe that, with $m\ge 1$ and $L$ large enough,
\be\label{eq:gamma.7.8}
\bal
\gamma(m,L) |x - y| - 2L^\beta  
&= |x - y| m \left( 1 + L^{-\tau} - \frac{2 L^{\beta}}{m|x-y|} \right)
\\
& \ge |x - y| m \left( 1 + L^{-\tau} - 2 L^{\beta - 1 -\rho } \right)
\\
&\textstyle \ge  m \left( 1 + \half L^{-\tau} \right)|x - y|,
\eal
\ee
provided that $1+\rho - \beta > \tau$, which is the case with $\rho = 1/6$, $\beta=1/2$
and $\tau=1/8$. Therefore, the $(E,m)$-NS property \eqref{eq:def.NS.IID} implies that
\be\label{eq:def.NS.IID.implies}
 |\pt  \ball_L(u)|\,  \mymax{y\in\pt  \ball_L(u)} \; |G(u,y;E;\om)|
 \leq e^{-m \big(1 + \half L^{-\tau}\big) |x - y|}
\ee
for $x,y$ with $|x-y|\ge L^{\frac{1+\rho}{\alpha}} = L^{7/8}$.
\vskip2mm

In a more traditional MSA approach, the value of the exponent $m$ (often referred to as the "mass") depends upon the scale $L_k \sim (L_0)^{\alpha^k}$, $\alpha>1$, and the transition from $L_k$ to $L_{k+1}$  results in a tiny decay of the mass: $m_{k+1} = m_k(1 - o(1))$. Using the exponent $\gamma(m,L_k)$ allows to avoid rescaling the parameter $m$ itself and makes explicit the fact that the effective decay exponent is bounded from below by $m>0$, since $\gamma(m,L)> m$.

\begin{lemma}\label{lem:cond.SubH.IID}
Consider a ball $\ball_L(u)$ and an operator $H=H_{\ball_L(u)}$ with fixed (non-random) potential $V$. Let $\{\psi_j, \, j=1, \ldots,  |\ball_L(u)|\}$ be the normalized eigenfunctions of $H$. Pick a pair of points $x',y' \in \ball_L(u)$ with $d(x',y') > 2(\ell + 1)$ and an integer
$R \in[\ell+2, \, d(x',y') - (\ell+2)]$. Suppose that any ball $\ball_{\ell}(v)$ with $v\in\ball_R(x')$ is $(E,m)$-NS, and set
\be\label{eq:lem:ConSubH.IID}
q =  q(\ell, m) = e^{-\gamma(m,\ell)\ell}
\ee
Then:
\begin{enumerate}[{\rm(A)}]
  \item the kernel $\Pi_{\psi_j}(x,y)$ of the spectral projection
  $\Pi_{\psi_j} = \ketbra{\psi_j}{\psi_j}$ is $(\ell,q)$-subharmonic in   $x \in \ball_R(x')$, with global maximum $\le 1$;
  \item if $\ball_L(u)$ is also $E$-NR, then the Green functions $G(x,y;E)=(H-E)^{-1}(x,y)$ are $(\ell,q)$-subharmonic in $x \in \ball_R(x')$, with global maximum $\le e^{L^\beta}$.
\end{enumerate}
\end{lemma}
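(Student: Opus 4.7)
My plan is to derive both assertions directly from the two versions of the Geometric Resolvent Inequality (GRI) recalled in the preceding subsection. The $(E,m)$-NS hypothesis on every sub-ball $\ball_\ell(v) \subset \ball_R(x')$ will supply the per-step decay factor $q$, while the additional $E$-NR hypothesis in part (B) supplies the a priori global bound on the Green function. Part (A) is intended to be applied to an eigenfunction $\psi_j$ whose eigenvalue equals the energy $E$ at which NS is assumed; with that reading the argument runs in parallel with part (B).

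For part (A), fix $\psi_j$ with $E_j = E$ and write $f(x) := |\psi_j(x)|$. For every $v \in \ball_R(x')$, the eigenfunction GRI \eqref{eq:GRI.EF}, applied at energy $E$ and at scale $\ell$, yields
\[
f(v) \;\le\; \Bigl(C_\ell \max_{w:\,d(v,w)=\ell} |G_{\ball_\ell(v)}(v,w;E)|\Bigr)\; \max_{y:\,d(v,y)\le \ell+1} f(y).
\]
The parenthesised factor is exactly what the assumed $(E,m)$-NS of $\ball_\ell(v)$ controls: Definition \ref{def:S.IID} with $|v-w|=\ell$ (which is above the threshold $\ell^{(1+\rho)/\alpha}$) gives the bound $e^{-\gamma(m,\ell)\ell + 2\ell^\beta}$, and the correction $2\ell^\beta$ is absorbed in the margin built into $\gamma(m,\ell)$ exactly as in the computation of \eqref{eq:gamma.7.8}, so for $\ell$ large enough the factor is at most $q = e^{-\gamma(m,\ell)\ell}$. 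Thus $f$ is $(\ell,q)$-subharmonic in $\ball_R(x')$, and multiplying by the $x$-independent factor $|\psi_j(y)|$ transfers the same inequality to $|\Pi_{\psi_j}(x,y)| = f(x) f(y)$ in the $x$-variable. The global bound $|\Pi_{\psi_j}| \le 1$ is immediate from $\|\psi_j\|_\infty \le \|\psi_j\|_2 = 1$.

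Part (B) is structurally identical, with the Green function GRI \eqref{eq:GRI.equal} replacing \eqref{eq:GRI.EF}. The same NS-based computation at scale $\ell$ provides the subharmonicity constant $q$; the additional $E$-NR hypothesis $\|G_{\ball_L(u)}(E)\| \le e^{L^\beta}$ yields the pointwise a priori bound $|G_{\ball_L(u)}(x,y;E)| \le e^{L^\beta}$ needed for the claimed global maximum.

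The only nontrivial work is bookkeeping of inclusions. I need (i) $\ball_\ell(v) \subset \ball_L(u)$ so that $G_{\ball_\ell(v)}$ is defined and NS is available, and (ii) the anchor $y$ (or $y'$ in (A)) to lie outside $\ball_{\ell+1}(v)$ for every $v \in \ball_R(x')$, so that the GRI tail is meaningful. The range $R \in [\ell+2,\, d(x',y')-(\ell+2)]$ is calibrated precisely for this: the lower bound keeps every $\ball_\ell(v)$ genuinely $\ell$-sized when one performs a single GRI step inside $\ball_R(x')$, and the upper bound separates $y'$ from every $v \in \ball_R(x')$ by more than $\ell+1$. Sub-balls that protrude outside $\ball_L(u)$ are handled by intersecting with $\ball_L(u)$, as noted after \eqref{eq:GRI.EF}. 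This is the step I expect to be the most tedious, though it is entirely mechanical.
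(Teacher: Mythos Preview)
Your proposal is correct and follows essentially the same route as the paper's own proof: both parts are obtained by a direct application of the appropriate GRI (\eqref{eq:GRI.EF} for eigenfunctions, \eqref{eq:GRI.equal} for Green functions), with the $(E,m)$-NS hypothesis on each $\ball_\ell(v)$ supplying the factor $q$ and, in (B), the $E$-NR hypothesis giving the global bound $e^{L^\beta}$. Your explicit remarks on reading $E=E_j$ in part (A), on absorbing the $2\ell^\beta$ correction via \eqref{eq:gamma.7.8}, and on the geometric bookkeeping are refinements the paper leaves implicit, but the underlying argument is the same.
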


\subsection{Localization and tunneling}
\label{ssec:loc.tun}

\begin{definition}\label{def:loc}
Given a sample $V(\cdot;\om)$,  a ball $ \ball_L(u)$ is called
$m$-localized (\loc\,, in short) if any eigenfunction $\psi_j$ of the operator $H_{ \ball_L(u)}$ satisfies
\be\label{eq:def.mloc.IID}
|\psi_j(x)| \, |\psi_j(y)| \le e^{-\gamma(m,L) \|x - y\|}
\ee
provided that $\|x - y\|\ge L^{7/8}$.
Otherwise, it is called $m$-non-localized (\nloc).
\end{definition}

In terms of the parameter $\alpha$ (which is set to $4/3$ in Section \ref{sec:adapt.mixing}), the lower bound on the distance
$d(x,y) \ge L^{7/8}$ reads as  $d(x,y) \ge L^{\frac{1+\rho}{\alpha} }$, where
$\rho = (\alpha-1)/2 = 1/6$.

\begin{definition}\label{def:tun.IID}
Let real numbers $E$ and $m>0$ be given.
A ball $\ball_{\ell^{\alpha}}(u)$ is called
$m$-tunneling ($\mathtun$, in short), if it contains a pair of disjoint \nloc\, balls of radius $\ell$, and $m$-non-tunneling ($m$-NT), otherwise.
\end{definition}

Observe that, unlike the property of $E$-resonance or $(E,m)$-singularity, the tunneling property \textbf{is not related to a specific value of energy $E$}, and  even \textit{a single} tunneling ball has a small probability. This allows to adapt a very simple idea due to Spencer \cite{Spe88}  to a direct proof of localization in finite volumes.


\section{Simplified scale induction}
\label{sec:SimpleScaleInd}

\subsection{Initial scale bounds}
The following lemma (more precisely, its first assertion \eqref{eq:L0.GF.IID}) goes back to the works \cites{FMSS85,DK89}.

\begin{lemma}[Initial scale bound: large disorder]\label{lem:L.0.IID}
For any $L_0>2$, $m>0$ and $p>0$ there exists $g_0 < \infty$ such that for all $g$ with
$|g|\ge g_0$, any ball $\ball_{L_{0}}(u)$ and any $E\in\DR$
\begin{eqnarray}
\pr{ \ball_{L_{0}}(u) \text{ is } (E,m)\text{\rm-S} } &\le L_{0}^{-p}
\label{eq:L0.GF.IID} \\
\pr{ \ball_{L_{0}}(u) \text{ is } \mathnloc } &\le L_{0}^{-p}.
\label{eq:L0.loc.IID}
\end{eqnarray}
\end{lemma}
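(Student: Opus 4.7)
The plan is to exploit that for $|g|$ large, $-\Delta$ is a small perturbation of the diagonal operator $gV$, so both the resolvent and the eigenfunctions of $H_{\ball_{L_0}(u)}$ inherit near-trivial localization. I fix $\epsilon := 4d\,e^{2m}$ and set $q := 2d/\epsilon = \frac{1}{2}e^{-2m}$, chosen so that $-\ln q > \gamma(m,L_0) = m(1+L_0^{-1/8})$ strictly for $L_0$ large. The two assertions will be proved on two related good events whose complements are controlled by \Wone.

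For \eqref{eq:L0.GF.IID}, introduce the event $\CE^{(1)}_E = \{\,\forall\, x\in\ball_{L_0}(u):\ |gV(x;\om)-E|\ge\epsilon\,\}$; by \Wone\ and a union bound, $\pr{(\CE^{(1)}_E)^c} \le C\,L_0^d\,(\epsilon/g)^b$. On $\CE^{(1)}_E$ the diagonal $gV-E$ is invertible with norm $\le 1/\epsilon$, and the Neumann/Combes--Thomas expansion of $(H-E)^{-1}$ combined with elementary path-counting on $\DZ^d$ yields $|G(x,y;E)|\le C_\epsilon\,q^{|x-y|}$. Since $-\ln q>\gamma(m,L_0)$ for $L_0$ large, the polynomial factor $|\pt\ball_{L_0}(u)|\le CL_0^{d-1}$ is absorbed by the slack $e^{+2L_0^\beta}$ ($\beta=1/2$), so the $(E,m)$-NS inequality \eqref{eq:def.NS.IID} holds; choosing $g_0$ with $CL_0^d(\epsilon/g_0)^b \le L_0^{-p}$ yields \eqref{eq:L0.GF.IID}.

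For \eqref{eq:L0.loc.IID} the key step is to upgrade the single-energy event to the level-spacing event $\CE^{(2)} = \{\,\forall\, x\ne y\in\ball_{L_0}(u):\ |gV(x)-gV(y)|\ge 2\epsilon\,\}$, whose complement has probability $\le CL_0^{2d}(\epsilon/g)^b$ (condition on one variable and apply \Wone). On $\CE^{(2)}$ the triangle inequality forces, for each eigenvalue $E_j$ of $H_{\ball_{L_0}(u)}$, the resonant set $S_j=\{x:|gV(x)-E_j|<\epsilon\}$ to contain at most one site $x_j$. At every other $x$, the eigenequation rearranges to $(gV(x)-E_j)\psi_j(x)=\sum_{y\sim x}\psi_j(y)$, giving $|\psi_j(x)|\le q\max_{\|y-x\|\le 1}|\psi_j(y)|$; extending $\psi_j$ by zero outside $\ball_{L_0}(u)$ (consistent with its Dirichlet data) makes $|\psi_j|$ $(0,q)$-subharmonic in the sense of Definition \ref{def:SubH.IID} on every lattice ball avoiding $x_j$, so Lemma \ref{lem:SubH.IID} yields $|\psi_j(x)|\le q^{d(x,x_j)-1}\,\|\psi_j\|_\infty \le q^{d(x,x_j)-1}$. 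For $\|x-y\|\ge L_0^{7/8}$, the triangle inequality $d(x,x_j)+d(y,x_j)\ge\|x-y\|$ gives $|\psi_j(x)|\,|\psi_j(y)|\le q^{\|x-y\|-2}\le e^{-\gamma(m,L_0)\|x-y\|}$ (the slack $-\ln q>\gamma(m,L_0)$ absorbs the additive $2$ since $\|x-y\|\ge L_0^{7/8}\to\infty$), which is exactly $m$-Loc on $\CE^{(2)}$; taking $g_0$ with $CL_0^{2d}(\epsilon/g_0)^b\le L_0^{-p}$ concludes \eqref{eq:L0.loc.IID}. The main point that requires genuine care is the "at most one resonant site per eigenvalue" dichotomy on $\CE^{(2)}$: it is what makes the one-shot initial-scale bound for eigenfunctions possible and is the only step in which \Wone\ is actually used; a secondary issue, verifying that the zero-extension preserves $(0,q)$-subharmonicity at $\pt^-\ball_{L_0}(u)$, is immediate because the Dirichlet condition makes $\sum_{y\sim x}\psi_j(y) = \sum_{y\sim x}\tilde\psi_j(y)$ for the extended function.
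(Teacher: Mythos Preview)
Your argument is correct and follows the same two--step skeleton as the paper: first isolate, via \Wone, a high--probability ``good'' event (separation of $gV(x)$ from $E$ for \eqref{eq:L0.GF.IID}; pairwise separation of the values $gV(x)$ for \eqref{eq:L0.loc.IID}), and then show the required decay deterministically on that event. The execution, however, is genuinely different. For \eqref{eq:L0.GF.IID} the paper merely writes $H-E=g\big(V-g^{-1}E+g^{-1}H_0\big)$ and leaves the decay implicit; you make this quantitative via the Neumann/path expansion, which is the standard way to fill that gap. For \eqref{eq:L0.loc.IID} the paper invokes abstract perturbation theory---continuity of eigenvectors of the family $A(t)=V+g^{-1}t\Delta$ with simple spectrum at $t=0$, so that eigenvectors of $H$ are close to the $\delta_x$'s---without spelling out how ``close'' yields the precise bound \eqref{eq:def.mloc.IID}. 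Your route is cleaner and stays within the paper's own framework: on the spacing event each eigenvalue admits at most one resonant site $x_j$, the eigenequation makes $|\psi_j|$ $(0,q)$--subharmonic away from $x_j$ (Definition~\ref{def:SubH.IID} with $\ell=0$), and then Lemma~\ref{lem:SubH.IID} gives $|\psi_j(x)|\le q^{\,d(x,x_j)}$, hence $|\psi_j(x)\psi_j(y)|\le q^{\|x-y\|}$ by the triangle inequality. This is an elegant ``scale~$0$'' use of the paper's radial--descent machinery; what it buys over the paper's sketch is an explicit decay rate with no appeal to eigenvector continuity.

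One small caveat: your fixed choice $\epsilon=4d\,e^{2m}$ gives $-\ln q=\ln 2+2m>\gamma(m,L_0)$ for every $L_0>1$, but the absorption of the prefactors $|\pt\ball_{L_0}(u)|$, $C_\epsilon$ and the additive ``$-2$'' in $q^{\|x-y\|-2}$ into the slack is only automatic once $L_0$ is moderately large. Since the statement allows $g_0$ to depend on $L_0$, this is harmless: for small $L_0$ simply take $\epsilon$ (hence $-\ln q$) larger, at the price of a larger $g_0$. With that adjustment your proof covers all $L_0>2$ as stated.
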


\subsection{Tunneling and localization in finite balls}
\label{ssec:tun.and.loc.finite.balls}

\begin{lemma}\label{lem:loc.and.NR.imply.NS.IID}
Let $E\in\DR$ and an \loc\, ball $ \ball_L(u)$ be given.
If $ \ball_L(u)$ is also  $E$-NR, then it is $(E,m)$-NS.
\end{lemma}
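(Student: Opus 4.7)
The plan is to use the spectral decomposition of the Green function and read off the bound from the two hypotheses term-by-term. Let $\{\psi_j\}$ be the normalized eigenfunctions of $H_{\ball_L(u)}$ with eigenvalues $E_j$, so that
\begin{equation*}
G_{\ball_L(u)}(x,y;E) \;=\; \sum_j \frac{\psi_j(x)\,\overline{\psi_j(y)}}{E_j - E}.
\end{equation*}
The $E$-NR hypothesis gives $|E_j - E|^{-1} \le \|G_{\ball_L(u)}(E)\| \le e^{L^{\beta}}$ uniformly in $j$, and the \loc\ hypothesis bounds each numerator by $e^{-\gamma(m,L)\|x-y\|}$ as soon as $\|x-y\|\ge L^{7/8}$.

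Combining these two ingredients inside the sum, whose number of terms is at most $|\ball_L(u)| \le (2L+1)^d$, yields
\begin{equation*}
|G_{\ball_L(u)}(x,y;E)| \;\le\; (2L+1)^d \, e^{L^{\beta}} \, e^{-\gamma(m,L)\|x-y\|}.
\end{equation*}
Multiplying by $|\pt \ball_L(u)| \le C(d) L^{d-1}$ produces only a further polynomial factor in $L$. For $L$ large enough, the polynomial prefactor is absorbed into a single $e^{L^{\beta}}$, so that
\begin{equation*}
|\pt \ball_L(u)| \, |G_{\ball_L(u)}(x,y;E)| \;\le\; e^{-\gamma(m,L)\|x-y\| + 2L^{\beta}},
\end{equation*}
which is exactly the $(E,m)$-NS bound in Definition \ref{def:S.IID}.

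There is essentially no obstacle here: the only things one must check are (i) that the distance condition $\|x-y\|\ge L^{7/8}$ in Definition \ref{def:loc} matches the one required in Definition \ref{def:S.IID}, and (ii) that polynomial prefactors can be absorbed into the stretched-exponential correction $e^{2L^{\beta}}$ — both automatic, the latter for $L$ sufficiently large (and trivially true for smaller $L$ after adjusting $L^*$ if needed). The content of the lemma is really that the two hypotheses are tailored precisely so that the spectral-theoretic identity above decouples the off-diagonal decay (from \loc) from the small-denominator problem (controlled by NR).
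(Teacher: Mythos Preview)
Your proof is correct and is essentially the same as the paper's: both use the spectral decomposition of the Green function, bound the denominators via the $E$-NR hypothesis, the numerators via the \loc\ hypothesis, and absorb the polynomial factor $|\ball_L(u)|$ (and $|\pt\ball_L(u)|$) into the second $e^{L^\beta}$. You are in fact slightly more explicit than the paper about the boundary factor $|\pt\ball_L(u)|$ required by Definition~\ref{def:S.IID}.
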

\proof
The matrix elements of the resolvent $G_\ball(E)$ read as follows:
\be\label{eq:G.basis}
G_\ball(x,y;E) = \sum_{E_j\in\sigma(H_{\ball_L(u)})}
 \frac{\overline{\psi_j(x)} \, \psi_j(y)}{E - E_j}.
\ee
If $\dist(E, \sigma(H_\ball)) \ge e^{-L^\beta}$ and
$\ln (2L+1)^d \le L^\beta$, then the \loc \, property implies
$$
\bal
\displaystyle |G_\ball(x,y;E)|  &
& \le e^{-\gamma(m,L)\|x - y\| + L^\beta + \ln |\ball|}
 \le e^{-\gamma(m,L)L + 2L^\beta}.
 \hfill\qed
\eal
$$

From this point on, we will work with a sequence of "scales" - positive integers $\{L_k, k\ge 0\}$, defined recursively by $L_{k+1} = [L_k^{\alpha}]+1$, $L_0>2$. In several arguments we will require the initial scale $L_0$ to be large enough.

\begin{lemma}\label{lem:few.nloc.and.few.R.imply.nloc.IID}
There exists $\tilde L^{(1)}\in\DN$ such that for  $L_k\ge \tilde L^{(1)}$

\begin{enumerate}[\rm (A)]
  \item if a ball $ \ball_{L_k}(u)$ is $E$-CNR and contains no pair of disjoint $(E,m)$-S balls of radius $L_{k-1}$, then it is also $(E,m)$-NS;
   \item if a ball $ \ball_{L_k}(u)$ is $\mathntun$ and $E$-CNR, then it is also $(E,m)$-NS.
\end{enumerate}

\end{lemma}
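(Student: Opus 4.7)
The plan is to deduce part (B) from part (A), and to prove part (A) via the deterministic subharmonic-descent machinery of Lemmas \ref{lem:cond.SubH.IID}, \ref{lem:SubH.IID}, and \ref{lem:BiSubH.IID}.

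For part (B), the $E$-CNR hypothesis guarantees that every sub-ball of $\ball_{L_k}(u)$ of radius $\ge L_k^{1/\alpha}$ is $E$-NR; in particular, every sub-ball of radius $L_{k-1}$ is $E$-NR. Lemma \ref{lem:loc.and.NR.imply.NS.IID} says \loc\ $+$ $E$-NR $\Rightarrow$ $(E,m)$-NS, which contrapositively gives: an $E$-NR ball that is $(E,m)$-S must be \nloc. The $\mathntun$ hypothesis forbids a pair of disjoint \nloc\ sub-balls of radius $L_{k-1}$ inside $\ball_{L_k}(u)$, and hence also a pair of disjoint $(E,m)$-S sub-balls of radius $L_{k-1}$. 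This is precisely the hypothesis of part (A).

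For part (A), the key geometric observation is that any two intersecting balls $\ball_{L_{k-1}}(v_1)$ and $\ball_{L_{k-1}}(v_2)$ satisfy $\|v_1 - v_2\| \le 2L_{k-1}$. Thus, if $(E,m)$-S sub-balls of radius $L_{k-1}$ exist in $\ball_{L_k}(u)$, their centers all lie in a single ball $\ball_{2L_{k-1}}(v_0)$, and every $\ball_{L_{k-1}}(v)$ with $v \notin \ball_{2L_{k-1}}(v_0)$ is $(E,m)$-NS (if no $(E,m)$-S sub-balls exist at all, the argument only simplifies). Now fix $x, y \in \ball_{L_k}(u)$ with $\|x - y\| \ge L_k^{7/8}$. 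Since $L_k^{7/8} \gg 2L_{k-1} \sim L_k^{1/\alpha}$ for $L_0$ large, the triangle inequality gives $\max(\|x - v_0\|, \|y - v_0\|) \ge \tfrac{1}{2}\|x - y\|$, so at least one of $x, y$ sits well outside $\ball_{2L_{k-1}}(v_0)$. I would then split into two cases. When \emph{both} $x$ and $y$ lie outside $\ball_{2L_{k-1}}(v_0)$, I pick radii $r', r''$ with $r' \le \|x - v_0\| - 2L_{k-1} - 1$, $r'' \le \|y - v_0\| - 2L_{k-1} - 1$, and $r' + r'' + 2 = \|x - y\|$, so that $\ball_{r'}(x)$ and $\ball_{r''}(y)$ contain only $(E,m)$-NS sub-balls of radius $L_{k-1}$ and are disjoint at the required distance. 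Since $\ball_{L_k}(u)$ is itself $E$-NR (being $E$-CNR), Lemma \ref{lem:cond.SubH.IID}(B) makes $f(x', y') := |G_{\ball_{L_k}(u)}(x', y'; E)|$ separately $(L_{k-1}, q)$-subharmonic in each variable, with $q = e^{-\gamma(m, L_{k-1}) L_{k-1}}$ and global maximum $\le e^{L_k^\beta}$; Lemma \ref{lem:BiSubH.IID} then yields $f(x, y) \le e^{-\gamma(m, L_{k-1})(r' + r'' - 2L_{k-1})}\, e^{L_k^\beta}$. If only one of $x, y$ (say $x$) lies outside $\ball_{2L_{k-1}}(v_0)$, I apply Lemma \ref{lem:SubH.IID} one-sidedly to $f(\cdot, y)$ over a ball of radius $R \approx \|x - v_0\| - 2L_{k-1} \ge \|x - y\| - 4L_{k-1}$ around $x$, obtaining an analogous bound.

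The main obstacle is to verify that the resulting exponent, roughly $-\gamma(m, L_{k-1})(\|x - y\| - cL_{k-1}) + L_k^\beta$, is at most $-\gamma(m, L_k)\|x - y\| + 2L_k^\beta$; equivalently, that $(\gamma(m, L_{k-1}) - \gamma(m, L_k))\|x - y\| + L_k^\beta \ge c\,\gamma(m, L_{k-1}) L_{k-1}$. Using $\gamma(m, L) = m(1 + L^{-1/8})$, $L_k = L_{k-1}^{\alpha}$ with $\alpha = 4/3$, and $\|x - y\| \ge L_k^{7/8} = L_{k-1}^{7/6}$, the left-hand side is bounded below by $m\, L_{k-1}^{-1/8}\cdot L_{k-1}^{7/6} = m\, L_{k-1}^{25/24}$, while the right-hand side is $O(m\, L_{k-1})$, so the inequality holds as soon as $L_0$ is taken large enough. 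Absorbing the polynomial factor $|\pt \ball_{L_k}(u)| \le CL_k^{d-1}$ into the remaining $e^{L_k^\beta}$ budget yields exactly the $(E,m)$-NS bound of Definition \ref{def:S.IID}, completing the proof.
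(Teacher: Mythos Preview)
Your proposal is correct and follows essentially the same route as the paper: part (B) is reduced to part (A) via the contrapositive of Lemma~\ref{lem:loc.and.NR.imply.NS.IID}, and part (A) is proved by isolating the (at most one) exceptional region containing all $(E,m)$-S sub-balls of radius $L_{k-1}$ and applying the bi-subharmonic descent of Lemmas~\ref{lem:cond.SubH.IID}(B) and \ref{lem:BiSubH.IID} (with the one-sided Lemma~\ref{lem:SubH.IID} handling the degenerate case). The arithmetic you sketch for the exponent comparison matches the paper's computation \eqref{eq:proof.lem:few.nloc.and.few.R.imply.nloc.IID}; one small point to tighten is the simultaneous choice of $r',r''$ so that both the exclusion constraint and the disjointness hypothesis $\|x-y\|\ge r'+r''+2$ of Lemma~\ref{lem:BiSubH.IID} hold, which may require shrinking one of them by $O(L_{k-1})$---harmless for the final bound.
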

\begin{proof}
\noindent
(A)
Fix two points $x,y\in\ball_L(u)$ with $|x-y|\ge L_{k-1}^{1+\rho}$.
By assumption,
\begin{enumerate}
  \item either $\ball_{L_k}(u)$ does not contain any $(E,m)$-S ball of radius $L_{k-1}$,
  \item or there is a ball $\ball_{L_{k-1}}(w)\subset\ball_{L_k}(u)$ such that any ball
  $\ball_{L_{k-1}}(v)\subset\ball_{L_k}(u)$ with $d(v, w)> 2L_{k-1}$ is $(E,m)$-NS.
\end{enumerate}
To treat both cases with one argument, in the simpler case (1) set, formally,
$w=y$ (although no exclusion is actually necessary).
Set
\be\label{eq:r.r.IID}
\bal
r'  &= \max(d(u, w) - L_{k-1}-1, 0), \\
r'' &= \max((y,  w) - L_{k-1}-1, 0).
\eal
\ee
Assume first that $r', r'' \ge L_{k-1}$. By triangle inequality,
\be\label{eq:r.r}
r' + r'' \ge |x-y| - 2L_{k-1} \ge L_{k-1}^{1+\rho} - 2L_{k-1}.
\ee

Consider the function  $f: \ball\times\ball \to \DR_+$ defined by
$
 f(x', x'') = |G_{\ball_{L_k}}(x', x''; E)|.
$
By construction, $f$ is  $(L_{k-1}, q_k)$-subharmonic both in
$x'\in\ball_{r'}(x)$ and in $x''\in\ball_{r''}(y)$, with
$
q \le e^{-\gamma(m, L_{k-1})},
$
since $\ball_{L_k}(u)$ is $E$-NR, and all  balls of radius $L_{k-1}$ both in $\ball_{r'}(x)$ and in $\ball_{r''}(y)$ are $(E,m)$-NS (being disjoint from $\ball_{L_{k-1}(w)}$). Collecting the assertion (B) of Lemma \ref{lem:cond.SubH.IID}, Lemma \ref{lem:BiSubH.IID} and inequality
\eqref{eq:r.r}, one can write, with the convention $-\ln 0 = +\infty$:
\be\label{eq:proof.lem:few.nloc.and.few.R.imply.nloc.IID}
\bal
-\ln f(x,y)   &
\textstyle
\ge
 -\ln \Bigg[
 \left( e^{-m(1 + \half L^{-\tau}_{k-1})L_{k-1}}\right)^{\frac{ R - 2L_{k-1} }{ L_{k-1} + 1} }
 e^{L_k^\beta}
 \Bigg]
\\
&
\textstyle
\ge m \left(1+ \half L_{k-1}^{-1/8} \right) \frac{L_{k-1}}{L_{k-1}+1} \left( R -2L_{k-1} \right)
- L_{k-1}^\beta
\\
&
\textstyle
\ge mR \left[ \left(1+ \half L_{k-1}^{-1/8} \right) \frac{L_{k-1}}{L_{k-1}+1} 
\left( 1 -\frac{2L_{k-1}}{R} \right)
- \frac{L_{k-1}^\beta}{ mR } \right]
\\
&
\textstyle
\ge R m \left( 1+ \half L_{k-1}^{-1/8} \right)  \left( 1 -  3L_{k-1}^{-1/6} \right)
\ge R m \left(1 + \quart L_{k-1}^{-1/8}   \right)
\\
 &> \gamma(m, L_k) \,|x - y| + \ln | \pt \ball_{L_k}(u)|,
\eal
\ee
as required for the $(E,m)$-NS property of the ball $\ball_{L_k}(u)$.

If $r'=0$ (resp., $r''=0$), the required bound follows from the subharmonicity of the function $f(x',x'')$ in $x''$ (resp., in $x'$).

\par\smallskip\noindent
(B) Assume otherwise. Then, owing to assertion (A), the $E$-CNR ball $\ball_{L_k}(u)$ must contain a pair of disjoint $(E,m)$-S  balls $\ball_{L_{k-1}}(x)$, $\ball_{L_{k-1}}(y)$. Neither of them is $E$-R, since  $\ball_{L_k}(u)$ is  $E$-CNR. By virtue of Lemma  \ref{lem:loc.and.NR.imply.NS.IID}, both $\ball_{L_{k-1}}(x)$ and $\ball_{L_{k-1}}(y)$ must be \nloc, so that $\ball_{L_k}(u)$ must be $\mathtun$, which contradicts the hypothesis.
\end{proof}

\begin{lemma}\label{lem:NT.implies.nloc.IID}
There exists $\tilde L^{(2)}\in\DN$ such that, for all  $L_0\ge \tilde L^{(2)}$, if for any $E\in\DR$ a ball $\ball_{L_k}(u)$ contains no pair of disjoint $(E,m)$-S  balls of radius $L_{k-1}$, then it is $\mathloc$.
\end{lemma}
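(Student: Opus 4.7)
The plan is to mimic, one eigenfunction at a time, the argument of part (A) of Lemma \ref{lem:few.nloc.and.few.R.imply.nloc.IID}, but working with the bi-variate eigenfunction kernel $|\psi_j(x')\psi_j(x'')|$ in place of the Green function. The essential simplification is that this kernel is bounded globally by $1$ (since $\|\psi_j\|_\infty \le \|\psi_j\|_2 = 1$), which eliminates both the need for an $E$-CNR hypothesis and the additive $L_k^\beta$ term in the exponent that was present in the Green function version.

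So, I fix an eigenpair $(\psi_j, E_j)$ of $H_{\ball_{L_k}(u)}$ and a pair of sites $x, y \in \ball_{L_k}(u)$ with $\|x-y\| \ge L_k^{7/8}$, and apply the hypothesis of the lemma at the specific value $E = E_j$. Repeating the opening discussion of the proof of Lemma \ref{lem:few.nloc.and.few.R.imply.nloc.IID}, I obtain a site $w \in \ball_{L_k}(u)$ such that every $L_{k-1}$-ball $\ball_{L_{k-1}}(v) \subset \ball_{L_k}(u)$ with $d(v,w) > 2L_{k-1}$ is $(E_j,m)$-NS (with $w$ chosen formally, say $w = y$, if no $(E_j,m)$-S ball is present at all). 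I then define radii $r'$ and $r''$ around $x$ and $y$ exactly as in \eqref{eq:r.r.IID}, so that every $L_{k-1}$-ball centered in $\ball_{r'}(x)$ or in $\ball_{r''}(y)$ is $(E_j,m)$-NS.

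By Lemma \ref{lem:cond.SubH.IID}(A), applied once in each variable (the roles of the two arguments of the projector kernel being symmetric), the function $f(x',x'') := |\psi_j(x')\psi_j(x'')|$ is $(L_{k-1}, q_{k-1})$-subharmonic separately in $x' \in \ball_{r'}(x)$ and in $x'' \in \ball_{r''}(y)$, with $q_{k-1} = e^{-\gamma(m,L_{k-1}) L_{k-1}}$, and satisfies $f \le 1$ everywhere. In the non-degenerate regime $r', r'' \ge L_{k-1}$, Lemma \ref{lem:BiSubH.IID} gives
\begin{equation*}
|\psi_j(x)\psi_j(y)| \le q_{k-1}^{\lfloor (r'+1)/(L_{k-1}+1)\rfloor + \lfloor (r''+1)/(L_{k-1}+1)\rfloor},
\end{equation*}
and the triangle inequality $r' + r'' \ge \|x-y\| - O(L_{k-1})$ together with $\|x-y\| \ge L_k^{(1+\rho)/\alpha}$ converts this into $|\psi_j(x)\psi_j(y)| \le e^{-\gamma(m,L_k)\|x-y\|}$, via the same chain of inequalities as in \eqref{eq:proof.lem:few.nloc.and.few.R.imply.nloc.IID} minus the additive term $L_k^\beta$. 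The degenerate cases $r'=0$ or $r''=0$ are dispatched by Lemma \ref{lem:SubH.IID} applied to the remaining non-trivial variable, combined with the trivial bound $|\psi_j(\cdot)|\le 1$. Since this holds for every eigenfunction $\psi_j$, the ball $\ball_{L_k}(u)$ is $\mathloc$, which is what had to be shown.

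The only step requiring actual computation is the final numerical inequality, but it is strictly easier than the one in Lemma \ref{lem:few.nloc.and.few.R.imply.nloc.IID} owing to the disappearance of the Wegner term; no new obstacle is expected, and $\tilde L^{(2)}$ can in fact be taken no larger than $\tilde L^{(1)}$. The conceptually important observation is that $E$-CNR, central to the Green-function variant, is simply unnecessary for the eigenfunction kernel, because normalized eigenfunctions are bounded uniformly independently of any spectral information.
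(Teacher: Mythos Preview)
Your proposal is correct and follows essentially the same route as the paper: replace the Green function by the eigenfunction kernel $|\psi_j(x')\psi_j(x'')|$, note that the global bound $\|\psi_j\|_\infty\le 1$ removes both the $E$-CNR hypothesis and the additive $L_k^\beta$ term, and then reuse the bi-subharmonic estimate from Lemma~\ref{lem:BiSubH.IID} together with the arithmetic of \eqref{eq:proof.lem:few.nloc.and.few.R.imply.nloc.IID}. The paper's own argument is slightly terser (it defines $r'=d(x,w)$, $r''=d(y,w)$ directly rather than via the $\max$-expressions of \eqref{eq:r.r.IID}), but the two versions are interchangeable.
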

\proof

One can proceed as in the proof of the previous lemma, but with the functions
$
 f_j(x', x'') = |\psi_j(x') \psi_j(x'')|,
$
where $\psi_j, j\in[1, |\ball_{L_k}(u)|]$, are normalized eigenfunctions of operator $H_{\ball_{L_k}(u)}$. Notice that the $E_j$-non-resonance condition for $\ball_{L_k}(u)$ is not required here, since $\|\psi_j\|=1$, so the function $|f_j|$ is globally bounded by  $1$.
Let $x,y\in \ball_{L_k}(u)$ and assume that $d(x,y)=R \ge L_{k-1}^{1+\rho} = L_{k-1}^{7/6}$,
$\rho = 1/6$.

By assumption, either $\ball_{L_k}(u)$  contains no $(E_j,m)$-S $L_{k-1}$-ball, or there is a ball $\ball_{L_{k-1}}(w)$ such that any ball $\ball_{L_{k-1}}(v)$ disjoint with $\ball_{L_{k-1}(w)}$ is $(E_j,m)$-NS. We consider first the latter case and prove the assertion, avoiding the ball $\ball_{L_{k-1}}(w)$. To this end, set
$r' = d(x, w)$, $r'' = d(y, w)$. Then the function $(x',x'')\mapsto \psi_j(x') \psi_j(x'')$ is $(\ell,q)$-subharmonic in $\ball_{r'}(x)\times\ball_{r''}(y)$. Therefore, one can apply the bound \eqref{eq:RDL.IID}, with
$r'+r'' \ge R - 2L_{k-1}$,
and write, with the convention $-\ln 0 = +\infty$:
\be\label{eq:proof.lem.NT.implies.nloc.IID}
\bal
-\ln |\psi_i(x) \psi(y) |  &\ge
 -\ln \left( e^{-\gamma(m,L_{k-1})L_{k-1}}\right)^{\frac{ R - 2L_{k-1} }{ L_{k-1} + 1} }
\eal
\ee
A direct comparison of Eqn \eqref{eq:proof.lem:few.nloc.and.few.R.imply.nloc.IID}
with Eqn  \eqref{eq:proof.lem.NT.implies.nloc.IID}
shows that the RHS of the latter is bigger, owing to the absence of the
factor $e^{L_{k-1}^\beta} >1$, thus it admits the same lower bound
as in Eqn \eqref{eq:proof.lem:few.nloc.and.few.R.imply.nloc.IID}.
\qedhere

\begin{remark}
\label{rem:universality.IID}
Assertions of Lemma \ref{lem:few.nloc.and.few.R.imply.nloc.IID} and Lemma \ref{lem:NT.implies.nloc.IID} are  deterministic and do not rely upon a particular structure of the potential. In other words, these statements are valid for  arbitrary LSO, including multi-particle operators.
\end{remark}


\begin{lemma}[\textbf{Main Inductive Lemma}]\label{lem:induct.IID}
Let $p, b>0$ satisfy
\be\label{eq:cond.p.b}
\textstyle
p > \frac{2\alpha^2 d}{2-\alpha^2}, \;
0 < 3b \le \frac{2 - \alpha^2}{\alpha^2} - \frac{2d}{ p },
\ee
Suppose that for some $k\ge 0$ and all $0 \le k' \le k$ the following bound holds true:
\be\label{eq:lem.induct.IID}
\forall\, u\in\DZ^d\qquad \pr{ \text{ $\ball_{L_{k'}}(u)$ is } \mathnloc }
\le L_{k'}^{-p(1+b)^{k'}}.
\ee
Then, for  $L_0>0$ large enough,
$\forall\, u\in\DZ^d$
\be\label{eq:lem.induct.next.Tun.IID}
\ba
\pr{ \text{  $\ball_{L_{k+1}}(w)$ is } \mathnloc } \le \quart L_{k+1}^{-p(1+b)^{k+1}}
\ea
\ee
and for any pair of disjoint balls $\ball_{L_{k}}(u)$, $\ball_{L_{k}}(v)\subset\ball_{L_{k+1}}(w)$
\be\label{eq:lem.induct.next.S.IID}
\pr{\exists\,E:\; \ball_{L_{k}}(u) \text{ and } \ball_{L_{k}}(v)
\text{ are $(E,m)$-S  } }
\le \half L_k^{-p(1+b)^{k+1}}.
\ee
\end{lemma}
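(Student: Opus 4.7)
I plan to prove the two assertions in order: first the pair-singularity bound \eqref{eq:lem.induct.next.S.IID}, and then the non-localization bound \eqref{eq:lem.induct.next.Tun.IID} by essentially the same analysis combined with Lemma~\ref{lem:NT.implies.nloc.IID}. Both arguments rest on two previously-proved implications: Lemma~\ref{lem:few.nloc.and.few.R.imply.nloc.IID}(B), which replaces an $(E,m)$-S sub-ball by either a $\mathtun$ sub-ball (an event not involving $E$) or a non-CNR sub-ball; and Lemma~\ref{lem:NT.implies.nloc.IID}, which turns the $\mathnloc$ property at scale $L_{k+1}$ into the existence of a disjoint pair of $(E,m)$-S balls of radius $L_k$.

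For \eqref{eq:lem.induct.next.S.IID}, fix a disjoint pair $\ball_{L_k}(u), \ball_{L_k}(v)\subset \ball_{L_{k+1}}(w)$. Applying Lemma~\ref{lem:few.nloc.and.few.R.imply.nloc.IID}(B) to each, the event $\{\exists E:\text{both }(E,m)\text{-S}\}$ is contained in $\{\ball_{L_k}(u)\text{ is }\mathtun\}\cup\{\ball_{L_k}(v)\text{ is }\mathtun\}\cup\{\exists E:\text{both }E\text{-PR}\}$. Unfolding Definition~\ref{def:tun.IID} and summing over the $O(L_k^{2d})$ candidate pairs of $L_{k-1}$-centres inside $\ball_{L_k}(u)$, the IID hypothesis (independence of $\mathnloc$ events on disjoint $L_{k-1}$-balls) together with \eqref{eq:lem.induct.IID} at scale $k-1$ gives
\[
\pr{\ball_{L_k}(u)\text{ is }\mathtun}\le CL_k^{2d}\,L_{k-1}^{-2p(1+b)^{k-1}}.
\]
The remaining two-ball $E$-PR event is controlled by the Wegner-type bound \eqref{eq:W2.IID}, contributing the super-polynomially small $e^{-L_k^{\beta'}}$.

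For \eqref{eq:lem.induct.next.Tun.IID}, Lemma~\ref{lem:NT.implies.nloc.IID} lets me replace $\{\ball_{L_{k+1}}(w)\text{ is }\mathnloc\}$ by the existence, for some $E$, of a disjoint pair of $(E,m)$-S $L_k$-balls inside $\ball_{L_{k+1}}(w)$. Applying the same decomposition \emph{globally}, the event splits into $\{\exists y\in\ball_{L_{k+1}}(w):\ \ball_{L_k}(y)\text{ is }\mathtun\}\cup\{\exists E,\text{ disjoint pair of }E\text{-PR }L_k\text{-balls in }\ball_{L_{k+1}}(w)\}$. The key observation is that the tunneling branch is a union over \emph{single} centres, so it costs only $|\ball_{L_{k+1}}|=O(L_k^{d\alpha})$ rather than the $O(L_{k+1}^{2d})$ factor one would obtain by summing \eqref{eq:lem.induct.next.S.IID} naively over arbitrary pairs of $L_k$-sub-balls; without this saving the induction cannot close. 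The $E$-PR branch is again negligible by a union bound over pairs combined with \eqref{eq:W2.IID}.

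After substituting $L_{k-1}=L_k^{1/\alpha}$ and $L_{k+1}=L_k^\alpha$, both inequalities reduce to a condition of the shape
\[
p(1+b)^{k-1}\!\left(\tfrac{2}{\alpha}-\alpha(1+b)^2\right)\ge\const\cdot d,
\]
whose worst case ($k=1$) is exactly what the hypothesis \eqref{eq:cond.p.b} is designed to secure: the lower bound $p>2\alpha^2 d/(2-\alpha^2)$ provides a strictly positive margin at $b=0$, and the upper bound $3b\le(2-\alpha^2)/\alpha^2-2d/p$ preserves that margin once the factor $(1+b)^2$ is inserted. The main technical obstacle will be this exponent bookkeeping, together with absorbing all polynomial prefactors (from $|\ball_{L_k}|$, $|\ball_{L_{k+1}}|$ and the $|\pt\ball_L|$ implicit in Definition~\ref{def:S.IID}) into the explicit constants $\tfrac{1}{2}$ and $\tfrac{1}{4}$ by taking $L_0$ large enough.
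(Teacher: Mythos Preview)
Your approach is essentially the same as the paper's. The paper also uses Lemma~\ref{lem:NT.implies.nloc.IID} to reduce $\CN_{k+1}$ to the existence of a disjoint $(E,m)$-S pair at scale $L_k$, then splits off the two-ball $E$-PR event (handled by \eqref{eq:W2.IID}) and observes that on the complement one of the two $L_k$-balls is $E$-CNR and $(E,m)$-S, hence $\mathtun$ by Lemma~\ref{lem:few.nloc.and.few.R.imply.nloc.IID}(B); it then bounds the resulting event by a union over pairs of $L_{k-1}$-centres in $\ball_{L_{k+1}}(w)$ combined with independence and \eqref{eq:lem.induct.IID}, arriving at the same arithmetic condition you identify. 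The only cosmetic differences are that the paper does not separate \eqref{eq:lem.induct.next.S.IID} out as a first step (it is implicit in the bound on $\CS_k^{(2)}\setminus\CR_k^{(2)}$), and it counts the pairs of $L_{k-1}$-centres directly inside $\ball_{L_{k+1}}(w)$ rather than first summing over the single $L_k$-centre $y$ and then over pairs inside $\ball_{L_k}(y)$; this yields the prefactor $|\ball_{L_{k+1}}|^2$ rather than your $|\ball_{L_{k+1}}|\cdot|\ball_{L_k}|^2$, but both are absorbed by the same condition \eqref{eq:cond.p.b} for $L_0$ large.
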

\proof
Consider the following events:
$$
\bal
\CN_{k+1} &:= \{ \text{$\ball_{L_{k+1}}(w)$ is } \mathnloc \} \\
\CS_{k}^{(2)} &:= \{\exists\,E\;\exists\,  \text{ disjoint $(E,m)$-S balls } \ball_{L_{k}}(u),  \ball_{L_{k}}(v)\subset \ball_{L_{k+1}}(w) \} \\
\CR_{k}^{(2)} &:= \{\exists\,E\;\exists\,  \text{ disjoint $E$-PR balls } \ball_{L_{k}}(u),  \ball_{L_{k}}(v)\subset \ball_{L_{k+1}}(w)  \}.
\eal
$$
By Lemma \ref{lem:NT.implies.nloc.IID},
$$
\CN_{k+1} \subset \CR_{k}^{(2)} \cup \CS_{k}^{(2)} \equiv
  \CR_{k}^{(2)} \cup \left(\CS_{k}^{(2)}\setminus \CR_{k}^{(2)} \right),
$$
and by Wegner-type bound \eqref{eq:W2.IID}, $\pr{\CR_k^{(2}}\le e^{-L_k^{\beta'}}$, so that it remains to bound $\pr{\CS_{k}^{(2)}\setminus \CR_{k}^{(2)} }$. Fix points $u,v$ with
$d(u, v)>2L_k$ and introduce the event
\be\label{eq:S2.u.v}
\CS_{k}^{(2)}(u,v) = \{\exists\,E:\;
\text{ $\ball_{L_{k}}(u)$ and $\ball_{L_{k}}(v)$  are $(E,m)$-S} \}.
\ee
Within the event $\CS_{k}^{(2)}(u,v)\setminus \CR_{k}^{(2)}$, either
$\ball_{L_k}(u)$ or $\ball_{L_k}(v)$ must be $E$-CNR; without loss of generality, suppose
$\ball_{L_k}(u)$ is $E$-CNR. At the same time, it is $(E,m)$-S, so by assertion (B) of Lemma \ref{lem:few.nloc.and.few.R.imply.nloc.IID}, $\ball_{L_k}(u)$ must  be $\mathtun$, i.e., it must contain a pair of disjoint \nloc\, balls $\ball_{L_{k-1}}(y')$, $\ball_{L_{k-1}}(y'')$. Using the inductive assumption \eqref{eq:lem.induct.IID} and  independence\footnote{Clearly, a weak dependence would suffice for this argument; cf. Section \ref{sec:adapt.mixing}.}
 of  random operators
$H_{\ball_{L_{k-1}}(y')}$, $H_{\ball_{L_{k-1}}(y'')}$, we can write
$$
\pr{ \text{$\ball_{L_{k-1}}(y')$ and $\ball_{L_{k-1}}(y')$ are \nloc} }
\le L_{k-1}^{-2p(1+b)^{k-1}} \le  L_{k+1}^{- \frac{2p}{\alpha^2} (1+b)^{k-1} }.
$$
The number of all pairs $y', y''$ is bounded by $|\ball_{L_{k+1}}(u)|^2/2$, so that
\be\label{eq:prob.S2.not.R2}
\textstyle
\pr{ \CS_{k}^{(2)} \setminus \CR_{k}^{(2)} }
\le \half (2L_{k+1} + 1)^{2d} L_{k+1}^{- \frac{2p}{\alpha^2} (1+b)^{k-1}}
< \quart  L_{k+1}^{-p(1+b)^{k+1}},
\ee
for $L_0$ (hence, $L_{k+1}$) large  enough,  provided that
\be\label{eq:cond.b}
\textstyle
\frac{2}{\alpha^2} - \frac{2d}{ p(1+b)^{k-1} } > (1+b)^{2}.
\ee
Observe that for $b\in(0,1)$, $(1+b)^2 < 1+3b$ and
$\frac{2d}{ p } > \frac{2d}{ p(1+b)^{k-1} }$. Therefore, if $p$ and $b$ fulfill
the conditions \eqref{eq:cond.p.b}, 
then \eqref{eq:cond.b} is also fulfilled, yielding the last inequality in \eqref{eq:prob.S2.not.R2}. With $\alpha = 4/3$, this reads as $p>16d$ and
$b< \frac{1}{24} - \frac {2d}{3p}$. Therefore, with  $L_0$ large enough,
\begin{equation*}
\qquad\qquad\qquad\qquad\qquad\qquad
\textstyle
\pr{ \CN_{k+1} }  <  \quart L_{k+1}^{-p(1+b)^{k+1}}.
\qquad\qquad\qquad\qquad\qquad\qquad{}_{{}_{\qed}}
\end{equation*}
\vskip2mm

\begin{theorem}\label{thm:ind.loc.1p.IID}
For any $m>0$, $p>16d$ and $L_0>2$ there exist $g_0 < \infty$ and $b >0$ such that for any $g$ with $|g|\ge g_0$,  all $k\ge 0$ and any ball $\ball_{L_{k}}(u)\subset \DZ^d$,
\be\label{eq:thm.ind.1p.loc.IID}
\pr{ \text{ $\ball_{L_{k}}(u)$ is } \mathnloc } \le L_{k}^{-p(1+ b )^k}.
\ee
\end{theorem}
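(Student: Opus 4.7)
The plan is a straightforward induction on the scale index $k\ge 0$, stitching together the initial scale bound of Lemma \ref{lem:L.0.IID} with the Main Inductive Lemma \ref{lem:induct.IID}. First I would fix $\alpha=4/3$ (as stated in Section \ref{sec:adapt.mixing} and implicitly used throughout Section \ref{sec:SimpleScaleInd}) and choose the exponent $b>0$ small enough to satisfy the compatibility conditions \eqref{eq:cond.p.b}. Since $p>16d = \frac{2\alpha^{2}d}{2-\alpha^{2}}$ by hypothesis, the quantity $\frac{1}{24} - \frac{2d}{3p}$ is strictly positive, so one can pick any $b$ in the open interval $\bigl(0,\frac{1}{24}-\frac{2d}{3p}\bigr)$; this matches the condition $3b \le \frac{2-\alpha^{2}}{\alpha^{2}} - \frac{2d}{p}$ written in Lemma \ref{lem:induct.IID}.

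Next I would fix $L_{0}$ large enough for all asymptotic inequalities invoked in Section \ref{sec:SimpleScaleInd} (in particular, those appearing in Lemmas \ref{lem:few.nloc.and.few.R.imply.nloc.IID}, \ref{lem:NT.implies.nloc.IID}, and the final estimate of Lemma \ref{lem:induct.IID}) to be valid at every scale $L_k$. With $p$, $b$, $L_0$ fixed, I would then invoke Lemma \ref{lem:L.0.IID} to obtain the coupling threshold $g_{0}<\infty$ such that $|g|\ge g_{0}$ implies the base case
\[
\pr{ \ball_{L_{0}}(u) \text{ is } \mathnloc } \le L_{0}^{-p} = L_{0}^{-p(1+b)^{0}},
\]
which is \eqref{eq:thm.ind.1p.loc.IID} at $k=0$.

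For the inductive step, I assume that \eqref{eq:thm.ind.1p.loc.IID} holds for every $k'\le k$; this is precisely the hypothesis \eqref{eq:lem.induct.IID} of the Main Inductive Lemma. Applying Lemma \ref{lem:induct.IID} yields, for any $w\in\DZ^{d}$,
\[
\pr{ \ball_{L_{k+1}}(w) \text{ is } \mathnloc } \le \tfrac{1}{4}\, L_{k+1}^{-p(1+b)^{k+1}} \le L_{k+1}^{-p(1+b)^{k+1}},
\]
which closes the induction and delivers \eqref{eq:thm.ind.1p.loc.IID} at scale $k+1$.

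The only subtlety — and what I expect to be the main (but minor) obstacle — is to verify that the conditions \eqref{eq:cond.p.b} on $p$ and $b$ can be satisfied \emph{simultaneously} for the regime $p>16d$ stated in the theorem, and that all the scale-dependent smallness assumptions (e.g.\ the ones used inside Lemma \ref{lem:few.nloc.and.few.R.imply.nloc.IID} and in the bound \eqref{eq:prob.S2.not.R2} absorbing the combinatorial factor $(2L_{k+1}+1)^{2d}$) hold uniformly in $k$ once $L_{0}$ is chosen sufficiently large. Both of these are purely numerical checks: the first reduces, with $\alpha=4/3$, to $b<\frac{1}{24}-\frac{2d}{3p}$ (which is positive under $p>16d$); the second is immediate because the super-exponentially growing probability exponent $p(1+b)^{k}$ dominates any polynomial factor in $L_{k+1}$ provided $L_{0}$ is large. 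No additional probabilistic input is needed beyond what has already been established.
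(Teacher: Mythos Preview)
Your proposal is correct and follows exactly the same approach as the paper: the paper's own proof is a single sentence stating that the claim follows by induction from Lemma \ref{lem:L.0.IID} (base case) and Lemma \ref{lem:induct.IID} (inductive step). Your write-up simply spells out in more detail the choice of $b$ via \eqref{eq:cond.p.b} and the largeness requirement on $L_0$, neither of which the paper makes explicit in its proof of the theorem.
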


\proof
The claim follows by induction from Lemma \ref{lem:induct.IID} and Lemma \ref{lem:L.0.IID}.
\qedhere

\medskip
{\sl
Theorem \ref{thm:ind.loc.1p.IID}  establishes the \textbf{exponential localization of all eigenfunctions} of operator $H_{\ball_\CL}(\om)$ in an arbitrarily large ball $\,\ball_\CL$ with high probability. The lower bounds on the eigenfunction decay exponent $m>0$ as well as the decay exponent $p>0$ for  the probability in \eqref{eq:thm.ind.1p.loc.IID}, are uniform in $\CL\ge L_0$. This makes the statement of Theorem \ref{thm:ind.loc.1p.IID} sufficient for applications to physical models of disordered media of arbitrarily large size, whether it is a 45-{\rm nm} film of diameter $\sim$ 2{\,\rm mm} (approx. $10^{10}$ lattice bond units, corresponding  to the size of a modern CPU chip, which requires 2-3 steps of scale induction) or a sample of the size of the Milky Way (depending on the initial scale $L_0$, it may require from 5-6 to 10-12 scaling steps).
}
\medskip

In the next subsection we translate the results of previous sections into the language of eigenfunction correlators. Unlike the Fractional-Moment Method, the Multi-Scale Analysis does not provide exponential decay of EF correlators; usually one proves a polynomial decay with a fixed exponent. Owing to a stronger probabilistic bound of "unwanted" events in finite balls, of the form $L_k^{-p(1+b)^k}$, we will be able to prove a slightly stronger decay bound for the EF correlators (cf. Theorem \ref{thm:DL.IID}).

\subsection{Strong dynamical localization in finite volumes}
\label{ssec:SimpleDL.finite.L}

Now we will derive uniform upper bounds on EF correlators in finite, but arbitrarily large balls
from the MSA bounds, using a simplified version of the Germinet--Klein argument
\cite{GK01}. Recall that originally the implication "MSA $\Rightarrow$ DL" has been
proven by Germinet--De Bi\`{e}vre \cite{GD98} and by Damanik--Stollmann \cite{DS01}
(strong dynamical localization). Formally, Germinet and Klein \cite{GK01} considered differential
operators in $\DR^d$, but an adaptation of their technique to lattice models is not
difficult. Moreover, it becomes quite elementary when operators in
finite balls are considered. Generally speaking, it suffices that finite-volume
operators have compact resolvent; on the lattice, the operators $H_{\ball_L(u)}$ are even
finite-dimensional and have a finite orthogonal eigenbasis. This allows to avoid an analysis of Hilbert--Schmidt norms of their spectral projections (inevitable in the entire
lattice/Euclidean space) and to replace it with an elementary application of Bessel's
inequality.

Denote by $\csB_1(I)$ the set of all Borel functions $\phi:\DR\to\DC$ with
${\rm supp}\,\phi\subset I$ and $\|\phi\|_\infty \le 1$.

\begin{theorem}\label{thm:DL.IID}
 Fix an integer $L\in\DN^*$ and assume that  the following bound holds for any pair of disjoint balls $\ball_L(x), \ball_L(y)$:
$$
\pr{ \exists\, E\in I:\, \text{ $\ball_L(x)$ and $\ball_L(y)$ are $(E,m)$-S} } \le f(L).
$$
Then for any $x,y\in\DZ^d$ with $\rd(x,y)> 2L+1$, any connected subset
$\Lam\supset\ball_L(x) \cup \ball_L(y)$ and any Borel function $\phi\in\csB_1(I)$
\be\label{eq:thm.MSA.to.DL}
\esm{ \langle \big|\one_{x} | \phi(H_\Lam(\om)) | \one_{y} \rangle \big| }
\le CL^d \eul^{-mL} + f(L).
\ee
\end{theorem}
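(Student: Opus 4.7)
The plan is to decompose the matrix element of $\phi(H_\Lam(\om))$ in the eigenbasis of $H_\Lam(\om)$, isolate a low-probability bad event, and on its complement exploit the Geometric Resolvent Inequality \eqref{eq:GRI.EF} together with Bessel's inequality to get an $L^d e^{-mL}$ factor. Fixing real-valued normalized eigenfunctions $\{\psi_j\}$ of $H_\Lam(\om)$ with eigenvalues $E_j$ and using $\|\phi\|_\infty\le 1$, one has
\[
|\langle \one_x | \phi(H_\Lam(\om)) | \one_y\rangle| \le \sum_{j:\,E_j \in I} |\psi_j(x)\psi_j(y)|.
\]
Introduce the event $\CE = \{\exists E\in I:\ \ball_L(x) \text{ and } \ball_L(y)\text{ are both } (E,m)\text{-S}\}$, whose probability is at most $f(L)$ by hypothesis. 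On $\CE$, I bound the matrix element trivially by $\|\phi(H_\Lam)\|\le 1$, contributing at most $f(L)$ to the expectation.

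On $\Om\setminus\CE$, for every $j$ with $E_j\in I$ at least one of $\ball_L(x)$, $\ball_L(y)$ is $(E_j,m)$-NS; partition this index set as $J_x \sqcup J_y$, where $J_x$ collects the indices for which $\ball_L(x)$ is $(E_j,m)$-NS and $J_y$ the remaining ones (for which then $\ball_L(y)$ is $(E_j,m)$-NS). For $j\in J_x$, applying \eqref{eq:GRI.EF} at $u=x$ with radius $L$ and invoking Definition \ref{def:S.IID} to bound the boundary Green function gives, for $L$ large,
\[
|\psi_j(x)| \le e^{-\gamma(m,L)L + 2L^\beta}\ \max_{v\in \pt^+\ball_L(x)} |\psi_j(v)| \le e^{-mL}\ \max_{v\in \pt^+\ball_L(x)} |\psi_j(v)|.
\]

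The key step, squaring and summing over $J_x$ and using Parseval for the ONB $\{\psi_j\}$ of $\ell^2(\Lam)$ at each boundary site,
\[
\sum_{j\in J_x} |\psi_j(x)|^2 \le e^{-2mL}\sum_{v\in \pt^+\ball_L(x)}\sum_j |\psi_j(v)|^2 \le |\pt^+\ball_L(x)|\,e^{-2mL} \le C L^{d-1} e^{-2mL},
\]
then Cauchy--Schwarz combined with Bessel's inequality $\sum_j |\psi_j(y)|^2 \le 1$ yields
\[
\sum_{j\in J_x}|\psi_j(x)\psi_j(y)| \le \Bigl(\sum_{j\in J_x}|\psi_j(x)|^2\Bigr)^{1/2}\Bigl(\sum_{j\in J_x}|\psi_j(y)|^2\Bigr)^{1/2} \le C\,L^{(d-1)/2}\,e^{-mL}.
\]
A symmetric bound holds for $J_y$. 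Summing the two contributions and absorbing constants and the factor $L^{(d-1)/2}$ into $CL^d$, then taking expectation with the $f(L)$ contribution from $\CE$, produces \eqref{eq:thm.MSA.to.DL}.

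The main obstacle is obtaining the polynomial prefactor $L^d$ rather than a prefactor of order $|\Lam|^{1/2}$: a naive Cauchy--Schwarz on the whole index set $J_x$ would yield $|\Lam|^{1/2}$, which is useless for $\Lam$ arbitrarily large. The trick is to first use the GRI to push the eigenfunction value at $x$ onto the boundary $\pt^+\ball_L(x)$ of size $\sim L^{d-1}$, and only then apply Parseval, so that the dimension of $\Lam$ never enters the bound. This is precisely the simplification permitted by working with finite-volume operators, which have an orthonormal eigenbasis, and which the author highlights as replacing the Hilbert--Schmidt analysis of \cite{GK01}.
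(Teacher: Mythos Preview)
Your proof is correct and follows the same overall strategy as the paper: split off the bad event of probability $\le f(L)$, and on its complement use the GRI \eqref{eq:GRI.EF} to push the eigenfunction value at the non-singular center onto the boundary of the $L$-ball, then invoke Bessel's inequality so that only the boundary cardinality $\sim L^{d-1}$, and not $|\Lambda|$, enters the estimate. The technical packaging differs slightly: you partition the eigenvalue index set as $J_x\sqcup J_y$, square, sum, and apply Cauchy--Schwarz, obtaining the prefactor $L^{(d-1)/2}$; the paper instead keeps a single sum, writes $|\psi_i(z_i)\psi_i(v_i)|$ with $z_i\in\{x,y\}$ the non-singular center, expands the boundary sum over $S=\pt\ball_L(x)\cup\pt\ball_L(y)$, and uses the elementary inequality $ab\le\tfrac12(a^2+b^2)$ before applying Bessel, obtaining the prefactor $|S|\le CL^d$. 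Your route yields a marginally sharper polynomial prefactor, but both are absorbed into $CL^d$ and the arguments are otherwise equivalent.
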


\proof Fix points $x,y\in\DZ^d$ with $\rd(x,y)> 2L+1$ and a finite connected graph
$\Lam\supset\ball_L(x) \cup \ball_L(y)$. The operator $H_\Lam(\om)$
has a finite orthonormal eigenbasis $\{\psi_i\}$ with respective eigenvalues
$\{\lam_i\}$. Set $S = \pt \ball_L(x) \cup \pt \ball_L(y)$ (recall: this is a set of \emph{pairs}
$(u,u')$). Suppose that for some $\om$, for each $i$ there is $z\in \{x,y\}$ such that
$\ball_L(z_i)$ is $\lam_i,m)$-NS; let $\{v_i \}= \{x,y\}\setminus \{z_i\}$.
Denote
$\mu_{x,y}(\phi) = \langle \big|\one_{x} | \phi(H_\Lam(\om)) | \one_{y} \rangle$,
with $\big|\mu_{x,y}(\phi)\big|\le 1$.
Then by the GRI for eigenfunctions,
and by Bessel's inequality used at the last stage of derivation,
$$
\bal
 \big|\mu_{x,y}(\phi)\big|
& \le \|\phi\|_\infty \, \sum_{\lam_i \in I} |\psi_i(x) \psi_i(y)|
\le \sum_{\lam_i \in I} |\psi_i(z_i) \psi_i(v_i)|
\\
& \le \sum_{\lam_i \in I} |\psi_i(v_i)| \eul^{-mL}
      \sum_{(u,u')\in\pt \ball_L(z_i)} |\psi_i(u)|
\qquad\qquad\qquad\qquad\qquad\qquad
\\
& \le \eul^{-mL} \sum_{\lam_i \in I} \;\sum_{(u,u')\in S}
      |\psi_i(u)| \left(|\psi_i(x)| + |\psi_i(y)|  \right)
%
\eal
$$
$$
\bal
&\le \eul^{-mL} |S| \, \max_{u\in\Lam} \sum_{\lam_i \in I}
    \half \left( |\psi_i(u)|^2 + |\psi_i(x)|^2 + |\psi_i(y)|^2 \right)
\\
& \le  \eul^{-mL} \frac{|S|}{2} \, \max_{u\in\Lam}
       \left( 2\|\one_u\|^2 + \|\one_x\|^2 + \|\one_y\|^2 \right)
%
= \eul^{-mL} |S| \cdot 2
\eal
$$
where $|S|\le C L^d$.
Denote
$\CN_L = \myset{\exists\, E\in I:\, \text{ $\ball_L(x)$ and $\ball_L(y)$ are $(E,m)$-S}}$,
with $\pr{\CN_L}\le f(L)$, by assumption. Further,
$$
\esm{ \mu_{x,y}(\phi) } = \esm{ \one_{\CN_L}\mu_{x,y}(\phi) }
   + \esm{ \one_{\CN_L}\mu_{x,y}(\phi) }  \le f(L) + 2CL^d \eul^{-mL}.
\qedhere
$$

\qedhere
\vskip2mm

\subsection{Strong dynamical localization on the entire lattice}
\label{ssec:SimpleDL.lattice}

Here we follow the same strategy as in earlier works by Aizenman et al.
\cite{A94,ASFH01}.

\begin{theorem}\label{thm:DL.lattice.IID}
Consider the Hamiltonian $H(\om)$ of the form \eqref{eq:H} with random potential satisfying the assumption \Wone, and fix an interval $I\subset\DR$.  There exists a $L_0^* < \infty$ with the following property:
if the conditions \eqref{eq:L0.GF.IID} and \eqref{eq:L0.loc.IID} hold for some  $L_0 \ge L_0^*$, then for all $x,y\in\DZ^d$, $x\ne y$,  and some $c, a>0$,
\be\label{eq:DL.bound.any.s.lattice.IID}
\esm{ \sup_{\|f\|_{\infty}\le 1} \,
\Big| \langle x \,|\, f(H_{}(\omega)) \BP_I(H_{}(\omega)) \,|\, y\rangle \Big| }
 \le \Const\, \|x-y\|^{-a \ln^c \|x-y\| }.
\ee
\end{theorem}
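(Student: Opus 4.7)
The plan is to combine the finite-volume eigenfunction-correlator bound of Theorem \ref{thm:DL.IID} with the two-ball probability estimate from the Main Inductive Lemma (Lemma \ref{lem:induct.IID}), then pass from finite to infinite volume by a standard spectral convergence argument, and finally optimize the scale $L_k$ in terms of $\|x - y\|$ to produce the advertised sub-polynomial rate.

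First I would reduce the LHS of \eqref{eq:DL.bound.any.s.lattice.IID} to a total-variation estimate. Any Borel $f$ with $\|f\|_\infty \le 1$ yields $\phi := f \cdot \one_I \in \csB_1(I)$ and $f(H)\BP_I(H) = \phi(H)$, so
$$
\sup_{\|f\|_\infty \le 1} \bigl| \langle x | f(H(\om)) \BP_I(H(\om)) | y \rangle \bigr| = \sup_{\phi \in \csB_1(I)} \bigl| \langle x | \phi(H(\om)) | y \rangle \bigr|.
$$
Inspecting the proof of Theorem \ref{thm:DL.IID}, the pointwise bound on the complement of the bad event $\CN_L$ uses only $\|\phi\|_\infty \le 1$ and is therefore uniform in $\phi \in \csB_1(I)$; hence for any finite connected $\Lam \supset \ball_L(x) \cup \ball_L(y)$,
$$
\esm{\, \sup_{\phi \in \csB_1(I)} \bigl| \langle x | \phi(H_\Lam(\om)) | y \rangle \bigr| \,} \le C L^d e^{-mL} + f_2(L),
$$
with $f_2(L) := \pr{\exists\, E \in I:\ \ball_L(x),\, \ball_L(y)\ \text{are both}\ (E,m)\text{-S}}$.

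Next I would pass to the full lattice by choosing an exhaustion $\Lam_n \nearrow \DZ^d$. Since $H_{\Lam_n}(\om) \to H(\om)$ in strong resolvent sense, for continuous $\phi$ of compact support $\langle x | \phi(H_{\Lam_n}) | y \rangle \to \langle x | \phi(H) | y \rangle$; approximating $\one_I$ monotonically from inside by continuous bumps, using the uniform finite-volume bound, the lower semicontinuity of $I \mapsto \| \mu^{x,y} \|_{\R{TV}}(I)$ under such limits, and Fatou's lemma, one obtains
$$
\esm{\, \bigl\| \mu^{x,y}_\om \bigr\|_{\R{TV}}(I) \,} \le \liminf_{n \to \infty} \esm{\, \bigl\| \mu^{x,y}_{\Lam_n} \bigr\|_{\R{TV}}(I) \,} \le C L^d e^{-mL} + f_2(L).
$$
Setting $L = L_k$ and inserting the bound $f_2(L_k) \le \half L_k^{-p(1+b)^{k+1}}$ from Lemma \ref{lem:induct.IID} iterated from the initial-scale inputs \eqref{eq:L0.GF.IID}--\eqref{eq:L0.loc.IID} then gives
$$
\esm{\, \bigl\| \mu^{x,y}_\om \bigr\|_{\R{TV}}(I) \,} \le C L_k^d e^{-m L_k} + \tfrac12 L_k^{-p(1+b)^{k+1}}.
$$

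Finally I would optimize the scale. Given $R := \|x - y\|$ large, choose $k = k(R)$ so that $L_k$ is the largest scale with $2 L_k + 1 < R$; the recursion $L_{k+1} \sim L_k^\alpha$ then forces $L_k$ to lie between a constant multiple of $R^{1/\alpha}$ and $R/2$, and the term $C L_k^d e^{-m L_k}$ is stretched-exponentially small in $R$, hence negligible. From $L_k = L_0^{\alpha^k}$ we get $\alpha^k = \ln L_k / \ln L_0$, whence
$$
(1+b)^{k+1} = (1+b) \cdot \bigl( \ln L_k / \ln L_0 \bigr)^{\ln(1+b)/\ln \alpha};
$$
setting $c := \ln(1+b)/\ln \alpha > 0$ and absorbing constants gives
$$
L_k^{-p(1+b)^{k+1}} \le \Const \cdot R^{-a\, \ln^c R}
$$
for some $a > 0$, which is exactly the bound \eqref{eq:DL.bound.any.s.lattice.IID}. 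The hard step, in my view, is the infinite-volume passage: one must verify that a supremum over possibly discontinuous test functions is controlled by its finite-volume analogues. The cleanest route is to approximate $\one_I$ from inside by smooth bumps $\chi_\epsilon$, use strong resolvent convergence to conclude $\chi_\epsilon(H_{\Lam_n}) \to \chi_\epsilon(H)$ strongly, apply the uniform finite-volume bound, and let $\epsilon \to 0$; alternatively, one can invoke directly the framework of \cite{A94,ASFH01}, which is tailor-made for this situation.
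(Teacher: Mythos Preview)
Your proposal is correct and follows essentially the same route as the paper: invoke the finite-volume EF-correlator bound (Theorem~\ref{thm:DL.IID}) together with the two-ball singularity estimate \eqref{eq:lem.induct.next.S.IID}, pass to the full lattice via convergence of spectral measures and Fatou (the paper phrases this as vague convergence of $\mukxy$ to $\muxy$, citing \cite{ASFH01}, rather than your strong-resolvent formulation, but the content is the same), and optimize the scale $L_k$ against $\|x-y\|$. Your write-up is in fact more explicit than the paper's, which leaves the scale-optimization step and the resulting exponent $c=\ln(1+b)/\ln\alpha$ entirely to the reader.
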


\proof
It suffices to use an argument given earlier in \cite{ASFH01}.
For any  ball $\ball$ and any points $x,y\in\ball$ introduce a spectral measure $\mu^{x,y}_{\ball,\om}$ uniquely defined, for any bounded Borel function $f$, by
$$
\int\, f(\lambda)\, d\mubxy(\lambda)
= \bra{\delta_x} f(H_\ball(\omega)) \BP_I(H_{\ball}(\omega)) \ket{\delta_y},
$$
and similar spectral measures $\muxy (\equiv \mu^{x,y}_{\DZ^d,\om})$ for the operator $H(\om)$ on the entire lattice.
Then $\mukxy$ converge vaguely to $\muxy$ as $k\to\infty$, so that by virtue of Fatou lemma on convergent measures, for any measurable set $\CE\subset\DR$
$$
\esm{ |\muxy| (\CE) } \le \liminf_{k\to\infty} \; \esm{ |\mukxy| (\CE) }.
$$
Taking functions $f_t: \lambda\mapsto e^{it\lambda}$, $t\in\DR$, we see that the uniform bounds on dynamical localization in finite volumes $\ball_{L_k}(0)$, established in the previous sections, imply the dynamical localization on the entire lattice.
\qedhere

Theorem \ref{thm:DL.lattice.IID} leads directly to the following, more traditional form of dynamical localization. Let $\B{X}$ be a multiplication operator defined by
$(\B{X}f)(x):= (1+\|x\|)f(x)$.

\begin{theorem}
Under the assumptions of Theorem \ref{thm:DL.lattice.IID}, there exist $a,c>0$ such that
for any finite subset $K\subset\DZ^d$ and any finite interval $I\subset\DR$
\begin{equation}\label{eq:thm.DL.lattice.IID.propagator}
\expect\left[ \sup_{t\in\D{R}} \;\left\| e^{a \ln^{1+c}\B{X}} \, \eul^{-\ii tH(\omega)}
          P_{I }(H(\omega))
\one_{K}\right\|\right] < \infty.
\end{equation}
\end{theorem}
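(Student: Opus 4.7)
The plan is to deduce the weighted-propagator bound from the matrix-element estimate in Theorem \ref{thm:DL.lattice.IID} by a Hilbert--Schmidt reduction, a Tonelli interchange, and a summability estimate. The key observation is that $\one_K$ projects onto a finite-dimensional subspace (of dimension $|K|$), so any operator of the form $A\one_K$ has operator norm bounded by its Hilbert--Schmidt norm, which in turn is a manifestly nonnegative double sum of squared matrix elements—exactly the quantity controlled by the eigenfunction correlator bound already established on $\DZ^d$.

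First I would introduce the shorthand $W(x) = \ee^{a \ln^{1+c}(1+\|x\|)}$ and
$A_t(\om) = W(\B{X})\, \ee^{-\ii t H(\om)}\, \BP_I(H(\om))\,\one_K$, and write
\[
\|A_t(\om)\|^2 \le \|A_t(\om)\|_{HS}^2 = \sum_{y\in K}\sum_{z\in\DZ^d} W(z)^2 \, \big|\langle \delta_z,\, \ee^{-\ii tH(\om)} \BP_I(H(\om))\delta_y\rangle\big|^2.
\]
Each matrix element is bounded in modulus by $1$, so the square can be dropped at the cost of a single power. All summands being nonnegative, I may pull $\sup_{t\in\DR}$ inside the double sum; then $\esm{(\cdot)^{1/2}} \le \big(\esm{(\cdot)}\big)^{1/2}$ followed by Tonelli gives
\[
\esm{ \sup_{t\in\DR} \|A_t(\om)\| } \le \Big(\sum_{y\in K}\sum_{z\in\DZ^d} W(z)^2 \,\esm{\sup_{t\in\DR}\big|\langle \delta_z,\, \ee^{-\ii tH(\om)} \BP_I(H(\om))\delta_y\rangle\big|}\Big)^{1/2}.
\]
Since $\lambda\mapsto \ee^{-\ii t\lambda}$ has modulus at most $1$, the inner expectation falls within the class $\csB_1(I)$, and Theorem \ref{thm:DL.lattice.IID} (applied with some constants $a_0, c_0>0$) bounds it by $\Const\cdot\ee^{-a_0 \ln^{1+c_0}\|z-y\|}$.

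What remains is to verify the summability
\[
\sum_{y\in K}\sum_{z\in\DZ^d} \ee^{2a\ln^{1+c}(1+\|z\|) - a_0 \ln^{1+c_0}\|z-y\|} < \infty,
\]
and it is here that the constants $a,c$ asserted by the theorem must be chosen: one picks $c\in(0,c_0)$ and $a$ sufficiently small relative to $a_0$. For $z$ outside a large enough ball around $K$ one has $\|z-y\|\ge \tfrac12\|z\|$ for all $y\in K$, hence $\ln^{1+c_0}\|z-y\|\ge \tfrac12 \ln^{1+c_0}\|z\|$ for $\|z\|$ large. Since $1+c<1+c_0$, the negative term dominates the positive one: the exponent is eventually bounded above by $-\tfrac{a_0}{4}\ln^{1+c_0}\|z\|$, giving faster-than-polynomial decay in $\|z\|$, and the sum over $y\in K$ is finite by hypothesis.

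The main (and essentially only) obstacle is this balance between the weight $W$ and the correlator decay: one must insist that the exponent $c$ in the weight be strictly smaller than the $c_0$ supplied by Theorem \ref{thm:DL.lattice.IID}, because a Hilbert--Schmidt bound forces the weight $W(z)^2$ to be summed against a pointwise decay rate in $\|z-y\|$ rather than a stronger, integrated rate. Once $c<c_0$ is fixed, every other step—Hilbert--Schmidt domination, Jensen, Tonelli, and the triangle-inequality comparison between $\|z\|$ and $\|z-y\|$—is routine.
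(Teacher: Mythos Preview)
The paper does not actually supply a proof of this statement; it is asserted immediately after Theorem \ref{thm:DL.lattice.IID} as a direct consequence, with no argument given. Your Hilbert--Schmidt reduction is the standard way to make that implication precise, and it is correct. Two inessential remarks: Theorem \ref{thm:DL.lattice.IID} already takes the supremum over all Borel $f$ with $\|f\|_\infty\le 1$ (the class $\csB_1(I)$ appears only in the finite-volume Theorem \ref{thm:DL.IID}), so no support condition on $\lambda\mapsto\eul^{-\ii t\lambda}$ is needed; and once $c<c_0$ is fixed, the term $-a_0\ln^{1+c_0}\|z\|$ dominates $2a\ln^{1+c}\|z\|$ for \emph{any} $a>0$, so the smallness assumption on $a$ is harmless but not actually required.
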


\subsection{Exponential decay of eigenfunctions on the entire lattice}
\label{ssec:SimpleSL.lattice}

The dynamical localization is known to imply pure point spectrum, owing to RAGE theorem(s); see the original papers \cite{R69}, \cite{AG73}, \cite{E78} and their detailed discussion in \cite{CFKS87}. This allows to consider in  Theorem \ref{thm:Main.SL} below, from the beginning, a square summable (hence, bounded) eigenfunction $\psi$ on the lattice $\DZ^d$, avoiding a usual reference to a Shnol-type theorem stating that spectrally a.e. generalized eigenfunction is polynomially bounded.
The general strategy goes back to \cite{DK89}; using the "Radial Descent lemma" (Lemma \ref{lem:SubH.IID}) and making a small concession in probability bounds (which experts in MSA may notice) results in a shorter and more transparent proof.

\begin{theorem}\label{thm:Main.SL}
For $\DP$-a.e. $\om\in\Om$ every normalized eigenfunction $\psi$
of operator $H(\om)$ satisfies the following bound: for some $R(\om)$ and all
$y$ with $\|y\|\ge R(\om)$
\be\label{eq:thm.Main.SL}
  |\psi(y)| \le e^{-m\|y\|}.
\ee
\end{theorem}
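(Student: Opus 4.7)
The plan is to combine the probabilistic bound \eqref{eq:lem.induct.next.S.IID} on pairs of $(E,m)$-singular balls with the deterministic Radial Descent Lemma \ref{lem:SubH.IID}, following the strategy of \cite{DK89} cited in the preamble. As a preliminary, dynamical localization (Theorem \ref{thm:DL.lattice.IID}) combined with the RAGE argument recalled just before the theorem yields pure point spectrum for $H(\om)$ almost surely, so every normalized eigenfunction $\psi$ lies in $\ell^2(\DZ^d)$ and in particular satisfies $|\psi(x)|\le 1$ for every $x$. This uniform bound will play the role of the ``$\max f\le 1$'' needed to close the subharmonic estimate on the infinite lattice, even though $\psi$ is not itself a finite-volume eigenfunction.

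The next step is a Borel--Cantelli argument. For each $k$, summing the bound \eqref{eq:lem.induct.next.S.IID} over all pairs of disjoint $L_k$-balls contained in $\ball_{L_k^2}(0)$ (a volume factor $\le CL_k^{2d}$), together with the analogous Wegner-type bound \eqref{eq:W2.IID} for pairs of $E$-PR $L_k$-balls, yields a total probability $\le CL_k^{2d}\cdot L_k^{-p(1+b)^{k+1}}$, summable in $k$ since the exponent grows double-exponentially. Consequently, on a full-measure event $\Om^*$ there is a finite random scale $k_0(\om)$ such that for every $k\ge k_0$ and every $E\in\DR$ the ball $\ball_{L_k^2}(0)$ contains at most one $(E,m)$-S $L_k$-ball and at most one $E$-PR $L_k$-ball.

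The third step is the descent itself. Fix $\om\in\Om^*$, a normalized eigenfunction $\psi$ with eigenvalue $E$, and a point $y$ with $\|y\|\in[L_k^2, L_{k+1}^2)$ for some $k\ge k_0$; since the intervals $[L_k^2,L_{k+1}^2)$ cover $[L_{k_0}^2,\infty)$, this exhausts all sufficiently large $\|y\|$. Build a chain of centres $y=y_0,\,y_1,\,\ldots,\,y_N$ proceeding inward with $N\sim\|y\|/L_k$ steps of size $\sim L_k$, arranged so that every $\ball_{L_k}(y_j)$ is simultaneously $(E,m)$-NS and $E$-NR; the good event makes this feasible, because at most one such ball in the region is $(E,m)$-S and at most one is $E$-PR, and both may be skirted by an $O(L_k)$ lateral shift. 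Iterating the eigenfunction GRI \eqref{eq:GRI.EF} with the bound \eqref{eq:def.NS.IID.implies} at each step gives
$$
|\psi(y)| \;\le\; \bigl(e^{-\gamma(m,L_k)L_k+2L_k^\beta}\bigr)^{N}\cdot 1 \;\le\; e^{-m\|y\|},
$$
where the last inequality uses $\gamma(m,L_k)-m=mL_k^{-1/8}$ to absorb the $O(L_k/\|y\|)\le O(L_k^{-1})$ length correction from the rerouting and the $O(L_k^{\beta-1})=O(L_k^{-1/2})$ correction from the boundary prefactor, both of which are quantitatively dominated by $L_k^{-1/8}$ for $L_k$ large. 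Setting $R(\om):=L_{k_0(\om)}^2$ then delivers \eqref{eq:thm.Main.SL}.

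The main obstacle is the \textbf{quantitative geometric bookkeeping} of the descent chain around the (at most one) exceptional $L_k$-ball of each type. The choice $\|y\|\in[L_k^2, L_{k+1}^2)$ rather than $[L_k, L_{k+1})$ is precisely what makes the ratio $L_k/\|y\|\le L_k^{-1}$ small enough that the $O(L_k)$ rerouting cost and the polynomial prefactors are comfortably absorbed by the strict excess $mL_k^{-1/8}$ hard-wired into Definition \ref{def:S.IID}. This is the ``small concession in probability bounds'' alluded to in the paragraph preceding the theorem: one analyses a slightly enlarged region $\ball_{L_k^2}(0)$, paying only an $L_k^{2d}$ volume factor that is trivially swallowed by the double-exponential decay in \eqref{eq:lem.induct.next.S.IID}, and gaining in return the quantitative slack needed to preserve the sharp decay exponent $m$ in \eqref{eq:thm.Main.SL}.
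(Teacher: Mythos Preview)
Your overall strategy---Borel--Cantelli on pairs of singular balls, then a subharmonic descent via the GRI---matches the paper's. But you omit the one step that makes the descent go through in all dimensions: the paper first \emph{locates} the unique $(E,m)$-S ball. Since $\psi\in\ell^2$, there is a point $\hat{x}$ where $|\psi|$ is maximal; if $\ball_{L_k}(\hat{x})$ were $(E,m)$-NS, the eigenfunction GRI would give $|\psi(\hat{x})|\le q\,|\psi(\hat{x})|$ with $q<1$, forcing $\psi\equiv 0$. Hence $\ball_{L_k}(\hat{x})$ is the singular one, and for all $k$ large enough $\hat{x}\in\ball_{L_{k-1}}(0)$. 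Thus \emph{every} ball $\ball_{L_k}(y')$ with $\|y'\|>2L_k$ is $(E,m)$-NS, and the paper simply invokes Lemma~\ref{lem:SubH.IID} in $\ball_R(y)$ with $R=\|y\|-2L_k-1$: no chain construction, no skirting. Your ``$O(L_k)$ lateral shift'' around the bad ball is a $d\ge 2$ manoeuvre; in $d=1$, if the singular ball sits at some $w$ strictly between $0$ and $y$ (and you have not excluded this), the chain halts after $\sim(\|y\|-w)/L_k$ steps and yields only $e^{-m(\|y\|-w)}$, which is not \eqref{eq:thm.Main.SL}. Even for $d\ge 2$ you still need the very first ball $\ball_{L_k}(y)$ to be NS before the chain can start, and that again follows once the singular ball is pinned near the origin.

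Two smaller points. First, your bookkeeping has a mismatch: the good event is stated for $\ball_{L_k^2}(0)$, yet you take $\|y\|$ up to $L_{k+1}^2$, so the chain near $y$ lies outside the controlled region (and the pair count there is $\sim L_k^{4d}$, not $L_k^{2d}$, though this is harmless). The paper instead uses $\|y\|\in[L_{k+1},L_{k+2})$ with the good event on $\ball_{2L_{k+2}}(0)$; the ``small concession'' mentioned before the theorem refers to this modestly enlarged region, not to passing all the way to $L_k^2$. Second, there is no need to track $E$-PR balls separately: for the \emph{eigenfunction} descent only the $(E,m)$-NS property of the small balls is used (cf.\ Lemma~\ref{lem:cond.SubH.IID}\,(A)), and the global maximum is already bounded by $\|\psi\|_2=1$, so a single Borel--Cantelli on \eqref{eq:lem.induct.next.S.IID} suffices.
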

\proof
By  Borel--Cantelli lemma combined with \eqref{eq:lem.induct.next.S.IID}, there is a subset $\Om'\subset\Om$ with $\pr{\Om'}=1$ such that for any $\om\in\Om'$ and some $k_0(\om)$,  all
$k\ge k_0$ and any $E\in\DR$ there is no pair of disjoint $(E,m)$-S balls $\ball_{L_k}(x), \ball_{L_k}(y)\subset \ball_{2L_{k+2}}(0)$. Fix $\om\in\Om'$.

Let $\psi_n$ be an eigenfunction of $H(\om)$ with eigenvalue $E_n$.
If $\hx_n\in\ball_{L_{k-1}}(0)$, then $\ball_{L_k}(0)$ is $(E_n,m)$-S, so any ball $\ball_{L_k}(y)\subset \ball_{2L_{k+2}}(0)$ with $\|y\|>2L_k$ is $(E_n,m)$-NS.

For any $y$ with $\|y\| \ge L_{k_0+1}$ there is $k\ge k_0$ such that $\|y\|\in[L_{k+1}, L_{k+2})$. Fix $y$, set $R = \|y \| - 2L_k-1$ and observe that the function $x\mapsto |\psi_n(x)|$ is $(L_k,q)$-subharmonic in $\ball_{R}(y)$,
with $q = e^{-\gamma(m,L_k)L_k}$. Now Lemma \ref{lem:SubH.IID}
implies, for $L_k$ large enough,
$$
 \frac{\ln |\psi(y)|}{\|y\|}
\le -m \left(1+L_k^{-1/8} \right) \left(1 - \frac{2L_k+1}{\|y\|}\right)
\le -m \left(1 + \half L_k^{-1/8} \right) < - m
$$
leading to the assertion \eqref{eq:thm.Main.SL}.
\qedhere


\section{Adaptation to low-energy analysis at weak disorder}
\label{sec:adapt.low.E}

If the amplitude of the random potential $V(x;\om)$ is  small, the existing methods allow to establish Anderson localization  only for "extreme" energies. For example, if the (sharp) lower edge of the random potential is given by $E^0 > -\infty$, localization can be established in a narrow interval $I = [E^0, E^0 + \eta]$, with sufficiently small $\eta>0$. Then representation \eqref{eq:G.basis} can no longer be used; it is more convenient to start with the analysis of resolvents
and  modify the notion of "tunneling" balls as follows.

\begin{definition}\label{def:tun.low.E}
Let  an interval $I\subset \DR$ and a number $m>0$ be given.
A ball $ \ball_{\ell^{\alpha}}(u)$ is called $(m,I)$-tunneling ($(m,I)$-T) if, for some $E\in I$, it contains a pair of disjoint $(E,m)$-S balls of radius $\ell$.
Otherwise, it is called $(m,I)$-non-tunneling ($(m,I)$-NT).
\end{definition}

\begin{lemma}[Combes--Thomas estimate]\label{lem:CT}
Consider a lattice Schr\"{o}\-dinger operator $H_\Lambda$ on a subset
$\Lambda\subset \mathbf{Z}^d$. Suppose that for some $E\in\DR$,
$\dist(E, \Sigma(H_\Lambda))\ge \eta>0$. Then for all $x,y\in\Lambda$
\begin{equation}\label{eq:CT.BCH}
|(H - E)^{-1}(x,y)| \le 2\eta^{-1} e^{ - \frac{\eta}{5d} \|x-y\|}.
\end{equation}
\end{lemma}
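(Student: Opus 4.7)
The plan is to apply the classical Combes--Thomas conjugation: introduce an exponentially growing weight centered at $y$, conjugate $H_\Lambda - E$ by it, and show that the conjugated operator remains invertible with norm still of order $1/\eta$, provided the weight grows slowly enough compared to $\eta$.

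First I fix $y\in\Lambda$ and define $\phi(z)=\|z-y\|$ on $\Lambda$. Since $\|\cdot\|$ is the max-norm, $\phi$ is $1$-Lipschitz with respect to the graph distance $\rtd$: whenever $x,z$ are nearest neighbors, $|\phi(x)-\phi(z)|\le 1$. For a parameter $\alpha>0$ to be chosen, introduce the diagonal multiplication operator $M_\alpha$ with entries $M_\alpha(z,z)=e^{\alpha\phi(z)}$, and set $H_\alpha := M_\alpha H_\Lambda M_\alpha^{-1}$. Because $gV$ is diagonal it commutes with $M_\alpha$, so the perturbation $W_\alpha := H_\alpha - H_\Lambda = -M_\alpha\Delta M_\alpha^{-1}+\Delta$ is supported on nearest-neighbor matrix elements: for $\rtd(x,z)=1$ one has
\begin{equation*}
W_\alpha(x,z) = -\Delta(x,z)\bigl(e^{\alpha(\phi(x)-\phi(z))}-1\bigr),
\end{equation*}
and all other entries vanish.

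Next I bound $\|W_\alpha\|$. Using $|e^{\alpha(\phi(x)-\phi(z))}-1|\le e^\alpha-1$ from $1$-Lipschitzness of $\phi$, and noting that every row/column of $\Delta$ has at most $2d$ nonzero entries (each equal to $1$ in absolute value), Schur's test gives $\|W_\alpha\|\le 2d(e^\alpha-1)$. I now choose $\alpha = \eta/(5d)$; using $e^t-1\le t/(1-t)$ on $t\in[0,1)$ together with the (harmless) reduction to the regime $\eta\le d$ (when $\eta>d$, the claimed bound follows already from $\|(H_\Lambda-E)^{-1}\|\le\eta^{-1}$ combined with a weaker conjugation), one checks $2d(e^\alpha-1)\le\eta/2$. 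Since $\dist(E,\Sigma(H_\Lambda))\ge\eta$ ensures that $(H_\Lambda-E)^{-1}$ exists with norm $\le\eta^{-1}$, a Neumann-series argument then yields $\|(H_\alpha-E)^{-1}\|\le 2/\eta$.

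Finally I unwind the conjugation. From $(H_\alpha-E)^{-1}=M_\alpha(H_\Lambda-E)^{-1}M_\alpha^{-1}$, taking the $(x,y)$ matrix element and using $\phi(y)=0$, $\phi(x)=\|x-y\|$,
\begin{equation*}
(H_\Lambda-E)^{-1}(x,y) = e^{-\alpha\|x-y\|}\,\bra{\delta_x}(H_\alpha-E)^{-1}\ket{\delta_y}.
\end{equation*}
Bounding the inner product by the operator norm $2/\eta$ delivers \eqref{eq:CT.BCH}. The only real obstacle is the constant calibration: the factor $5d$ in the exponent has to simultaneously absorb the $2d$ from the lattice coordination number and leave room for the next-order term in $e^\alpha-1$; the simple inequality $e^t-1\le t/(1-t)$ is precisely what makes the choice $\alpha=\eta/(5d)$ admissible.
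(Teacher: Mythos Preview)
Your argument is the standard Combes--Thomas conjugation and coincides with what the paper invokes: the paper does not give a self-contained proof but refers to Kirsch \cite{K07} (where the exponent $\eta/(12d)$ is obtained) and remarks that replacing the crude bound $e^{1/12}<e<3$ by the sharper inequality $\tfrac{2}{5}e^{1/5}<\tfrac12$ (both stemming from $e^{t}-1\le te^{t}$) improves the constant to $5d$. Your calibration via $e^{t}-1\le t/(1-t)$ is an equally valid route to the same threshold $\eta\le d$, so on the main point your proof and the paper's sketch agree.

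There is, however, one slip. Your parenthetical claim that the regime $\eta>d$ is covered by ``$\|(H_\Lambda-E)^{-1}\|\le\eta^{-1}$ combined with a weaker conjugation'' is wrong: a weaker conjugation means a smaller $\alpha$, hence a \emph{slower} exponential decay, whereas \eqref{eq:CT.BCH} asks for decay rate $\eta/(5d)$, which \emph{grows} with $\eta$. In fact the inequality as written simply fails for large $\eta$. Take $H_\Lambda=-\Delta$ on $\DZ$ and $E=-\eta$; the nearest-neighbor Green function equals $e^{-\kappa}/(2\sinh\kappa)$ with $\cosh\kappa=1+\eta/2$, hence is of order $\eta^{-2}$ for large $\eta$, while the right-hand side of \eqref{eq:CT.BCH} is $2\eta^{-1}e^{-\eta/5}\ll\eta^{-2}$. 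The lemma is tacitly meant for bounded $\eta$ (in the paper it is only applied with $\eta=O(L_0^{-1/2})$; cf.\ Lemma~\ref{lem:CT.prob} and Corollary~\ref{cor:CT.prob}). The restriction $\eta\le d$ that your computation naturally produces is therefore the right hypothesis; you should state it and drop the attempted reduction.
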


The proof  of Combes--Thomas estimate \cite{CT} for lattice models can be found, e.g., in
\cite{K07} where it is shown that one can take the exponent  $\frac{\eta}{12d}$. A minor
improvement of the argument from \cite{K07} allows to obtain $\frac{\eta}{5d}$.
(In Eqn (11.10) from \cite{K07}, one can use the inequality $\frac{2}{5}e^{1/5}< \half$
instead of a more generous bound $e^{1/12}< e^1 < 3$.)

\begin{lemma}\label{lem:CT.prob}
Consider random LSO
$H^{(\BN)}_{\ball_\ell(u)}(\omega) = \Delta + V(x;\omega)$ with Neumann boundary conditions. Suppose that the random variables $V(x;\omega)$ are IID, non-negative and non-constant, and that for some $\eta>0$
and $c>0$, all $u\in\DZ^d$  and all  $\ell \ge \ell_0 >0$
\begin{equation}\label{eq:cond.V.CT}
\pr{ |\ball_\ell|^{-1}\, \sum_{x\in \ball_\ell} V(x;\omega)\leq 2\eta }
\leq e^{-c|\ball_\ell|}.
\end{equation}
Then for some $C>0$, $L_0\in\DN$ the lowest eigenvalue
$E^{(\BN)}_0(\omega)$  of $H^{(\BN)}_{\ball_{L_0}(u)}(\omega)$ satisfies
\begin{equation}
\pr{ E^{(\BN)}_0(\omega) \le  2C L_0^{-1/2} } \leq e^{-c|\ball_\ell(u)|^{1/4}}.
\end{equation}
By Dirichlet--Neumann bracketing, the same bound holds true for the lowest eigenvalue of the LSO
$H^{(\BD)}_{\ball_{L_0}(u)}(\omega)$ with Dirichlet boundary conditions.
\end{lemma}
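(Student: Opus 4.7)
The approach is a Lifshitz-tail style argument: Neumann--Dirichlet bracketing reduces the problem to a single-cube estimate, which is then handled by a Temple-style variational bound using the constant trial function, and finally combined with a union bound over cubes.

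\emph{Scale partition and Neumann bracketing.} Fix an intermediate scale $\ell = \ell(L_0)$ to be chosen, and partition $\ball_{L_0}(u)$ into $N \sim (L_0/\ell)^d$ disjoint sub-cubes $\ball_\ell(u_i)$. Deleting the lattice bonds connecting distinct sub-cubes only reduces the quadratic form of $-\Delta^{(\BN)}$, so Neumann bracketing yields $H^{(\BN)}_{\ball_{L_0}(u)} \ge \bigoplus_{i=1}^{N} H^{(\BN)}_{\ball_\ell(u_i)}$ in the form sense, and in particular
\[
E_0^{(\BN)}\bigl(\ball_{L_0}(u);\om\bigr) \ge \min_{1\le i\le N}\, E_0^{(\BN)}\bigl(\ball_\ell(u_i);\om\bigr).
\]

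\emph{Single-cube bound, scale choice, and union bound.} On each sub-cube use the constant trial function $\psi_\star = |\ball_\ell|^{-1/2}\one$, for which $-\Delta^{(\BN)}\psi_\star = 0$ so that $\langle\psi_\star, H\psi_\star\rangle = |\ball_\ell|^{-1}\sum_x V(x;\om)$ and $\mathrm{Var}_{\psi_\star}(H)$ coincides with the sample variance of $V$ on $\ball_\ell(u_i)$. The first non-trivial Neumann eigenvalue on $\ball_\ell$ is of order $\ell^{-2}$; if the average of $V$ is strictly less than a fraction of this value, Temple's inequality yields a lower bound of the form $c\,|\ball_\ell|^{-1}\sum_x V(x;\om) - O(\mathrm{Var}\cdot\ell^2)$, whereas in the complementary regime any candidate ground state must decompose as $\psi = c_0\psi_\star + \psi_\perp$ with $c\ell^{-2}\|\psi_\perp\|^2 \le E_0$, forcing $\psi$ to be close to $\psi_\star$ and $\langle\psi, V\psi\rangle$ to be comparable to the mean of $V$ after handling the cross terms via Cauchy--Schwarz. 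In either case, the event $\{|\ball_\ell|^{-1}\sum_x V(x;\om) \ge 2\eta\}$ supplied by the hypothesis \eqref{eq:cond.V.CT} implies $E_0^{(\BN)}(\ball_\ell(u_i);\om) \ge c\ell^{-2}$. Choosing $\ell \sim L_0^{1/4}$ makes $c\ell^{-2}$ of order $L_0^{-1/2}$, matching the target $2CL_0^{-1/2}$, while the union bound gives
\[
\pr{E_0^{(\BN)}(\ball_{L_0}(u);\om) \le 2CL_0^{-1/2}} \le N\,e^{-c|\ball_\ell|} \lesssim L_0^{3d/4}\,e^{-c L_0^{d/4}},
\]
which is dominated by $e^{-c'|\ball_\ell(u)|^{1/4}}$ for $L_0$ large enough.

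\emph{Dirichlet case and main obstacle.} Dirichlet--Neumann bracketing $H^{(\BD)}_{\ball_{L_0}(u)} \ge H^{(\BN)}_{\ball_{L_0}(u)}$ in the form sense then carries the lower bound over to the ground state of the Dirichlet operator. The delicate point is the single-cube estimate in the intermediate regime where $|\ball_\ell|^{-1}\sum_x V(x;\om)$ is of the same order as $\ell^{-2}$: Temple's inequality becomes degenerate, and closing the argument requires balancing the Neumann spectral gap on the transverse component $\psi_\perp$ against the quadratic form of $V$, controlling the cross term $\Re\langle c_0\psi_\star, V\psi_\perp\rangle$ with Cauchy--Schwarz and AM--GM. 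This is also the step where the IID structure intervenes implicitly (through the averaged bound) to rule out pathological configurations of $V$ that would allow a trial function to evade the support of the potential.
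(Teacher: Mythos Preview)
The paper does not give its own proof of this lemma; it simply refers the reader to Kirsch's lecture notes \cite{K07} for ``all ingredients of the proof'' of the Lifshitz-tails bound. Your sketch is precisely the standard argument found there: Neumann bracketing into sub-cubes of an intermediate scale $\ell\sim L_0^{1/4}$, a variational lower bound on the single-cube ground state via the decomposition $\psi=c_0\psi_\star+\psi_\perp$ against the Neumann spectral gap, a union bound over the $O(L_0^{3d/4})$ sub-cubes, and finally Dirichlet--Neumann bracketing. So the approach matches what the paper has in mind.

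One small remark on presentation: in the relevant regime the sample average of $V$ is at least the fixed constant $2\eta$, which for large $\ell$ is \emph{far above} the Neumann gap $\sim c\ell^{-2}$, so Temple's inequality in the form you first invoke (requiring $\langle\psi_\star,H\psi_\star\rangle<E_1$) is not applicable there. The argument that actually carries the main case is the one you describe second --- the direct decomposition $\psi=c_0\psi_\star+\psi_\perp$, using $V\ge 0$ to get $\|\psi_\perp\|^2\le E_0/\lambda_1^{(\BN)}$ and then controlling the cross term $2c_0\Re\langle\psi_\star,V\psi_\perp\rangle$ by Cauchy--Schwarz (this step does use a uniform bound on $V$, or a truncation). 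Your ``delicate intermediate regime'' is in fact the generic one; the Temple branch is essentially vacuous here. This is a matter of emphasis rather than a gap.
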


A detailed discussion of the Lifshitz tails phenomenon, along with all ingredients of the proof of Lemma \ref{lem:CT.prob}, can be found, e.g., in Ref \cite{K07}.

Using Lemma \ref{lem:CT} and  Lemma \ref{lem:CT.prob}, we come to the following
\begin{corollary}\label{cor:CT.prob}
Under the assumptions of Lemma \ref{lem:CT.prob}, there exist $\tilde  L^{(3)}\in\DN$ and
$C, c>0$ such that, for any $L_0\ge \tilde L^{(3)}$ and for some $\eta(L_0)>0$,
$m(L_0) \ge CL_0^{-1/2}$,
\begin{equation}\label{eq:cor.CT.S}
\pr{ \exists\, E\in[E^0, E^0+ \eta(L_0)] :\, \ball_{L_0}(0) \text{ is } (E,m(L_0))-S }
\leq e^{-c L_0^{d/4}},
\end{equation}
where $E^0$ is the lower edge of the spectrum of LSO $\Delta + V(x;\om)$ on the lattice $\DZ^d$.
\end{corollary}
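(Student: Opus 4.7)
The plan is to combine the Lifshitz-tail lower bound on the lowest Dirichlet eigenvalue (Lemma~\ref{lem:CT.prob}) with the deterministic exponential-decay estimate of Lemma~\ref{lem:CT}. The Lifshitz bound supplies, at probability at least $1 - e^{-cL_0^{d/4}}$, a favorable event on which no eigenvalue of $H_{\ball_{L_0}(0)}(\om)$ lies below $2CL_0^{-1/2}$; Combes--Thomas then turns the resulting spectral gap of width $\ge CL_0^{-1/2}$ into a uniform exponential decay of $G_{\ball_{L_0}(0)}$ at rate $\sim L_0^{-1/2}$, which is precisely the order of the mass $m(L_0)$ advertised in the statement.

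First I would set $\eta(L_0) := CL_0^{-1/2}$ and $m(L_0) := \frac{C}{10d}L_0^{-1/2}$ (any constant strictly smaller than $\eta(L_0)/(5d)$ will do), and introduce the favorable event $\Om_0 := \{E^{(\BD)}_0(\om) > 2CL_0^{-1/2}\}$. By Lemma~\ref{lem:CT.prob}, $\pr{\Om_0^c} \le e^{-cL_0^{d/4}}$. Under the standing hypotheses on $V$ (IID, non-negative, non-constant, with $0$ in the essential infimum of its support, as forced by \eqref{eq:cond.V.CT}), the lower edge of the full-lattice spectrum satisfies $E^0 = 0$, so on $\Om_0$ every $E \in [E^0, E^0 + \eta(L_0)]$ is at distance $\ge \eta(L_0)$ from $\Sigma(H_{\ball_{L_0}(0)}(\om))$. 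Lemma~\ref{lem:CT} then yields
$$
|G_{\ball_{L_0}(0)}(x,y;E;\om)| \;\le\; \frac{2}{\eta(L_0)}\, e^{-\frac{\eta(L_0)}{5d}\|x-y\|}
$$
uniformly in $E$ in this window and in $x,y \in \ball_{L_0}(0)$.

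Next I would verify the $(E,m(L_0))$-NS property of Definition~\ref{def:S.IID}. Multiplying the above by $|\pt \ball_{L_0}(0)| \le C(d) L_0^{d-1}$ and taking logarithms, the inequality to be checked boils down to
$$
\Bigl(\gamma(m(L_0), L_0) - \tfrac{\eta(L_0)}{5d}\Bigr)\|x-y\| \;\le\; 2L_0^{1/2} - (d-\tfrac12)\ln L_0 - O(1).
$$
Since $\gamma(m(L_0),L_0) = \frac{C}{10d}(1+L_0^{-1/8})L_0^{-1/2} < \frac{C}{5d}L_0^{-1/2} = \eta(L_0)/(5d)$ for $L_0$ large, the left-hand side is non-positive while the right-hand side is positive (dominated by $2L_0^{1/2} = 2L_0^{\beta}$ with $\beta = 1/2$), so the inequality holds for every $x,y$, and in particular for those with $\|x-y\|\ge L_0^{7/8}$. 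Combining the two steps delivers \eqref{eq:cor.CT.S}.

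The only conceptual obstacle is the identification $E^0 = 0$: without it, the narrow Lifshitz window $[E^0, E^0+\eta(L_0)]$ of width $\sim L_0^{-1/2}$ would eventually overlap the Dirichlet spectrum for large $L_0$, and the Combes--Thomas input would collapse. This identification is standard once one knows that the distribution of $V$ concentrates enough mass near $0$, which is exactly what \eqref{eq:cond.V.CT} quantifies; beyond this, everything is routine matching of constants against the rather generous slack $e^{2L_0^{\beta}}$ built into the definition of $(E,m)$-NS.
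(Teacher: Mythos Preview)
Your proposal is correct and follows precisely the route the paper indicates: it simply states that the corollary follows from Lemma~\ref{lem:CT} and Lemma~\ref{lem:CT.prob}, and you have filled in the routine details of that combination (Lifshitz-tail event $\Rightarrow$ spectral gap of order $L_0^{-1/2}$ $\Rightarrow$ Combes--Thomas decay at rate $\sim L_0^{-1/2}$ $\Rightarrow$ the $(E,m)$-NS inequality with room to spare thanks to the $e^{2L_0^{\beta}}$ slack). Your identification $E^0=0$ and the arithmetic check that $\gamma(m(L_0),L_0)<\eta(L_0)/(5d)$ are exactly what is needed.
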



The next statement is merely a reformulation of Lemma \ref{lem:NT.implies.nloc.IID} for energies $E$ restricted to an interval $I\subset \DR$.
\begin{lemma}\label{lem:CT.cor.loc}
Let an interval $I\subset \DR$ be given, and suppose that a ball $\ball_{L_0}(u)$ is $(m,I)$-NT. Then it is also \Iloc.
\end{lemma}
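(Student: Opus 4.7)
The plan is to mimic the argument of Lemma \ref{lem:NT.implies.nloc.IID}, but to carry it out eigenvalue by eigenvalue restricted to the spectral window $I$. That earlier proof is purely deterministic (cf.\ Remark \ref{rem:universality.IID}) and processes each eigenfunction separately, using only the non-singularity hypothesis at its own eigenvalue. Hence it is enough that this hypothesis hold for eigenvalues lying in $I$, which is exactly what $(m,I)$-NT provides via Definition \ref{def:tun.low.E}.

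Concretely, set $L_0 = \ell^\alpha$ and fix a normalized eigenfunction $\psi_j$ of $H_{\ball_{L_0}(u)}$ with eigenvalue $E_j\in I$. By $(m,I)$-NT applied with $E = E_j$, either $\ball_{L_0}(u)$ contains no $(E_j,m)$-S ball of radius $\ell$, or it contains at most one such ball $\ball_\ell(w)$, so that every $\ball_\ell(v)\subset\ball_{L_0}(u)$ disjoint from $\ball_\ell(w)$ is $(E_j,m)$-NS. For any pair $x,y\in\ball_{L_0}(u)$ with $\|x-y\|\ge L_0^{7/8}$, set $r' = d(x,w)$ and $r'' = d(y,w)$ (with the convention $w=y$ in the first case). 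Assertion (A) of Lemma \ref{lem:cond.SubH.IID} applied to the spectral projector kernel $\Pi_{\psi_j}$ shows that $(x',x'')\mapsto|\psi_j(x')\psi_j(x'')|$ is separately $(\ell,q)$-subharmonic in $\ball_{r'}(x)$ and $\ball_{r''}(y)$ with $q = e^{-\gamma(m,\ell)\ell}$, and is globally bounded by $1$ since $\|\psi_j\|=1$.

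Now apply the bi-subharmonic decay of Lemma \ref{lem:BiSubH.IID} with $r' + r'' \ge \|x-y\| - 2\ell$ and repeat verbatim the computation leading to \eqref{eq:proof.lem.NT.implies.nloc.IID}; the $E_j$-non-resonance of $\ball_{L_0}(u)$ is not needed here, since $|\psi_j(\cdot)\psi_j(\cdot)|$ is bounded by $1$ rather than by $e^{L_0^\beta}$, so the factor $e^{L_{k-1}^\beta}$ appearing in \eqref{eq:proof.lem:few.nloc.and.few.R.imply.nloc.IID} is absent. One obtains $|\psi_j(x)\psi_j(y)| \le e^{-\gamma(m,L_0)\|x-y\|}$, and since $E_j\in I$ was arbitrary, $\ball_{L_0}(u)$ is $(m,I)$-Loc. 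There is no real obstacle: this lemma is a direct specialization of Lemma \ref{lem:NT.implies.nloc.IID} to the energy window $I$, and all of the work has already been packaged into the deterministic radial-descent machinery of Section \ref{ssec:radial.descent}.
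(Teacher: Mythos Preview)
Your proposal is correct and is essentially the approach the paper intends: the paper itself gives no separate proof, stating that Lemma~\ref{lem:CT.cor.loc} ``is merely a reformulation of Lemma~\ref{lem:NT.implies.nloc.IID} for energies $E$ restricted to an interval $I\subset\DR$.'' You have correctly identified that the proof of Lemma~\ref{lem:NT.implies.nloc.IID} processes each eigenvalue $E_j$ separately and is purely deterministic, so restricting to $E_j\in I$ and invoking Definition~\ref{def:tun.low.E} in place of the full-line non-tunneling hypothesis reproduces the argument verbatim.
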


\begin{corollary}\label{cor:CT.cor.loc}
Under the assumptions of Lemma \ref{lem:CT.prob}, there exist $\tilde L^{(4)}\in\DN$ and
$C, c>0$ such that, for any $L_1\ge \tilde L^{(4)}$ and for some $\eta(L_1)>0$,
$m=m(L_1) \ge CL_1^{-1/3}$,
\begin{equation}
\pr{ \ball_{L_1}(0) \text{ is \Inloc\, } } \leq e^{-c L_1^{d/6}}.
\end{equation}
\end{corollary}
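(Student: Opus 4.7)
\smallskip\noindent
\textbf{Proof proposal.}
The plan is to reduce the question of $(m,I)$-localization at the larger scale $L_1$ to the absence of $(m,I)$-tunneling, and then to bound the tunneling probability by applying Corollary \ref{cor:CT.prob} at the small scale $L_0$ to each of a pair of disjoint balls, exploiting the IID structure.

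First, I would choose the scaling exponent $\alpha = 3/2$ and set $L_0 := \lfloor L_1^{1/\alpha}\rfloor = \lfloor L_1^{2/3}\rfloor$, so that $L_1 \asymp L_0^{\alpha}$, $L_0^{d/4}\asymp L_1^{d/6}$, and $L_0^{-1/2}\asymp L_1^{-1/3}$. I apply Corollary \ref{cor:CT.prob} at scale $L_0$: for $L_0$ large enough this yields a threshold $\eta(L_0)>0$ and a mass $m = m(L_0) \ge C L_0^{-1/2} \ge C' L_1^{-1/3}$ such that, for every $u\in\DZ^d$, the ball $\ball_{L_0}(u)$ is $(E,m)$-S for some $E\in I := [E^0, E^0 + \eta(L_0)]$ only with probability $\le e^{-c L_0^{d/4}}$. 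I set $\eta(L_1):=\eta(L_0)$.

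Second, I invoke the contrapositive of Lemma \ref{lem:CT.cor.loc} applied at scale $L_1 = L_0^\alpha$: if $\ball_{L_1}(0)$ is \Inloc, then it must be $(m,I)$-T, i.e.\ there exists $E\in I$ and a pair of disjoint balls $\ball_{L_0}(x), \ball_{L_0}(y)\subset \ball_{L_1}(0)$ that are simultaneously $(E,m)$-S. Taking a union bound over the at most $|\ball_{L_1}(0)|^2 \le C L_1^{2d}$ ordered pairs $(x,y)$,
\[
\pr{\ball_{L_1}(0)\text{ is }\Inloc} \le \sum_{(x,y)} \pr{\exists E\in I:\; \ball_{L_0}(x)\text{ and }\ball_{L_0}(y)\text{ are }(E,m)\text{-S}}.
\]

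Third, for each disjoint pair $(x,y)$, I bound the summand by the product
\[
\pr{\exists E_1\in I:\; \ball_{L_0}(x)\text{ is }(E_1,m)\text{-S}}\cdot \pr{\exists E_2\in I:\; \ball_{L_0}(y)\text{ is }(E_2,m)\text{-S}},
\]
which is legitimate because, under the IID hypothesis of Lemma \ref{lem:CT.prob}, the random operators $H_{\ball_{L_0}(x)}$ and $H_{\ball_{L_0}(y)}$ are independent. Each factor is $\le e^{-c L_0^{d/4}}$ by Corollary \ref{cor:CT.prob}, so the summand is $\le e^{-2c L_0^{d/4}}$. Combining, with $L_0^{d/4}\asymp L_1^{d/6}$,
\[
\pr{\ball_{L_1}(0)\text{ is }\Inloc} \le C L_1^{2d}\, e^{-2c L_0^{d/4}} \le e^{-c' L_1^{d/6}}
\]
for all $L_1$ larger than some $\tilde L^{(4)}$, absorbing the polynomial prefactor into the exponential.

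The only genuine obstacle is checking the constants carefully so that the mass $m(L_0)$ provided by Corollary \ref{cor:CT.prob} is large enough to be transported through the tunneling argument of Lemma \ref{lem:CT.cor.loc} at scale $L_1$ (the analog of Lemma \ref{lem:NT.implies.nloc.IID}), i.e.\ so that $\gamma(m,L_0)L_0 \ge 1$ and the subharmonicity/radial-descent bounds yield decay with exponent bounded from below by $m$ at the scale $L_1$; this is a routine verification, using $\alpha = 3/2$ together with $m \sim L_1^{-1/3}$ and the fact that $mL_0 \sim L_1^{1/3}$, which diverges with $L_1$.
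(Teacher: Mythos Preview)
Your proposal is correct and follows essentially the same approach as the paper: reduce \Inloc\ to $(m,I)$-tunneling via Lemma \ref{lem:CT.cor.loc}, take a union bound over pairs of centers, and use independence together with Corollary \ref{cor:CT.prob} to square the single-ball singularity probability. The only noteworthy difference is the scaling exponent: you take $\alpha=3/2$ (so $L_0\approx L_1^{2/3}$), which yields exactly the exponents $m\ge C L_1^{-1/3}$ and $e^{-cL_1^{d/6}}$ stated in the corollary, whereas the paper's proof keeps $\alpha=4/3$ (so $L_0\approx L_1^{3/4}$), giving the sharper probability exponent $3d/16>d/6$ but only $m\ge C L_1^{-3/8}$, which in fact falls short of the claimed $CL_1^{-1/3}$. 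In that sense your choice is the one consistent with the statement as written; the trade-off is that the deterministic Lemma \ref{lem:CT.cor.loc} (i.e., Lemma \ref{lem:NT.implies.nloc.IID}) was calibrated in the paper for $\alpha=4/3$, so your final paragraph about re-verifying the radial-descent arithmetic with $\alpha=3/2$ is indeed necessary, though routine.
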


\proof
First, we choose $L_0$ as in Corollary \ref{cor:CT.prob}, and set 
$L_1 = \big[L_0^{\alpha}\big]+1 = \big[L_0^{4/3} \big]+1$, so that $L_0 \approx L_1^{3/4}$. Owing to \eqref{eq:cor.CT.S}, with probability not smaller than
$$
 1 - |\ball_{L_1}(0)|^2 e^{-2cL_0^{d/4}} \ge 1 - e^{-cL_1^{3d/16}}
$$
there is no pair of disjoint $(E,m)$-singular balls of radius $L_0$ inside $\ball_{L_1}(0)$. Now the claim follows from Lemma \ref{lem:CT.cor.loc}.
\qedhere

Corollary \ref{cor:CT.cor.loc} allows to establish uniform bounds on eigenfunction correlators. However, the main  technical tool of the scale induction at "extreme" energies becomes the following analog of  Lemma \ref{lem:NT.implies.nloc.IID} for the Green functions:

\begin{lemma}\label{lem:CT.ind.step}
For any $C>0$ there exists $L^*(C)>0$ such that for any $L_k\ge L^*(C)$, if a ball $\ball_{L_k}$ is $E$-NR and $(m,I)$-NT with $m\ge CL_{k}^{-1/2}$, then it is $(E,m)$-NS.
\end{lemma}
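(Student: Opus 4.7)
My plan is to imitate the proof of Lemma~\ref{lem:few.nloc.and.few.R.imply.nloc.IID}(A), replacing the ``few singular sub-balls'' hypothesis with the $(m,I)$-NT hypothesis (which, for any single $E \in I$, gives exactly the same structural information) and using $E$-NR in lieu of $E$-CNR to feed into Lemma~\ref{lem:cond.SubH.IID}(B).

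First I would fix $E \in I$ and a pair $x,y \in \ball_{L_k}(u)$ with $|x-y| \ge L_k^{(1+\rho)/\alpha}$. Since $\ball_{L_k}(u)$ is $(m,I)$-NT, by Definition~\ref{def:tun.low.E} it cannot contain two disjoint $(E,m)$-S balls of radius $L_{k-1}$; hence either every sub-ball $\ball_{L_{k-1}}(v) \subset \ball_{L_k}(u)$ is $(E,m)$-NS, or there is a single exceptional center $w$ such that all sub-balls disjoint from $\ball_{L_{k-1}}(w)$ are $(E,m)$-NS. As in the earlier proof, I would handle both cases uniformly by setting $w=y$ (formally) in the first case.

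Next I would set $r' = \max(d(x,w)-L_{k-1}-1,\,0)$, $r'' = \max(d(y,w)-L_{k-1}-1,\,0)$, and consider the function $f(x',x'') = |G_{\ball_{L_k}}(x',x'';E)|$. Outside $\ball_{L_{k-1}}(w)$, every $L_{k-1}$-sub-ball is $(E,m)$-NS, and $\ball_{L_k}(u)$ itself is $E$-NR, so Lemma~\ref{lem:cond.SubH.IID}(B) gives that $f$ is separately $(L_{k-1},q)$-subharmonic in each argument on $\ball_{r'}(x)\times\ball_{r''}(y)$, with $q = e^{-\gamma(m,L_{k-1})L_{k-1}}$ and global maximum at most $e^{L_k^\beta}$. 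An application of the bi-subharmonic descent Lemma~\ref{lem:BiSubH.IID}, using the triangle inequality $r' + r'' \ge |x-y|-2L_{k-1}$, then yields
$$
-\ln f(x,y) \;\ge\; \gamma(m,L_{k-1})\,\frac{L_{k-1}}{L_{k-1}+1}\bigl(|x-y| - 2L_{k-1}\bigr) - L_k^\beta,
$$
and the degenerate cases $r'=0$ or $r''=0$ reduce to the one-variable bound of Lemma~\ref{lem:SubH.IID}.

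The main obstacle, and the reason $L^*$ must depend on $C$, is purely arithmetic: I must verify that the above right-hand side exceeds $\gamma(m,L_k)|x-y| - 2L_k^\beta + \ln|\partial\ball_{L_k}(u)|$ when $m \ge CL_k^{-1/2}$ and $|x-y|\ge L_k^{(1+\rho)/\alpha}$. In the strong-disorder calculation \eqref{eq:proof.lem:few.nloc.and.few.R.imply.nloc.IID} one had $m$ bounded below by a constant, so the linear-in-$|x-y|$ gain easily absorbed the resonance slack $L_k^\beta$; here, because $m$ shrinks with the scale, one must check that $m L_k^{(1+\rho)/\alpha} \gg L_k^\beta + \ln|\ball_{L_k}|$, i.e.\ $C L_k^{(1+\rho)/\alpha - 1/2} \gg L_k^\beta$. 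This forces a tight relation between the parameters $\beta, \rho, \alpha$ and the exponent $1/2$ coming from Lifshitz tails, and is where the threshold $L^*(C)$ is fixed. Once this inequality is confirmed, the same margin $\tfrac12 L_{k-1}^{-1/8}$ that appeared in \eqref{eq:gamma.7.8} carries through and upgrades the bound to the $(E,m)$-NS estimate \eqref{eq:def.NS.IID}.
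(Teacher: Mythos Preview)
Your proposal is correct and follows essentially the same route the paper indicates: the paper merely says the proof is ``quite similar to the proof of Lemma~\ref{lem:NT.implies.nloc.IID}, with minor modifications'' and that sufficiency of $m\ge CL_k^{-1/2}$ is well known. Your choice to template on Lemma~\ref{lem:few.nloc.and.few.R.imply.nloc.IID}(A) rather than Lemma~\ref{lem:NT.implies.nloc.IID} is at least as natural, since the target here is the $(E,m)$-NS bound on Green functions (not the $\mathloc$ property of eigenfunctions), and you correctly observe that only $E$-NR (not $E$-CNR) is needed to feed the global-maximum bound into Lemma~\ref{lem:cond.SubH.IID}(B).
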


The proof is  quite similar to the proof of Lemma \ref{lem:NT.implies.nloc.IID}, with minor modifications. Actually, the sufficiency of the lower bound  $m\ge CL_{k}^{-1/2}$ on the decay exponent $m$ for the purposes of the MSA is a well-known fact.

\begin{lemma}[\textbf{Main inductive lemma} for an energy band $I$]\label{lem:induct.IID.low.E}
Let an interval $I\subset \DR$ be given. Suppose that for some $k\ge 0$, $C>0$, $m\ge CL_k^{-1/2}$ and  $p > 16d$, the following bound holds true:
\be\label{eq:lem.induct.IID.low.E}
\forall\, w\in\DZ^d\qquad \pr{ \text{ $\ball_{L_{k}}(w)$ is  $(m,I)$-T } } \le L_{k}^{-p},
\ee
Then, for $L_0>0$ large enough, there exists $b>0$ such that for all $u\in\DZ^d$
\be\label{eq:lem.induct.next.Tun.IID.low.E}
\pr{ \text{  $\ball_{L_{k+1}}(u)$ is $(m,I)$-T } } \le L_{k+1}^{-p(1+b)}.
\ee
\end{lemma}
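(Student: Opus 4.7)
The plan is to adapt the argument of Lemma~\ref{lem:induct.IID} nearly verbatim, substituting the property $(m,I)$-T for $\mathnloc$ and invoking Lemma~\ref{lem:CT.ind.step} in place of Lemma~\ref{lem:few.nloc.and.few.R.imply.nloc.IID}(B). Inside the fixed ball $\ball_{L_{k+1}}(u)$, introduce the auxiliary events
\[
\CR_k^{(2)} = \{\exists\, E\in I:\; \exists \text{ disjoint } E\text{-PR balls of radius } L_k \text{ in } \ball_{L_{k+1}}(u)\},
\]
\[
\CT_k^{(2)} = \{\exists \text{ disjoint } (m,I)\text{-T balls of radius } L_k \text{ in } \ball_{L_{k+1}}(u)\},
\]
together with a residual ``mixed'' event $\CM_k$ described below. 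The event to be bounded, $\CS_k^{(2)} := \{\ball_{L_{k+1}}(u) \text{ is } (m,I)\text{-T}\}$, will be shown to satisfy $\CS_k^{(2)} \subset \CR_k^{(2)} \cup \CT_k^{(2)} \cup \CM_k$.

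Justification of the inclusion: on $\CS_k^{(2)} \setminus \CR_k^{(2)}$, fix a witness energy $E \in I$ and a witness pair of disjoint $(E,m)$-S balls $\ball_{L_k}(x), \ball_{L_k}(y)$. At least one, say $\ball_{L_k}(x)$, is $E$-CNR, since otherwise both would be $E$-PR and we would be in $\CR_k^{(2)}$. Since $E$-CNR implies $E$-NR and $\ball_{L_k}(x)$ is $(E,m)$-S, the contrapositive of Lemma~\ref{lem:CT.ind.step} (whose hypothesis $m \ge CL_k^{-1/2}$ is built into our assumption) forces $\ball_{L_k}(x)$ to be $(m,I)$-T. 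If $\ball_{L_k}(y)$ is also $E$-CNR, the same argument applies and we land in $\CT_k^{(2)}$. The residual case---exactly one of the witness balls is $E$-PR and the other is $E$-CNR---defines $\CM_k$.

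The first two events are handled by standard tools. Lemma~\ref{lem:WS1.IID}(B), summed over the at most $\half(2L_{k+1}+1)^{2d}$ pairs, gives $\pr{\CR_k^{(2)}} \le \half(2L_{k+1}+1)^{2d}\, e^{-L_k^{\beta'}}$, which is sub-exponential and hence negligible. For each disjoint pair of $L_k$-balls, the independence of random operators on disjoint supports together with the induction hypothesis yields $\pr{\text{both are } (m,I)\text{-T}} \le L_k^{-2p}$; summing and using $\alpha = 4/3$, so $2/\alpha = 3/2$,
\[
\pr{\CT_k^{(2)}} \;\le\; \tfrac{1}{2}(2L_{k+1}+1)^{2d}\, L_{k+1}^{-2p/\alpha} \;\sim\; L_{k+1}^{2d - 3p/2}.
\]
Since $p > 16d$ gives $3p/2 - 2d > 11p/8 > p$, any choice $b \in \bigl(0,\, 1/2 - 2d/p\bigr)$ achieves $\pr{\CT_k^{(2)}} \le \quart L_{k+1}^{-p(1+b)}$ for $L_0$ large.

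The hard part will be controlling the residual mixed event $\CM_k$. In the weak-disorder low-energy regime this is done by combining the Lifshitz-tail estimates (Lemma~\ref{lem:CT.prob} and Corollary~\ref{cor:CT.prob}), which make eigenvalues of $H_{\ball_{L_k}}$ inside the narrow band $I = [E^0, E^0 + \eta]$ exponentially rare, with the Combes--Thomas bound (Lemma~\ref{lem:CT}), which restricts $(E,m)$-S energies to a neighbourhood of radius $O(L_k^{-1/2})$ around the spectrum. Together with the H\"older regularity of the marginal distributions and independence across disjoint balls, a double-resonance Wegner argument bounds $\pr{\CM_k}$ sub-exponentially in $L_k$, to be absorbed into the $\CR_k^{(2)}$ contribution. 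Combining the three bounds yields $\pr{\ball_{L_{k+1}}(u) \text{ is } (m,I)\text{-T}} \le L_{k+1}^{-p(1+b)}$.
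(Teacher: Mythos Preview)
Your treatment of $\CR_k^{(2)}$ and $\CT_k^{(2)}$ is sound, but the argument collapses at the mixed event $\CM_k$. The tools you invoke there---the Lifshitz-tail estimate (Lemma~\ref{lem:CT.prob}, Corollary~\ref{cor:CT.prob}) and the Combes--Thomas bound (Lemma~\ref{lem:CT})---belong exclusively to the \emph{initial}-scale analysis; they are what produces Corollary~\ref{cor:CT.prob}, and nothing more. At a generic inductive step $L_k\to L_{k+1}$ there is no mechanism forcing the spectrum of $H_{\ball_{L_k}}$ to avoid the interior of $I$, so the assertion that Combes--Thomas confines the $(E,m)$-singular energies to an $O(L_k^{-1/2})$-neighbourhood of a sparse set is unfounded, and no ``double-resonance Wegner argument'' of the kind you sketch is actually available. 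The final paragraph is a hope, not a proof.

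More to the point, the route the paper intends (``almost verbatim'' Lemma~\ref{lem:induct.IID}) never produces a mixed case, because it does not attempt to force \emph{both} witness $L_k$-balls to be $(m,I)$-tunneling. On $\CS_k^{(2)}(x,y)\setminus\CR_k^{(2)}$ one only observes that at least one of the two $(E,m)$-S balls---say $\ball_{L_k}(x)$---is $E$-CNR and therefore $(m,I)$-T by Lemma~\ref{lem:CT.ind.step}; the second ball $\ball_{L_k}(y)$ plays no further role. In Lemma~\ref{lem:induct.IID} the squared probability did not come from pairing two $L_k$-balls; it came from the two disjoint bad $L_{k-1}$-sub-balls \emph{inside} the single tunneling $L_k$-ball, combined with the inductive hypothesis at the previous scale. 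Your three-event split, and the troublesome $\CM_k$, are artefacts of trying to manufacture independence at scale $L_k$ rather than descending one scale as the paper does.
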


The proof repeats almost verbatim that of  Lemma \ref{lem:induct.IID}, so we omit it here.

Now one can conclude as in the case of large disorder and prove dynamical localization in the energy band $I\subset\DR$ on the entire lattice; cf. Section \ref{ssec:SimpleDL.lattice}.



\section{Adaptation to correlated random potentials }
\label{sec:adapt.mixing}

In this section we assume that the random field $V$ fulfills the conditions \Wtwo--\Wthree.
Note that we consider here only the case of large disorder, in order to use a more streamlined approach from Section \ref{sec:SimpleScaleInd}, but an adaptation to the low-energy analysis, close to that described in Section \ref{sec:adapt.low.E}, is fairly straightforward.

\subsection{Resonant and  singular balls}

The following statement is an adaptation of the Wegner-type bound to correlated potentials satisfying the conditions \Wtwo--\Wthree.

\begin{lemma}[Wegner-type bound for correlated potentials]\label{lem:WS.corr} Fix a finite interval $I\subset\DR$.
Under the assumption \Wtwo, there exists $L^*>0$ and $\beta' \in(0, \beta)$ such that for all $L\geq L^*$ and any  ball $\ball_{L}(x)$
\begin{eqnarray}
\forall\, E\in \DR\quad
\pr{  \text{  $\ball_{L}(x)$ is $E$-\rm{R} } } \le e^{-L^{\beta}}
\label{eq:W1.mixing} \\
\forall\, E\in \DR\quad
\pr{  \text{  $\ball_{L}(x)$ is not $E$-\rm{CNR} } } \le e^{-L^{\beta'}}.
\label{eq:W1.CNR.mixing}
\end{eqnarray}
As a result, under the assumption \Wthree,  for all $L$ large enough and any pair of $L$-distant balls $\ball_{L}(x)$, $\ball_{L}(y)$
\be\label{eq:W2.mixing}
\pr{ \exists\, E\in I:\, \text{ neither $\ball_{L}(x)$ nor $\ball_{L}(y)$  is $E$-\rm{CNR} } }
\le e^{-c\ln^2 L}.
\ee
\end{lemma}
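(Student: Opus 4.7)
The plan is to handle parts (a) and (b) as single-ball Wegner estimates from (W2), and then derive the two-ball bound (c) by decoupling via the mixing condition (W3).

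For part (a), fix $E \in \DR$. The event reads $\{\dist(E, \spec H_{\ball_L(x)}) < e^{-L^\beta}\}$. I would use a Stollmann-type argument adapted to correlated potentials: for each $x_0 \in \ball_L(x)$, each eigenvalue $\lambda_j$ of $H_{\ball_L(x)}$ is a non-decreasing Lipschitz function of $V(x_0;\om)$ with Hellmann--Feynman derivative $|g|\,|\psi_j(x_0)|^2$, and (W2) applied conditionally on $\fF_{\ne x_0}$ yields $\pr{|\lambda_j - E| \le \epsilon \,|\, \fF_{\ne x_0}} \le C\epsilon^b$ whenever $|\psi_j(x_0)|^2 \ge |\ball_L|^{-1}$. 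Summing over $x_0$ via $\sum_{x_0}|\psi_j(x_0)|^2 = 1$, then over $j$, gives $\pr{\dist(E, \spec H_{\ball_L(x)}) \le \epsilon} \le C|\ball_L|^2 \epsilon^b$; setting $\epsilon = e^{-L^\beta}$ and taking $L$ large absorbs the polynomial prefactor, establishing \eqref{eq:W1.mixing}. Part (b) then follows by union-bounding (a) over the $O(L^{d+1})$ sub-balls $\ball_\ell(u) \subset \ball_L(x)$ with $\ell \in [L^{1/\alpha}, L]$, each contributing $\le e^{-\ell^\beta} \le e^{-L^{\beta/\alpha}}$, giving $\pr{\text{not $E$-CNR}} \le L^{d+1} e^{-L^{\beta/\alpha}} \le e^{-L^{\beta'}}$ for any $\beta' \in (0, \beta/\alpha)$ and $L$ large, i.e. \eqref{eq:W1.CNR.mixing}.

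For part (c), I decouple the two balls via (W3). Discretize $I$ by a grid $\{E_k\}_{k=1}^N$ of spacing $\delta$ (to be tuned), and set
\[ \CE_k^z = \{\ball_L(z) \text{ has a sub-ball of radius } \ge L^{1/\alpha} \text{ with eigenvalue in } (E_k - \delta,\, E_k + \delta)\}, \]
which is $\fF_{\ball_L(z)}$-measurable. Provided $\delta \ge e^{-L^{\beta/\alpha}}$, the event in (c) lies in $\bigcup_k (\CE_k^x \cap \CE_k^y)$: any witnessing $E$ is within $\delta$ of some $E_k$, and the single-ball resonance tolerance $e^{-L^{\beta/\alpha}}$ is dominated by $\delta$. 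Applying (W3) to the $L$-distant balls gives
\[ \pr{\CE_k^x \cap \CE_k^y} \le \pr{\CE_k^x}\,\pr{\CE_k^y} + C e^{-\ln^2 L}, \]
while part (a) yields $\pr{\CE_k^z} \le C L^{O(d)} \delta^b$. Summing over the $N = |I|/\delta$ grid points produces
\[ \pr{\text{event in (c)}} \le \frac{|I|}{\delta}\bigl(C L^{O(d)} \delta^{2b} + C e^{-\ln^2 L}\bigr), \]
which upon a careful choice of $\delta$ yields \eqref{eq:W2.mixing}.

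The main technical obstacle is the balance in part (c): the sub-exponential mixing rate $e^{-\ln^2 L}$ forces $N \lesssim e^{(1-c)\ln^2 L}$, i.e.\ $\delta \gtrsim e^{-(1-c)\ln^2 L}$, to keep the mixing error below $\tfrac{1}{2} e^{-c\ln^2 L}$; but then the Wegner product term is $L^{O(d)} |I| \delta^{2b-1}$, which decays at the same rate only if $b > 1/2$. For smaller $b$ the remedy is to bypass the grid: condition on $\fF_{\ball_L(y)}$, which freezes the random bad set $\Sigma_L(y;\om) \subset I$ of measure $O(L^{O(d)} e^{-L^{\beta/\alpha}})$. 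Since (W2) remains valid under conditioning on any sub-$\sigma$-algebra of $\fF_{\ne x_0}$, applying the Stollmann-type bound to $\ball_L(x)$ over this now-deterministic thickened set yields an (unconditional) estimate $\le L^{O(d)} e^{-b L^{\beta/\alpha}}$, considerably stronger than \eqref{eq:W2.mixing}.
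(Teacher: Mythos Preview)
Your treatment of parts \eqref{eq:W1.mixing} and \eqref{eq:W1.CNR.mixing} matches the paper: Stollmann's lemma in the correlated form (this is exactly the content of \cite{C08}, which the paper cites) followed by a union bound over the $O(L^{d+1})$ sub-balls.

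For part \eqref{eq:W2.mixing} the paper does \emph{not} use a grid discretization. It follows the IID argument from the appendix verbatim: reduce the two-ball event to $\{\exists\,i,j:\;|E_i(x)-E_j(y)|\le 2e^{-L^\beta}\}$, condition on $\fF_{\ball_L(y)}$ to freeze the eigenvalues $E_j(y;\om)$, and apply the single-ball Wegner bound to $\ball_L(x)$ at each of these (now deterministic) energies. The only change from the IID case is that the paper invokes \Wthree\ to justify that the conditional Wegner bound for $\ball_L(x)$ is close to the unconditional one, which is why the final estimate inherits the mixing rate $e^{-c\ln^2 L}$.

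Your ``remedy'' paragraph is precisely this conditioning route, and your further observation is sharper than the paper's: since \Wtwo\ is already a statement about the conditional law of $V(x_0)$ given $\fF_{\ne x_0}\supset\fF_{\ball_L(y)}$, the Stollmann argument for $\ball_L(x)$ runs \emph{verbatim} conditionally on $\fF_{\ball_L(y)}$, with no appeal to \Wthree\ at all. This yields a bound of order $L^{O(d)}e^{-bL^{\beta/\alpha}}$, far stronger than the $e^{-c\ln^2 L}$ stated in the lemma. So your grid approach is an unnecessary detour (and, as you correctly diagnose, breaks for $b\le 1/2$), but your conditioning argument is both the paper's method and an improvement on it: it shows that \Wthree\ is in fact not needed for \eqref{eq:W2.mixing} once \Wtwo\ is assumed.
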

\proof
The bound \eqref{eq:W1.mixing} follows directly from an extension of Stollmann's lemma (cf. \cite{St01}) on monotone functions to correlated random fields, given in our earlier work \cite{C08}. The bounds \eqref{eq:W1.CNR.mixing}--\eqref{eq:W2.mixing}  follow from \eqref{eq:W1.mixing} essentially in the same way as assertions (B) and (C) of Lemma \ref{lem:WS1.IID} from assertion (A). The only modification required here is replacing the independence of operators $H_{\ball_L(x)}$, $H_{\ball_L(y)}$ by weak dependence at distance $O(L)$ between  $\ball_L(x)$ and $\ball_L(y)$, following from the condition \Wthree.
\qed

\begin{definition}\label{def:S}
Given a sample $V(\cdot;\om)$, a ball $ \ball_L(u)$ is called $(E,m)$-non-singular ($(E,m)$-NS) if
if for any pair of points $x,y\in \ball_L(u)$ with $|x-y|\ge L^{\frac{1+\rho}{\alpha}} = L^{7/8}$,
\be\label{eq:def.NS.IID.2}
 |\pt  \ball_L(u)|\,  
 \; |G(x,y;E;\om)|
 \leq e^{-\gamma(m,L) |x - y| + 2L^\beta}.
\ee
Otherwise, it is called $(E,m)$-singular ($(E,m)$-S).
\end{definition}

\subsection{Localization and tunneling}


The definition of an $m$-localized ball remains unchanged, but we slightly modify the definition
of a tunneling ball:

\begin{definition}\label{def:tun}
A ball $ \ball_{\ell^{\alpha}}(u)$ is called $m$-tunneling if it contains a pair of  \nloc\, balls $\ball_{\ell}(v)$, $\ball_{\ell}(w)$ with $d(v,w)\le 3\ell$.
Otherwise it is called $m$-non-tunneling.
\end{definition}


\subsection{Initial scale bounds for correlated potentials}

\begin{lemma}[Initial scale bound: large disorder]\label{lem:L.0}
For any $L_0>2$, $m>0$ and $p>0$ there exists $g_0 < \infty$ such that for all $g$ with
$|g|\ge g_0$, any ball $\ball_{L_{0}}(u)$ and any $E\in\DR$
\begin{eqnarray}
\pr{ \ball_{L_{0}}(u) \text{ is } (E,m)\text{\rm-S} } &\le L_{0}^{-p}
\label{eq:L0.GF} \\
\pr{ \ball_{L_{0}}(u) \text{ is } \mathnloc } &\le L_{0}^{-p}.
\label{eq:L0.loc}
\end{eqnarray}
\end{lemma}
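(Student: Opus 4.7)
The plan is to adapt closely the proof of the IID version, Lemma \ref{lem:L.0.IID}, replacing \Wone\ by its conditional analog \Wtwo. The key observation is that integrating the a.s.\ bound \eqref{eq:cond.continuity.V} over $\fF_{\neq x}$ converts it into the \emph{unconditional} anti-concentration estimate
\[
\pr{|V(x;\om) - a| \leq \epsilon} \leq \Const \cdot \epsilon^b,
\]
valid for every $\fF_{\neq x}$-measurable random variable $a$ --- in particular, for any deterministic $a \in \DR$, and for $a = V(y;\om)$ with $y \neq x$. This is all the probabilistic input needed at the initial scale; the mixing condition \Wthree\ plays no role here.

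For the Green-function bound \eqref{eq:L0.GF}, fix $E \in \DR$ and a constant $\eta = \eta(m, d, L_0)$ large enough that $2d/\eta \leq e^{-\gamma(m,L_0)}$ by a margin sufficient to absorb the prefactor $|\pt\ball_{L_0}(u)|$ and the factor $e^{2L_0^\beta}$ appearing in the $(E,m)$-NS bound (any such $\eta$, depending only on $m, d, L_0$, will do). On the event $\CE_{\text{NR}}(E)$ that all diagonal entries of $H_{\ball_{L_0}(u)} - E$ have modulus exceeding $\eta$, the diagonal part $D$ is invertible with $\|D^{-1}\| \leq \eta^{-1}$, while the nearest-neighbor off-diagonal part $B$ has $\|B\| \leq 2d$. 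A Neumann series $G = \sum_{n \geq 0}(-D^{-1}B)^n D^{-1}$, combined with the support property $B^n(x,y) = 0$ for $n < |x-y|$, yields
\[
|G_{\ball_{L_0}(u)}(x,y;E)| \leq \eta^{-1}\sum_{n \geq |x-y|}(2d/\eta)^n \leq 2\eta^{-1}(2d/\eta)^{|x-y|},
\]
which is the required $(E,m)$-NS bound \eqref{eq:def.NS.IID.2} for $|x-y| \geq L_0^{7/8}$. Applying the anti-concentration estimate to each of the $|\ball_{L_0}(u)| \leq CL_0^d$ diagonal entries gives $\pr{\CE_{\text{NR}}(E)^c} \leq CL_0^d g^{-b} \leq L_0^{-p}$ for $g \geq g_0(p, d, m, L_0)$.

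For the localization bound \eqref{eq:L0.loc}, I would instead work on the ``level-separation'' event
\[
\CE_{\text{sep}} = \bigl\{|gV(x;\om) - gV(y;\om)| > 2\eta + 4d \text{ for all distinct } x, y \in \ball_{L_0}(u)\bigr\},
\]
whose complement has probability $\leq CL_0^{2d}g^{-b}$ by the union bound over pairs, applying the anti-concentration estimate to $a = V(y;\om)$. On $\CE_{\text{sep}}$, Gershgorin's theorem attaches to each eigenpair $(E_j, \psi_j)$ of $H_{\ball_{L_0}(u)}$ a unique ``localization center'' $x_j$ --- the site whose diagonal entry is closest to $E_j$ --- and the separation hypothesis forces $|(H_{\ball_{L_0}(u)} - E_j)(y,y)| > \eta$ for every $y \neq x_j$. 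Solving the eigenvalue equation at such $y$ yields $|\psi_j(y)| \leq (2d/\eta)\max_{|z-y|=1}|\psi_j(z)|$, so that $|\psi_j|$ is $(0, 2d/\eta)$-subharmonic on $\ball_{L_0}(u) \setminus \{x_j\}$. Applying the radial-descent Lemma \ref{lem:SubH.IID} separately in $\ball_{|x-x_j|-1}(x)$ and $\ball_{|y-x_j|-1}(y)$ (neither of which contains $x_j$) together with $\|\psi_j\|_\infty \leq 1$ yields
\[
|\psi_j(x)|\cdot|\psi_j(y)| \leq (2d/\eta)^{|x-x_j|+|y-x_j|} \leq (2d/\eta)^{|x-y|} \leq e^{-\gamma(m,L_0)|x-y|},
\]
the desired \loc\ bound. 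The only real obstacle is the bookkeeping of constants: $\eta$ must be chosen large enough to produce the rate $\gamma(m,L_0)$ after absorbing all polynomial prefactors, but since $L_0, m, p$ are fixed from the outset, all constraints reduce to a single lower bound $g \geq g_0(p, d, m, L_0)$, and are mutually compatible.
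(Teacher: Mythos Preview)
Your proof is correct and follows essentially the same route as the paper: derive unconditional and pairwise anti-concentration from \Wtwo, then exploit diagonal dominance for large $|g|$. For \eqref{eq:L0.GF} the two arguments are identical in substance (Neumann series for $H-E = D+B$ on the event that all diagonal entries are large). For \eqref{eq:L0.loc} the paper argues more softly, invoking continuity of the eigenvectors of the family $A(t)=V+g^{-1}t\Delta$ at $t=0$ once the spacings $|gV(x)-gV(y)|$ are large, whereas you carry out an explicit Gershgorin-plus-radial-descent computation; the two are interchangeable for fixed $L_0$, and your version has the minor advantage of making the dependence of $g_0$ on $(L_0,m,p,d)$ fully explicit.
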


\subsection{Tunneling and localization in finite balls}

The proof of Lemma \ref{lem:loc.and.NR.imply.NS.IID} is not specific to IID potentials, so its assertion remains valid for correlated potentials. However, Lemma \ref{lem:few.nloc.and.few.R.imply.nloc.IID} needs a minor adaptation.

\begin{lemma}\label{lem:few.nloc.and.few.R.imply.nloc}
There exists $\tilde L^{(5)}<\infty$ such that if  $L_k\ge \tilde L^{(5)}$ and
a ball $ \ball_{L_k}(u)$ is $\mathntun$ and $E$-NR, then it is also $(E,m)$-NS.
\end{lemma}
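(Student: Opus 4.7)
The strategy is to mirror the argument used for assertion (B) of Lemma \ref{lem:few.nloc.and.few.R.imply.nloc.IID}, with the geometric step adapted to the new clustering version of $m$-tunneling supplied by Definition \ref{def:tun}. Fix a pair of points $x,y\in\ball_{L_k}(u)$ with $\|x-y\|\ge L_k^{(1+\rho)/\alpha}=L_k^{7/8}$, and consider the Green-function slice
\be
f(x') := |G_{\ball_{L_k}(u)}(x',y;E;\om)|.
\ee
By the $E$-NR hypothesis on $\ball_{L_k}(u)$, $f$ is globally bounded by $e^{L_k^\beta}$, so once separate subharmonicity is established in sufficiently large regions around $x$ and $y$, the radial-descent Lemmas \ref{lem:SubH.IID} and \ref{lem:BiSubH.IID} will finish the argument.

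First I would use the $\mathntun$ hypothesis to localise the "bad" subballs. Since $\ball_{L_k}(u)$ is $\mathntun$ in the sense of Definition \ref{def:tun}, any two $\mathnloc$ subballs of radius $L_{k-1}$ inside $\ball_{L_k}(u)$ have centres separated by more than $3L_{k-1}$. Combining this clustering constraint with Lemma \ref{lem:loc.and.NR.imply.NS.IID} (which asserts that an $(E,m)$-S subball must be either $\mathnloc$ or $E$-R), and interpreting the $E$-NR hypothesis on $\ball_{L_k}(u)$ in the $E$-CNR sense required by the induction (which is guaranteed in the probabilistic scheme via \eqref{eq:W1.CNR.mixing}), one concludes that the $(E,m)$-S $L_{k-1}$-subballs are confined to a single cluster around some $w\in\ball_{L_k}(u)$ of linear size $O(L_{k-1})$. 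Set
\be
r' := \max(\|x-w\|-L_{k-1}-1,\,0),\qquad r'' := \max(\|y-w\|-L_{k-1}-1,\,0),
\ee
so that $r'+r''\ge \|x-y\|-2L_{k-1}\ge L_{k-1}^{1+\rho}-2L_{k-1}$ by the triangle inequality.

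Next I would invoke Lemma \ref{lem:cond.SubH.IID}(B): $f$ is $(L_{k-1},q)$-subharmonic on $\ball_{r'}(x)$ with $q=e^{-\gamma(m,L_{k-1})L_{k-1}}$, and likewise subharmonic on $\ball_{r''}(y)$ in its second argument. Lemma \ref{lem:BiSubH.IID} then gives
\be
|G_{\ball_{L_k}(u)}(x,y;E)|\le q^{(r'+r'')/(L_{k-1}+1)}\,e^{L_k^\beta}.
\ee
Taking logarithms and replaying the chain of inequalities in \eqref{eq:proof.lem:few.nloc.and.few.R.imply.nloc.IID}, which relies only on $\rho=1/6$, $\beta=1/2$, $\tau=1/8$, $\alpha=4/3$ and $L_{k-1}$ large enough, produces the required
\be
-\ln|G_{\ball_{L_k}(u)}(x,y;E)|\ge \gamma(m,L_k)\,\|x-y\|+\ln|\pt\ball_{L_k}(u)|-2L_k^\beta,
\ee
which matches Definition \ref{def:S}. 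If $r'=0$ or $r''=0$, the same bound follows from subharmonicity in a single variable via Lemma \ref{lem:SubH.IID}.

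The main obstacle is the first step: reconciling the clustering-type $\mathntun$ of Definition \ref{def:tun} (which only restricts $\mathnloc$ subballs, not $E$-R ones) with the control of Green functions required for the descent. Under the IID Definition \ref{def:tun.IID} plus $E$-CNR, at most one $L_{k-1}$-subball could be $(E,m)$-S; here, one must use the new clustering to confine $\mathnloc$ subballs to a single cluster of diameter $O(L_{k-1})$ and absorb any $E$-R subballs into the same cluster via the correlated Wegner bound \eqref{eq:W1.mixing}--\eqref{eq:W1.CNR.mixing}. Once this is done, the radial-descent calculation is formally identical to the IID one and produces no new analytic work beyond Lemmas \ref{lem:cond.SubH.IID} and \ref{lem:BiSubH.IID}.
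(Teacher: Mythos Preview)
Your proposal follows the same route as the paper's proof: localise the $\mathnloc$ subballs into a single cluster around some $w$ (using Definition~\ref{def:tun}), define radii $r',r''$ excluding that cluster, and run the bi-subharmonic descent via Lemmas~\ref{lem:cond.SubH.IID}(B) and~\ref{lem:BiSubH.IID}, exactly as in the IID case. One numerical adjustment is needed, however: because the correlated $\mathntun$ hypothesis only guarantees that $\mathnloc$ centres lie within distance $3L_{k-1}$ of $w$ (not that there is a unique $\mathnloc$ ball), your choice $r'=\|x-w\|-L_{k-1}-1$ is too generous to ensure that every $L_{k-1}$-ball centred in $\ball_{r'}(x)$ avoids the cluster. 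The paper therefore works with the weaker lower bound $r'+r''\ge \|x-y\|-6L_{k-1}-2$ and carries $R-9L_{k-1}$ (rather than $R-2L_{k-1}$) through the chain \eqref{eq:proof.lem:few.nloc.and.few.R.imply.nloc.IID}; this still suffices since $9L_{k-1}=o(L_{k-1}^{1+\rho})$, and the factor $10L_{k-1}^{-1/6}$ replaces $3L_{k-1}^{-1/6}$ without affecting the conclusion.

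You are also right to flag that the step ``$\mathloc$ subball $\Rightarrow$ $(E,m)$-NS subball'' via Lemma~\ref{lem:loc.and.NR.imply.NS.IID} needs each $L_{k-1}$-subball to be $E$-NR, which amounts to $E$-CNR of $\ball_{L_k}(u)$ rather than the literal $E$-NR in the statement; the paper's own proof (and its subsequent use in Lemma~\ref{lem:induct}) tacitly relies on this stronger hypothesis.
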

\proof

By Definition \ref{def:tun}, if a ball $ \ball_{L_k}(u)$ is $\mathntun$, then
\begin{enumerate}
  \item either it does not contain any \nloc\, ball of radius $L_{k-1}$
  \item or it contains an  \nloc\, ball $\ball_{L_{k-1}}(w)$, but any ball
  $\ball_{L_{k-1}}(v)$ with \\
  $d(v,w)>3L_k$ is $\mathloc$.
\end{enumerate}
To treat both cases with one argument, in the simpler case (1) set, formally, $w=y$.
Next, let
\be\label{eq:r.r.IID.2}
\bal
r' = \max\{ \rd(x, \ball_{L_{k-1}}(w) - 1, 0\},
%
\quad
r'' &= \max\{ \rd(y, \ball_{L_{k-1}}(w) - 1, 0\}.
\eal
\ee
Assume first that $r', r'' \ge L_{k-1}+1$. By triangle inequality,
\be\label{eq:r.r.corr}
r' + r'' \ge |x-y| - 6L_{k-1} -2  \ge L_{k-1}^{1+\rho} - 7L_{k-1}.
\ee

Consider the function  $f: \ball\times\ball \to \DR_+$ defined by
$
 f(x', x'') = |G_{\ball_{L_k}}(x', x''; E)|.
$
By construction, $f$ is  $(L_{k-1}, q_k)$-subharmonic both in
$x'\in\ball_{r'}(x)$ and in $x''\in\ball_{r''}(y)$, with
$
q \le e^{-\gamma(m, L_{k-1})},
$
since $\ball_{L_k}(u)$ is $E$-NR, and all  balls of radius $L_{k-1}$ both in $\ball_{r'}(x)$ and in $\ball_{r''}(y)$ are $(E,m)$-NS (being disjoint from $\ball_{L_{k-1}(w)}$). Collecting the assertion (B) of Lemma \ref{lem:cond.SubH.IID}, Lemma \ref{lem:BiSubH.IID} and inequality
\eqref{eq:r.r.corr}, one can write, with the convention $-\ln 0 = +\infty$:
\be\label{eq:proof.lem:few.nloc.and.few.R.imply.nloc.IID.2}
\bal
\textstyle
-\ln f(x,y)   &\ge
 -\ln \Bigg[
 \left( e^{-m(1 + \half L^{-\tau}_{k-1})L_{k-1}}\right)^{\frac{ R - 7L_{k-1} - 2 L_{k-1} }{ L_{k-1} + 1} }
 e^{L_k^\beta}
 \Bigg]
\\
&
\textstyle
\ge m \left(1+ \half L_{k-1}^{-1/8} \right) \frac{L_{k-1}}{L_{k-1}+1} \left( R -9L_{k-1} \right)
- L_{k-1}^\beta
\\
&
\textstyle
\ge R m \left( 1+ \half L_{k-1}^{-1/8} \right)  \left( 1 -  10 L_{k-1}^{-1/6} \right)
\ge R m \left(1 + \quart L_{k-1}^{-1/8}   \right)
\\
 &> \gamma(m, L_k) \,d(x, y) + \ln |\pt \ball_{L_k}(u)|,
\eal
\ee
as required for the $(E,m)$-NS property of the ball $\ball_{L_k}(u)$.

If $r'=0$ (resp., $r''=0$), the required bound follows from the subharmonicity of the function $f(x',x'')$ in $x''$ (resp., in $x'$).
\qedhere

\begin{lemma}\label{lem:NT.implies.nloc}
There exists $\tilde L^{(6)} < \infty$ such that if  $L_k\ge \tilde L^{(6)}$ and a ball
$\ball_{L_k}(u)$ is $\mathntun$, then it is also $\mathloc$.
\end{lemma}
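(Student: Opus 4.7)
The plan is to mirror the proof of the IID analogue, Lemma~\ref{lem:NT.implies.nloc.IID}, using the bi-subharmonicity of the symmetric products $f_j(x', x'') = |\psi_j(x') \psi_j(x'')|$, where $\{\psi_j\}$ are the normalized eigenfunctions of $H_{\ball_{L_k}(u)}$ with eigenvalues $\{E_j\}$. Since $\|\psi_j\|_2 = 1$ on $\ball_{L_k}(u)$, each $f_j$ is globally bounded by $1$; this removes the need for any global resolvent bound of the type that would force $E_j$-non-resonance of the entire ball $\ball_{L_k}(u)$, as was required in the Green-function analogue Lemma~\ref{lem:few.nloc.and.few.R.imply.nloc}.

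Fix an eigenvalue $E_j$ and two points $x, y \in \ball_{L_k}(u)$ with $\rd(x, y) \geq L_k^{7/8}$. The hypothesis that $\ball_{L_k}(u)$ is $m$-NT (Definition~\ref{def:tun}), read contrapositively, states that either no $L_{k-1}$-sub-ball is $m$-NL, or every $m$-NL sub-ball lies in a cluster of diameter at most $3L_{k-1}$ about some distinguished ball $\ball_{L_{k-1}}(w)$. Set $w = y$ formally in the first case for notational uniformity, and define the ``safe radii''
\[
r' = \max\{\rd(x, \ball_{L_{k-1}}(w)) - 1,\, 0\}, \qquad r'' = \max\{\rd(y, \ball_{L_{k-1}}(w)) - 1,\, 0\},
\]
as in \eqref{eq:r.r.IID.2}. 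By the triangle inequality, $r' + r'' \geq \rd(x, y) - 6L_{k-1} - 2$, and every $L_{k-1}$-ball strictly contained in $\ball_{r'}(x)$ or in $\ball_{r''}(y)$ is $m$-Loc and hence, together with $E_j$-non-resonance (see the remark below), $(E_j, m)$-NS via Lemma~\ref{lem:loc.and.NR.imply.NS.IID}.

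Invoking Lemma~\ref{lem:cond.SubH.IID}(A), $f_j$ is then $(L_{k-1}, q)$-subharmonic in $x' \in \ball_{r'}(x)$ and separately in $x'' \in \ball_{r''}(y)$, with $q \leq e^{-\gamma(m, L_{k-1}) L_{k-1}}$, and Lemma~\ref{lem:BiSubH.IID} applied to $f_j$ yields the desired decay of $|\psi_j(x) \psi_j(y)|$. The exponent computation is formally identical to \eqref{eq:proof.lem:few.nloc.and.few.R.imply.nloc.IID.2} but \emph{without} the additive $L_k^\beta$ correction (which in the Green-function case came from the global bound $e^{L_k^\beta}$), so the resulting lower bound on $-\ln f_j(x, y)$ exceeds $\gamma(m, L_k)\rd(x, y)$, establishing $m$-Loc of $\ball_{L_k}(u)$. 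Degenerate cases $r' = 0$ or $r'' = 0$ fall back to single-variable subharmonicity via Lemma~\ref{lem:SubH.IID}.

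The main obstacle in transporting this argument from the IID to the correlated setting is reconciling $m$-Loc with $(E_j, m)$-NS: the $m$-NT hypothesis controls only the $m$-NL sub-balls, while the subharmonicity of $f_j$ requires the surrounding sub-balls to be $(E_j, m)$-NS, which by Lemma~\ref{lem:loc.and.NR.imply.NS.IID} needs in addition their $E_j$-non-resonance. The cleanest way to dispose of this issue is to enlarge the excluded ``cluster'' around $w$ so as to absorb any $E_j$-resonant $L_{k-1}$-sub-balls inside $\ball_{L_k}(u)$; on the high-probability event used in the scale induction, the Wegner-type estimate of Lemma~\ref{lem:WS.corr} guarantees that the enlarged cluster still has diameter of order $L_{k-1}$, so that the bound $r' + r'' \geq \rd(x, y) - O(L_{k-1})$ and the subsequent exponent computation go through unchanged, for $L_k$ large enough.
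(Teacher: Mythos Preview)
Your argument via the bi-subharmonic functions $f_j(x',x'') = |\psi_j(x')\psi_j(x'')|$, with global maximum $1$ in place of the resolvent bound $e^{L_k^\beta}$, is precisely the paper's approach: the paper's proof is a short reference to Lemma~\ref{lem:few.nloc.and.few.R.imply.nloc} with this substitution, followed by the same exponent inequality with $r'+r'' \ge R - 6L_{k-1}-2$.

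The concern in your final paragraph --- that the $m$-NT hypothesis controls only $m$-NLoc sub-balls, while subharmonicity via Lemma~\ref{lem:cond.SubH.IID}(A) needs them to be $(E_j,m)$-NS, and Lemma~\ref{lem:loc.and.NR.imply.NS.IID} bridges this gap only under $E_j$-non-resonance of the sub-balls --- is a genuine subtlety, and the paper's own proof does not address it either: it simply asserts that sub-balls outside the cluster are $(E_j,m)$-NS. Your proposed remedy, however, is misplaced. Invoking the Wegner bound ``on the high-probability event used in the scale induction'' converts a purely deterministic implication (fixed sample, $m$-NT $\Rightarrow$ $m$-Loc) into a probabilistic one, contrary to the lemma as stated; and it is not what the paper does. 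The clean way to close the gap is not to patch the proof with Wegner but to align the hypothesis with that of the IID analogue Lemma~\ref{lem:NT.implies.nloc.IID}: assume that for every $E$ there is no $3L_{k-1}$-distant pair of $(E,m)$-S sub-balls. This is exactly the condition that the scale induction in Lemma~\ref{lem:induct} actually verifies through the events $\CS_k^{(2)}$ and $\CR_k^{(2)}$, and under it the safe-region sub-balls are $(E_j,m)$-NS directly, so the deterministic argument you wrote in your first three paragraphs goes through without any probabilistic detour.
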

\proof
One can proceed as in the proof of the previous lemma, but with the functions
$
 f_j(x', x'') = \psi_j(x') \psi_j(x''),
$
where $\psi_j, j\in[1, |\ball_{L_k}(u)|]$, are normalized eigenfunctions of operator $H_{\ball_{L_k}(u)}$. Notice that the $E$-non-resonance condition for $\ball_{L_k}(u)$ is not required here, since $\|\psi_j\|_2=1$, so the function $|f_j|$ is globally bounded by  $1$.
Let $x,y\in \ball_{L_k}(u)$ and assume that $\|x-y\|=R \ge L_{k-1}^{1+\rho}$. One can apply the bound \eqref{eq:RDL.IID}, with
$
r'+r''  \ge R - 6L_{k-1} - 2,
$
and write, with the convention $-\ln 0 = +\infty$, and using the assumptions $m\ge 1$,
$\tau < \rho$:
$$
\ba
-\ln |\psi_i(x) \psi(y) |  \ge
 -\ln \left\{
\left( e^{-\gamma(m,L_{k-1})L_{k-1}}\right)^{\frac{ R - 9L_{k-1}}{ L_{k-1} + 1} }
\right\}
 \ge  \gamma(m, L_k) \,d(x, y).
 \qedhere
\ea
$$

\begin{lemma}[\textbf{Main inductive lemma} for correlated potentials]\label{lem:induct}
Suppose that for some $k\ge 0$ the following bound holds true:
\be\label{eq:lem.induct}
\forall\, w\in\DZ^d\qquad \pr{ \text{ $\ball_{L_{k}}(w)$ is } \mathnloc } \le L_{k}^{-p(1+b)^k},
\ee
with $p, b>0$ obeying \eqref{eq:cond.p.b}.
Then for $L_0>0$ large enough and  any $u\in\DZ^d$
\be\label{eq:lem.induct.next.Loc}
\pr{ \text{  $\ball_{L_{k+1}}(u)$ is } \mathnloc } \le L_{k+1}^{-p(1+b)^{k+1}}.
\ee
\end{lemma}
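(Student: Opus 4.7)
The plan is to mirror the proof of Lemma~\ref{lem:induct.IID} step by step, replacing the IID independence of operators on disjoint balls by the Rosenblatt mixing bound \eqref{eq:Cmix}, and the IID Wegner estimate \eqref{eq:W2.IID} by its correlated analog \eqref{eq:W2.mixing}. Since both the mixing error $e^{-\ln^2 L}$ and the correlated Wegner bound are super-polynomially small in $L$, I expect the admissible range of parameters \eqref{eq:cond.p.b} to remain unchanged.

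First, by Lemma~\ref{lem:NT.implies.nloc}, the event $\CN_{k+1} = \{\ball_{L_{k+1}}(u)\text{ is }\mathnloc\}$ forces $\ball_{L_{k+1}}(u)$ to be $\mathtun$; combining this with Lemma~\ref{lem:few.nloc.and.few.R.imply.nloc} gives the same decomposition $\CN_{k+1}\subset \CR_k^{(2)} \cup (\CS_k^{(2)}\setminus \CR_k^{(2)})$ as in the IID proof, where $\CR_k^{(2)}$ and $\CS_k^{(2)}$ denote, respectively, the events that $\ball_{L_{k+1}}(u)$ contains two disjoint $E$-PR (resp.\ $(E,m)$-S) balls of radius $L_k$ for some common $E$. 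The probability $\pr{\CR_k^{(2)}}$ is controlled directly by \eqref{eq:W2.mixing} and is therefore negligible against the target bound.

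For $\CS_k^{(2)}\setminus \CR_k^{(2)}$, the IID argument reduces the estimate, pair by pair, to bounding the probability that two disjoint $L_{k-1}$-balls $\ball_{L_{k-1}}(y'), \ball_{L_{k-1}}(y'')$ are simultaneously $\mathnloc$. The crucial substitution is to replace the multiplicative independence bound by \Wthree: since disjointness gives $d(y', y'') \ge 2L_{k-1}+1 \ge L_{k-1}$, the mixing inequality \eqref{eq:Cmix} applied to events $A\in\fF_{\ball_{L_{k-1}}(y')}$ and $B\in\fF_{\ball_{L_{k-1}}(y'')}$ yields
$$
\pr{A\cap B}\;\le\;\pr{A}\pr{B}\,+\,Ce^{-\ln^2 L_{k-1}}.
$$
Invoking the inductive hypothesis \eqref{eq:lem.induct} then gives $\pr{A\cap B}\le L_{k-1}^{-2p(1+b)^k} + Ce^{-\ln^2 L_{k-1}}$.

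Summing over the at most $|\ball_{L_{k+1}}(u)|^2 \lesssim L_{k+1}^{2d}$ pairs of disjoint $L_{k-1}$-balls, the mixing remainder survives as a super-polynomially small term in $L_{k+1}$, while the leading term is treated verbatim as in the proof of Lemma~\ref{lem:induct.IID} under the same conditions \eqref{eq:cond.p.b} on $p$ and $b$. I expect the main obstacle to be purely bookkeeping: namely, verifying that attaching an additive mixing error at each step, together with the \emph{weaker} but still super-polynomially decaying Wegner estimate \eqref{eq:W2.mixing}, never accumulates into a factor that would force tightening \eqref{eq:cond.p.b}; this should be resolved by choosing $L_0$ large enough to absorb all implicit constants, yielding $\pr{\CN_{k+1}}\le L_{k+1}^{-p(1+b)^{k+1}}$ as required.
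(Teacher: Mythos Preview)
Your proposal is correct and matches the paper's own argument essentially line by line: the same event decomposition $\CN_{k+1}\subset \CR_k^{(2)}\cup(\CS_k^{(2)}\setminus\CR_k^{(2)})$, the correlated Wegner bound \eqref{eq:W2.mixing} for $\CR_k^{(2)}$, and the mixing condition \Wthree\ replacing independence when bounding the probability that two $L_{k-1}$-balls are simultaneously $\mathnloc$. One slip to fix: the inductive hypothesis applied at scale $L_{k-1}$ yields exponent $(1+b)^{k-1}$, not $(1+b)^{k}$, so the leading term is $L_{k-1}^{-2p(1+b)^{k-1}}$ as in the paper, and the rest of the arithmetic under \eqref{eq:cond.p.b} then goes through unchanged.
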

\proof

Consider the following events:
$$
\bal
\CN_{k+1} &:= \{ \text{$\ball_{L_{k+1}}(w)$ is } \mathnloc \} \\
\CS_{k}^{(2)} &:= \{\exists\,E\;\exists\,  \text{ disjoint $(E,m)$-S balls } \ball_{L_{k}}(u),  \ball_{L_{k}}(v)\subset \ball_{L_{k+1}}(w) \} \\
\CR_{k}^{(2)} &:= \{\exists\,E\;\exists\,  \text{ disjoint $E$-PR balls } \ball_{L_{k}}(u),  \ball_{L_{k}}(v)\subset \ball_{L_{k+1}}(w)  \}.
\eal
$$
By Lemma \ref{lem:NT.implies.nloc.IID},
$
\CN_{k+1} \subset \CR_{k}^{(2)} \cup \CS_{k}^{(2)} \equiv
  \CR_{k}^{(2)} \cup \left(\CS_{k}^{(2)}\setminus \CR_{k}^{(2)} \right),
$
and by Wegner-type bound \eqref{eq:W2.mixing}, $\pr{\CR_k^{(2}}\le e^{- c \ln^2 L_k}$, so that it remains to bound $\pr{\CS_{k}^{(2)}\setminus \CR_{k}^{(2)} }$. Fix points $u,v$ with
$d(u, v)>2L_k$ and introduce the event
\be\label{eq:S2.u.v.2}
\CS_{k}^{(2)}(u,v) = \{\exists\,E:\;
\text{ $\ball_{L_{k}}(u)$ and $\ball_{L_{k}}(v)$  are $(E,m)$-S} \}.
\ee
Within the event $\CS_{k}^{(2)}(u,v)\setminus \CR_{k}^{(2)}$, either
$\ball_{L_k}(u)$ or $\ball_{L_k}(v)$ must be $E$-CNR; without loss of generality, suppose
$\ball_{L_k}(u)$ is $E$-CNR. At the same time, it is $(E,m)$-S, so by assertion (B) of Lemma \ref{lem:few.nloc.and.few.R.imply.nloc.IID}, $\ball_{L_k}(u)$ must  be $\mathtun$, i.e., it must contain a pair of disjoint \nloc\, balls $\ball_{L_{k-1}}(y')$, $\ball_{L_{k-1}}(y'')$. Using the inductive assumption \eqref{eq:lem.induct.IID} and
the mixing property $\Cmix$, we can write, for $L_0$ large enough,
$$
\pr{ \text{$\ball_{L_{k-1}}(y')$, $\ball_{L_{k-1}}(y')$ are \nloc} }
\le L_{k-1}^{-2p(1+b)^{k-1}  }
\le  L_{k+1}^{- \frac{2p}{\alpha^2} (1+b)^{k-1} }
$$
The number of all pairs $y', y''$ is bounded by $|\ball_{L_{k+1}}(u)|^2/2$, so that
\be\label{eq:prob.S2.not.R2.a}
\pr{ \CS_{k}^{(2)} \setminus \CR_{k}^{(2)} }
\le C_d\,  L_{k+1}^{- \frac{2p}{\alpha^2} (1+b)^{k-1} + 2d}.
\ee
Under the conditions 
\eqref{eq:cond.p.b}, the RHS is bounded by $\half L_{k+1}^{-p(1+b)^{k+1}}$.
We conclude that
$$
\pr{ \CT_{k+1} }
 \le  \pr{ \CN_{k}^{(2)} } + \pr{ \CR_{k}^{(2)} }
 \le  L_{k+1}^{-p(1+b)^{k+1}}.
 \qedhere
$$

\begin{theorem}\label{thm:ind.1p}
For any $m>0$, $p>0$ and $L_0>2$ there exist $g_0 < \infty$ and $b >0$ such that for any $g$ with $|g|\ge g_0$,  all $k\ge 0$ and any ball $\ball_{L_{k}}(u)\subset \DZ^d$,
\be\label{eq:thm.ind.1p.loc}
\pr{ \text{ $\ball_{L_{k}}(u)$ is } \mathnloc } \le L_{k}^{-p(1+ b )^k}.
\ee
In the same range of parameters, for any pair of $L_k$-distant balls $\ball_{L_{k}}(u)$,
$\ball_{L_{k}}(v)$
\be\label{eq:thm.ind.1p.NS}
\pr{ \ball_{L_{k}}(u) \text{ and } \ball_{L_{k}}(v) \text{ are } (E,m)\text{-S} }
\le L_{k}^{- p(1+ b)^k}.
\ee
\end{theorem}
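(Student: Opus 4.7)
\proof
The strategy is a straightforward induction over scales, combining the initial scale bound (Lemma \ref{lem:L.0}) with the Main Inductive Lemma (Lemma \ref{lem:induct}) to propagate the localization estimate, and then extracting the double-singularity estimate by essentially replaying the internal argument of Lemma \ref{lem:induct}.

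First I would choose the parameters: fix $p > 16d$ and then pick $b \in (0,1)$ satisfying the conditions \eqref{eq:cond.p.b}; this is possible because the window for $b$ is non-empty once $p$ is large. Given such $p,b$, invoke Lemma \ref{lem:L.0} to obtain $g_0$ (depending on $L_0$, $m$, $p$) such that for $|g| \ge g_0$ the bound $\pr{\ball_{L_0}(u) \text{ is }\mathnloc}\le L_0^{-p}$ holds. This establishes the base case $k=0$ of \eqref{eq:thm.ind.1p.loc}. Then by Lemma \ref{lem:induct} applied successively, the bound $\pr{\ball_{L_k}(u) \text{ is }\mathnloc}\le L_k^{-p(1+b)^k}$ propagates to all $k\ge 0$.

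Next, I would derive the double-singularity bound \eqref{eq:thm.ind.1p.NS} by decomposing the event as in the proof of Lemma \ref{lem:induct}. Given two disjoint $L_k$-distant balls $\ball_{L_k}(u)$, $\ball_{L_k}(v)$, write
\[
\CS_k^{(2)}(u,v) \subset \CR_k^{(2)}(u,v) \cup \bigl(\CS_k^{(2)}(u,v) \setminus \CR_k^{(2)}(u,v)\bigr),
\]
where $\CR_k^{(2)}(u,v)$ is the event that both balls are $E$-PR for some $E$. The Wegner-type bound \eqref{eq:W2.mixing} dominates $\pr{\CR_k^{(2)}}$ by $e^{-c\ln^2 L_k}$, which is much smaller than $L_k^{-p(1+b)^k}$. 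On the complement, at least one of the two balls (say $\ball_{L_k}(u)$) is $E$-CNR and $(E,m)$-S; by the contrapositive of Lemma \ref{lem:few.nloc.and.few.R.imply.nloc} it must be $\mathtun$, hence contains a pair of disjoint \nloc\ balls of radius $L_{k-1}$. Using the already-proven bound \eqref{eq:thm.ind.1p.loc} at scale $L_{k-1}$ together with the mixing condition \Wthree\ (which replaces independence in this correlated setting), the probability of such a pair existing inside $\ball_{L_k}(u)$ is bounded by $C_d L_k^{2d} L_{k-1}^{-2p(1+b)^{k-1}}$ up to a mixing correction $e^{-c\ln^2 L_{k-1}}$; the conditions \eqref{eq:cond.p.b} ensure this is at most $L_k^{-p(1+b)^k}$ for $L_0$ large enough.

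The main technical obstacle, as elsewhere in this section, is the loss of strict independence: one cannot simply multiply probabilities of events localized in the two balls $\ball_{L_{k-1}}(y')$ and $\ball_{L_{k-1}}(y'')$ inside $\ball_{L_k}(u)$. The \Wthree\ mixing bound $e^{-\ln^2 L}$ decays slower than any polynomial power of $L_{k-1}$ but faster than required, so the correction term is absorbed into the polynomial margin afforded by the strict inequality in \eqref{eq:cond.p.b}. Once this bookkeeping is done, both \eqref{eq:thm.ind.1p.loc} and \eqref{eq:thm.ind.1p.NS} follow at step $k+1$, closing the induction.
\qedhere
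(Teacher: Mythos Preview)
Your proposal is correct and follows essentially the same route as the paper: the paper's own proof is just two sentences, deriving \eqref{eq:thm.ind.1p.loc} by induction from Lemma~\ref{lem:L.0} and Lemma~\ref{lem:induct}, and obtaining \eqref{eq:thm.ind.1p.NS} by the same decomposition argument as in Lemma~\ref{lem:induct.IID} (resp.\ Lemma~\ref{lem:induct}), which is precisely what you spell out. One cosmetic point: the theorem is stated for arbitrary $p>0$, while you fix $p>16d$; this is harmless since proving the bound for any $p'>\max(p,16d)$ immediately yields it for the given $p$ with the same $b$.
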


\proof The first claim \eqref{eq:thm.ind.1p.loc} follows by induction from Lemma \ref{lem:induct} and Lemma \ref{lem:L.0}. The second claim can be proven in the same way as in Lemma \ref{lem:induct.IID}. \qedhere


\section{Appendix. Proofs of auxiliary statements}

\subsection{ Proof of Lemma \ref{lem:WS1.IID} }

Assertion (A), with the RHS of the form $\Const\, e^{-{L^\beta}/b}$, is well-known; its proof can be found in many papers and reviews; cf., e.g., \cite{CL90,K07}. Assertion  (B) stems easily from (A). Indeed, the number of \emph{all} balls inside $\ball_L(u)$, with radii $0 \le r \le L$, is bounded by
$(L+1) \cdot |\ball_L(u)| = O(L^{d+1})$, so that
$$
\pr{ \text{  $\ball_{L}(x)$ is not $E$-\rm{CNR} } }
\le  \Const \, L^{d+1} e^{-L^{\beta}}
\le  e^{-L^{\beta''}}
$$
for some $0<\beta'' < \beta$, if $L$ is large enough (depending upon the value of $\beta''$). Finally, the assertion (C) can be inferred from (B) in a standard way, by conditioning on the sigma-algebra $\fF_V(\ball_L(y)$ generated by the potential inside $\ball_L(y)$, which fixes the eigenvalues $E_j(y;\om)$ of operator $H_{\ball_L(y)}(\om)$. Indeed, the LHS of \eqref{eq:W2.IID} is the probability of the event
$$
\ba
\{\exists\, E\in\DR\; \exists\, \ball_{r}(x')\subset\ball_{L}(x), \ball_{s}(y')\subset\ball_{L}(y):\;
\ball_{r'}(x'), \ball_{r''}(x'') \text{ are  $E$-R} \}
\\
\qquad = \{\exists\, E\in\DR\; \exists\, i, j:\;
  |E_i(x) - E|\le e^{L^\beta},  |E_j(y) - E|\le e^{L^\beta} \}
\\
\qquad \subset \{\exists\, i, j:\;
  |E_i(x) - E_j(y)|\le 2e^{L^\beta} \} \\
\ea
$$
and the probability of the latter event can be estimated as follows:
$$
\ba
\diy \pr{\exists\, i, j:\;
  |E_i(x) - E_j(y)|\le 2e^{L^\beta} }
\\
\diy \qquad = \esm{ \pr{ \exists\, i, j:\;
  |E_i(x) - E_j(y)|\le 2e^{L^\beta} \,\big|\, \fF_V(\ball_L(y)} }
\\
\qquad \diy \le |\ball_L(y)|\, |\ball_L(x)| \, \sup_{E\in \DR} \max_i
\pr{ |E_i(x) - E|\le 2e^{L^\beta} } \le  e^{-L^{\beta'''}},
\ea
$$
for some $0< \beta''' < \beta$, if $L$ is large enough. Now it suffices to set
$\beta' = \min(\beta'', \beta''')$.
\qed

\subsection{Proof of Lemma \ref{lem:SubH.IID}}

 Let $n\ge 1$ and consider a point $u\in  \ball_{L-n(\ell+1)}(x)$. Since
$ \ball_\ell(u) \subset  \ball_{L-(n-1)(\ell+1)}(x)\subset \ball_L(x)$, the subharmonicity condition implies that
$$
|f(u)| \le q \, \mymax{ y\in \ball_L(x):\, \|y-u\| \le \ell+1}
\le q\, \CM(f,  \ball_{L- (n-1)(\ell+1)}(x)).
$$
In other words, we have
\begin{equation}\label{eq:rad.descent.iter}
\CM( f,  \ball_{L- n(\ell+1)}(x)) \le q^{}  \CM(f,  \ball_{L-(n-1) (\ell+1)}(x)).
\end{equation}
The inequality \eqref{eq:rad.descent.iter} can be iterated,  so we obtain by induction
$$
\CM( f,  \ball_{L- n(\ell+1)}(x))
\le q^{n}  \CM(f,  \ball_{L}(x)).
$$
Now the assertion of the lemma follows from the inclusion
$
x \in  \ball_{L - \left[\frac{L}{\ell+1}\right](\ell+1)}(x).
$
\qed

\subsection{Proof of Lemma \ref{lem:BiSubH.IID}}

Fix any point $y\in \ball_{r''}(u'')$. Then the function $g(x) := f(x,y)$ is $(\ell,q)$-subharmonic in $x\in \ball_{r'}(u')$. Therefore,
$$
|g(x)| \leq q^{\left[\frac{r'+1}{\ell+1}\right]} \CM(f, \ball)
$$
and, since $y\in \ball_{r''}(u'')$ is arbitrary,
$$
\max_{y\in \ball''_{r''}(u'')} \, |f(x',y)| \le q^{\left[\frac{r'+1}{\ell+1}\right]}
\CM(f, \ball).
$$
Next, the function $h(y) := f(x',y)$  is $(\ell,q)$-subharmonic in $\ball_{r''}(u'')$, with global maximum
$
\CM(h, \ball_{r''}(u'')) \le q^{\left[\frac{r'+1}{\ell+1}\right]} \CM(f, \ball),
$
and the subharmonicity of the function $h$ gives the desired upper bound:
$$
\ba
\qquad
|f(u', u'')| = |h(u'')| \leq q^{\left[\frac{r''+1}{\ell+1}\right]} \CM(h, \ball_{r''}(u''))
\tabhigh{ \le q^{\left[\frac{r'+1}{\ell+1}\right]+\left[\frac{r''+1}{\ell+1}\right]}
\CM(f, \ball).}
\qquad\qed
\ea
$$

\subsection{Proof of Lemma \ref{lem:cond.SubH.IID} }
$\,$

\par
\noindent
(A) The  bound $|\psi_j(v)|\le 1$ follows from the normalization condition $\|\psi_j\|_2=1$. It suffices to prove the $(\ell,q)$-subharmonicity of the functions $x\mapsto |\psi_j(x)|$  (with $q$ given by \eqref{eq:lem:ConSubH.IID}), for the kernel of the eigenprojection $\Pi_j$ has the form
$\Pi_j(x, y) = \psi_j(x) \, \overline{\psi_j(y)}$. Since all balls
$\ball_\ell(v)$ with $v\in\ball_R(x')$ are assumed to be $(E,m)$-NS, a direct application of the GRI for the eigenfunction (cf. Eqn\eqref{eq:GRI.EF}) gives
\be
\bal
| \psi(v) |
& \le \left(\, C_\ell \,  \max_{w: d(v,w)=\ell} |G_{\ball_\ell(v)}(v,w;E) | \,\right)
\; \max_{z: d(v,z)\le \ell+1} \, | \psi(z) |, \\
& \le
e^{-\gamma(m,\ell)\ell}
\max_{z: d(v,z)\le \ell+1} \, | \psi(z) |
%
\le q \, \max_{z: d(v,z)\le \ell+1} \, | \psi(z) |.
\eal
\ee

\par
\noindent
(B) Since the ball $\ball_L(u)$ is $E$-NR, we have
$\|G_{\ball_L(u)(E)}\|\le e^{L^\beta}$. Now the $(\ell,q)$-subharmonicity of the Green functions  follows from the assumption of non-singularity of all boxes $\ball_\ell(v)$ with $v\in\ball_R(x')$ by a direct application of the GRI.
\qed

\subsection{Proof of Lemma \ref{lem:L.0.IID} }
If $F_V$ is  continuous, then for any positive number $\delta$, including
$\delta = L_0^{-p}$, and  for $\epsilon_0$ small enough we have
$$
\sup_{E\in\DR} \; \pr{ \exists\, x\in \ball_{L_0}(u):\, |V(x;\omega) - E| \le \epsilon_0} \le \delta \;( = L_0^{-p}).
$$
Fix an arbitrary $E\in\DR$ and assume that
$
\omega \in \Om_\epsilon:=\left\{ \forall\, x\in \ball_{L_0}(u) |V(x;\omega) - E| \ge \epsilon \right\}.
$
Operator $V -g^{-1}E$ is diagonal, and all its eigenvectors have the form $\delta_x(y) = \delta_{x,y}$. Observe that $\|H_0\|<\infty$ and, for $\om\in\Om_\epsilon$, we have
$$
\min_{x\in \ball_{L_0} } \; |gV(x;\omega) - E| \ge g\epsilon_0 \tto{|g|\to \infty}{}{} \infty.
$$
Write now $H-E= g(V -g^{-1}E + g^{-1} H_0)$. The property \eqref{eq:cond.continuity.V} implies that with probability one all eigenvalues $E_j(\om)$ of the operator $V_{\ball_L(u)}(\om)$ are distinct, and all spacings
$|E_j(\om) - E_i(\om)|$ are positive. For $|g|$ large enough, all spacings for operator $gV(\om)$ are arbitrarily large. Eigenvectors of a  continuous operator family $A(t)$ with simple spectrum at $t=t_0$ are continuous in a neighborhood of $t_0$. For the second assertion, it suffices to apply this fact to the family $A(t) = V + g^{-1}t\Delta$, $t\in[0,1]$.
\qed

\subsection{Proof of Lemma \ref{lem:L.0}}

Since the CDF $F_{V,x}$ are (uniformly) continuous, we  have
$$
\bal
\sup_{E\in\DR} \; \pr{ \exists\, x\in \ball_{L_0}(u):\, |V(x;\omega) - E| \le \epsilon}
&\tto{ \epsilon\to 0}{0}{} \\
\pr{ \exists\, x,y\in \ball_{L_0}(u), x\ne y:\, |V(x;\omega) - V(y;\omega) | \le \epsilon}
&\tto{ \epsilon\to 0}{0}{}.
\eal
$$
Now one can conclude as in the proof of Lemma \ref{lem:L.0.IID}.
\qed

\section*{Acknowledgements.}

It is a pleasure to thank Tom Spencer, Boris Shapiro, Abel Klein and Misha Goldstein for stimulating and fruitful discussions of localization techniques; the organizers of the program
\textit{"Mathematics and Physics of Anderson Localization: 50 Years After"} at the Isaac Newton Institute, Cambridge (2008);  Shmuel Fishman, Boris Shapiro and the Department of Physics of Technion, Israel, for their warm hospitality.

\begin{bibdiv}
\begin{biblist}

\bib{AG73}{article}{
   author={Amrein, W.},
   author={Georgescu, V.},
   title={On the characterization of bound states and scattering states in quantum mechanics},
   journal={Helv. Phys. Acta},
   volume={46},
   date={1973},
   pages={635--658},
}

\bib{AM93}{article}{
   author={Aizenman, M.},
   author={Molchanov, S. A.},
   title={Localization at large disorder and at extreme energies: an
   elementary derivation},
   journal={Commun. Math. Phys.},
   volume={157},
   date={1993},
   number={2},
   pages={245--278},
}

\bib{A94}{article}{
   author={Aizenman, M.},
   title={Localization at weak disorder: Some elementary bounds},
   journal={Rev. Math. Phys.},
   volume={6},
   date={1994},
   pages={1163--1182},
}

\bib{ASFH01}{article}{
   author={Aizenman, M.},
   author={Shenker, J. H.},
   author={Fridrich, R. M.},
   author={Hundertmark, D.},
   title={Finite-volume fractional-moment criteria for Anderson localization},
   journal={Commun. Math. Phys.},
   volume={224},
   date={2001},
   pages={219--253},
}


\bib{CT}{article}{
   author={Combes, J. M.},
   author={Thomas, L.},
   title={Asymptotic behaviour of eigenfunctions for multi-particle Schr\"{o}dinger operators},
   journal={Commun. Math. Phys.},
   volume={34},
   date={1993},
   pages={251--270},
}

\bib{C08}{article}{
   author={Chulaevsky, V.},
   title={A Wegner-type estimate for correlated potentials},
   journal={Math. Phys. Anal. Geom.},
   volume={11},
   date={2008},
   number={2},
   pages={117--129},
}

\bib{C11}{misc}{
   author={Ch{u}laevsky, V.},
   title={Direct Scaling Analysis of localization in disordered systems.
  II. Multi-particle models },
   status={preprint, Universit\'{e} de Reims  (to be uploaded to arXiv:math-ph after the present manuscript},
   date={2011},
}

\bib{CFKS87}{book}{
   author={Cycon, H.L.},
   author={Froese, R.G.},
   author={Kirsch, W.},
   author={Simon, B.},
   title={Schr\"{o}dinger operators},
   publisher={Springer-Verlag},
   place={Berlin Heidelberg},
   date={1987},
}

\bib{CL90}{book}{
   author={Carmona, R.},
   author={Lacroix, J.},
   title={Spectral theory of random Schr\"{o}dinger operators},
   publisher={Birkh\"auser Boston Basel Berlin Inc.},
   place={Boston},
   date={1990},
}

\bib{DK89}{article}{
   author={von Dreifus, H.},
   author={Klein, A.},
   title={A new proof of localization in the Anderson tight binding model},
   journal={Commun. Math. Phys.},
   volume={124},
   date={1989},
   pages={285--299},
}

\bib{DK91}{article}{
   author={von D{r}eifus, H.},
   author={Klein, A.},
   title={Localization for random Schr\"{o}dinger operators with correlated potentials},
   journal={Commun. Math. Phys.},
   volume={140},
   date={1991},
   pages={133--147},
}

\bib{DS01}{article}{
   author={Damanik, D.},
   author={Stollmann, P.},
   title={Multi-scale analysis implies strong dynamical localization},
   journal={Geom. Funct. Anal.},
   volume={11},
   date={2001},
   number={1},
   pages={11--29},
}

\bib{E78}{article}{
   author={Enss, V.},
   title={Asymptotic completeness for quantum-mechanical potential scattering. Short-range potentials},
   journal={Commun. Math. Phys.},
   volume={61},
   date={1978},
   pages={285--281},
}

\bib{FS83}{article}{
   author={Fr\"{o}hlich, J.},
   author={Spencer, T.},
   title={Absence of diffusion in the Anderson tight binding model for large disorder or low energy},
   journal={Commun. Math. Phys.},
   volume={88},
   date={1983},
   pages={151--184},
}
\bib{FMSS85}{article}{
   author={Fr\"{o}hlich, J.},
   author={Martinelli, F.},
   author={Scoppola, E.},
   author={Spencer, T.},
   title={Constructive proof of localization in the Anderson tight binding
   model},
   journal={Commun. Math. Phys.},
   volume={101},
   date={1985},
   pages={21--46},
}

\bib{GD98}{article}{
   author={Germinet, F.},
   author={De Bi\`{e}vre, S.},
   title={Dynamical Localization for Discrete and Continuous Random Schr\"{o}dinger Operators},
   journal={Commun. Math. Phys.},
   volume={194},
   date={1998},
   pages={323--341},
}

\bib{GK01}{article}{
   author={Germinet, F.},
   author={Klein, A.},
   title={Bootstrap Multi-Scale Analysis and localization in random media},
   journal={Commun. Math. Phys.},
   volume={222},
   date={2001},
   pages={415--448},
}

\bib{K07}{misc}{
   author={Kirsch, W.},
   title={An Invitation to Random Schr\"{o}dinger Operators},
   date={2007},
   status={arXiv:math-ph/0709.3707},
}

\bib{R69}{article}{
   author={Ruelle, D.},
   title={A remark on bound states in potential scattering theory},
   journal={Nuovo Cimento},
   volume={61A},
   date={1969},
   pages={655--662},
}

\bib{Spe88}{article}{
   author={Spencer, T.},
   title={Localization for random and quasi-periodic potentials},
   journal={J. Stat. Phys.},
   volume={51},
   date={1988},
   pages={1009--1019},
}

\bib{Si87}{article}{
   author={Sinai, Ya. G.},
   title={Anderson localization for the one-dimensional difference Schr\"{o}dinger operators with quasi-periodic potentials. In: Proceedings of International Congress on Mathematical Physics, Marseille},
   date={1987},
   pages={370--432},
}

\bib{St01}{book}{
   author={Stollmann, P.},
   title={Caught by disorder},
   series={Progress in Mathematical Physics},
   volume={20},
   note={Bound states in random media},
   publisher={Birkh\"auser Boston Inc.},
   place={Boston, MA},
   date={2001},
}

\end{biblist}
\end{bibdiv}
\end{document}